\newtheorem{theorem}{Theorem}
\newtheorem{corollary}{Corollary}
\newtheorem{proposition}{Proposition}
\newtheorem{example}{Example}
\newtheorem{claim}{Claim}
\newtheorem{definition}{Definition}
\newtheoremstyle{rem}
{\topsep}
{\topsep}
{}
{0pt}
{\itshape}
{.}
{ }
{\thmname{#1}\thmnumber{ #2}\thmnote{ (#3)}}
\theoremstyle{rem}
\newtheorem{remark}{Remark}
\newcommand{\lin}[1]{#1}
\newcommand{\mylabel}[2]{#2\def\@currentlabel{#2}\label{#1}}
\global\long\def\tp{\top}
\global\long\def\tr{\mathrm{tr}}
\global\long\def\ltwo{L^{2}}
\global\long\def\matvec#1{\mathbf{#1}}
\global\long\def\bs{\phi}
\global\long\def\absconst{\theta}
\global\long\def\Op{O_{P}}
\global\long\def\vec#1{\mathbf{#1}}
\global\long\def\expect{\mathbb{E}}
\global\long\def\diffop{\mathrm{d}}
\global\long\def\prob{\mathbb{P}}
\global\long\def\var{\mathbb{V}}
\def\ltwo{L^2}
\def\diffop{\mathrm{d}}
\def\tdomain{\mathcal{T}}
\def\expect{\mathbb{E}}
\def\real{\mathbb{R}}
\def\fronorm{\mathrm{F}}
\def\hsnorm{\ltwo}
\def\DD{\eta}
\begin{document}

  \title{\bf Basis Expansions for Functional Snippets} 
  \author{Zhenhua Lin$^{\dag,1}$ \and Jane-Ling Wang$^{\ddag,2}$ \and Qixian Zhong$^{\S,3}$}
  \date{%
    $^\dag$\textit{National University of Singapore} \quad \quad \quad \\
    $^{\ddag}$\textit{University of California, Daivs}\\
    $^\S$\textit{Tsinghua University}
}
  \maketitle
  
  \footnotetext[1]{Email: stalz@nus.edu.sg. Research was supported by NIH ECHO grant  5UG3OD023313-03 and 4UH3-OD023313-04, and NUS start-up grant R-155-001-217-133.}
  \footnotetext[2]{Email: janelwang@ucdavis.edu. Research was supported by NSF grants 15-12975 and 19-14917,  and NIH ECHO grant 4UH3-OD023313-04.}
   \footnotetext[3]{Email: zqx15@mails.tsinghua.edu.cn. Research was supported by National Science Foundation of China (NSFC11771241) and Chinese Government Scholarship (CSC201806210163).}

\begin{abstract}
Estimation of mean and covariance functions is fundamental for functional data analysis. While this topic has been studied extensively in the literature, a key assumption is that  there are enough data in the domain of interest to estimate both the mean and covariance functions. In this paper, we investigate mean and covariance estimation for functional snippets in which observations from a subject are available only in an interval of length strictly (and often much) shorter than the length of the whole interval of interest. For such a sampling plan,  no data is available for direct estimation of the off-diagonal region of the covariance function. We tackle this challenge via a basis representation of the covariance function. The proposed approach allows one to consistently estimate an infinite-rank covariance function from functional snippets. We establish the convergence rates for the proposed estimators and illustrate their finite-sample performance via simulation studies and two data applications.

\vspace{3mm}
\noindent
\textit{Keywords and phrases}: Covariance estimation,  Fourier series, Legendre polynomials, longitudinal data, penalized estimation, sequential compactness.  \end{abstract}

\section{Introduction}
\label{sec:intro}
Nowadays functional data are commonly encountered in practice, due to the advances in modern science and technology that enhance capability of data collection, storage and processing. Both unsupervised learning approaches, such as dimension reduction via functional principal component analysis \citep{Rao1958,Dauxois1982,Hall2009,Mas2015},  and supervised learning, such as  functional regression  \citep{Cardot1999,Mueller2005,Ferraty2006,Hall2007c,Mueller2008,Kong2016} are well studied in the literature. For a comprehensive treatment of these subjects, we recommend the monographs \cite{Ramsay2005}, \cite{Ferraty2006}, \cite{Horvath2012},  \cite{Hsing2015} and \cite{Kokoszka2017}, and the review papers \cite{WangCM16} and \cite{Aneiros:2019aa}.

Critical to the statistical analysis of such data is 
the estimation of the mean and covariance functions, since they are the foundation of the aforementioned unsupervised and supervised learning tasks. {For example, covariance estimation is a critical step to functional principal component analysis, as illustrated in Section \ref{sec:application}.  In addition,  covariance estimation is often required for functional regression or classification for functional data.}
In reality, functions can only be recorded at a set of discrete points on the domain of the functions, where this set  may vary among subjects and the measurements may contain noise. Estimation of  mean and covariance functions in this context has been extensively studied by 
\cite{Rice1991}, \cite{Cardot2000nonparametric},  \cite{James2000}, \cite{Cai2010nonparametric}, \cite{Cai2011},  
\cite{Yao2005b}, \cite{Li2010} and \cite{Zhang2016}, among many others. 
In addition to the discrete nature of the observed functional data,  
subjects often  stay in the study only for a subject-specific period that
 is much shorter than the span of the whole study.  This usually does not cause much of a problem for mean estimation but brings challenges to  covariance estimation. 

For illustration and without loss of generality, we assume that the domain of the functional data $X(t)$ is the unit interval $\tdomain=[0,1]$ and each subject only stays in the study for a period of length $\delta<1$. Data with these  characteristics are termed ``functional snippets'',  in analogy to the longitudinal snippets analyzed in \cite{Dawson2018}. For such data, there is no  information in the off-diagonal region $\tdomain_{\delta}^c:=\{(s,t)\in[0,1]^2:|s-t|>\delta\}$ of the covariance function Cov$(X(s), X(t))$, and therefore there is no local information available for estimating the covariance function in this region. Mathematically, this amounts to 
\begin{equation}\label{eq:fs}\prob\{(\cup_{i=1}^n[A_i,B_i]^2)\cap \tdomain_{\delta}^c=\emptyset\}=1, \end{equation} for some $\delta>0$ and for all $n$, where $[A_i,B_i]$ denotes the subinterval on which $X_i$ is observed.  Figure \ref{fig:cov}(a) illustrates such a situation for the bone mineral density data, where the band with available data is prominent and narrow. Another example  for a wider band is illustrated in Figure \ref{fig:sbp-cov}(a) for the systolic blood pressure data. Both data will be further  studied in Section \ref{sec:application}.  For this type of data, local smoothing methods, e.g.,  the aforementioned \cite{Yao2005a}  and \cite{Li2010}, fail to yield a consistent estimate of the covariance function in the off-diagonal region as they are primarily interpolation methods.

The literature of statistical analysis for  functional snippets is in its infancy.  \cite{Delaigle2016} proposed to approximate snippets by segments of Markov chains. This  method is only valid at the discrete level, as shown by \cite{Descary2019recovering}. More recently, to analyze functional snippets,  \cite{Zhang2018} and \cite{Descary2019recovering} used matrix completion techniques that innately work on a common grid of the domain $\tdomain$. These approaches require modification when  snippets are recorded on random and irregular design points, which is often encountered in  applications.  Yet,  published theoretical analyses focus on the regular and dense design. 

{Another challenge is that estimating the covariance function for functional snippets is an extrapolation problem, which requires additional  assumptions to overcome. The minimal identifiability condition for functional snippets is the trivial condition that the covariance function in the observable band $\tdomain_\delta$ determines the covariance function on the entire domain. This minimal identifiability is a high-level concept \citep{Delaigle2019}, and  consequently, existing works attempt to find some specialized conditions that imply the above minimal identifiability assumption. For instance,  \cite{Descary2019recovering} assumed that the covariance function is an analytic function, and \cite{Delaigle2019} proposed the linear predictability assumption, which assumes that the values of the process $X(t)$ on a subinterval  can be linearly predicted by the values of the same process on  another subinterval.}  

{In contrast to the aforementioned approaches, that impose a particular assumption on the process $X(t)$ itself or on its covariance function, we define  identifiability through  a  family $\mathcal{C}$ in which the  covariance function resides, and term such a family  $\mathcal T_{\delta}$-identifiable if any two members from the family $\mathcal{C}$ and identical on the diagonal region $\mathcal T_{\delta}$ are equal everywhere. The $\tdomain_{\delta}$-identifiability is the same as the above minimal identifiability except that we make the reference to a family explicit. The family $\mathcal{C}$ is comparable to the traditional parameter space or model, so our definition of identifiability is in line with the  conventional statistical concept of identifiability that are imposed on the model. This concept of identifiability is rather general and encompasses the aforementioned identifiability assumptions as special cases. For example, the class of analytic functions considered in \cite{Descary2019recovering} and the class of covariance functions associated with linearly predictable random processes are $\tdomain_{\delta}$-identifiable families; see Examples \ref{ex:af} and \ref{ex:lp} for details. The primary reason that we adopt this minimal identifiability is that, our method and theory to be developed in Sections \ref{sec:meth} and \ref{sec:asyth} apply to all $\tdomain_{\delta}$-identifiable families under some regularity conditions.} 

Like functional snippets, fragments are also partially observed functional data and have been explored by many, such as
\cite{liebl2013modeling}, \cite{Gellar:2014aa}, \cite{Goldberg:2014aa}, \cite{Kraus2015}, \cite{Gromenko:2017aa}, \cite{Stefanucci2018},  \cite{Mojirsheibani2018}, \cite{Kraus2019},  \cite{Kneip:2019} and \cite{Liebl:2019aa}.  However, the problem to recover the covariance function for functional fragments is often formulated as an interpolation problem,  e.g., the work of \cite{Kneip:2019} assumes that    $\prob([A_i,B_i]^2=[0,1]^2)>0$, which implies  information and design points for the off-diagonal region $\tdomain_\delta^c$ are still available. In contrast,  information for that region is completely missing in the context of functional snippets \eqref{eq:fs}, which significantly elevates the difficulty of statistical analysis.   For this reason, we adopt the term functional snippets to distinguish them from fragments or other partially observed  functional data.

Under the umbrella of $\tdomain_\delta$-identifiability,    
we propose to approach functional snippets from the perspective of basis expansion. The main idea is to represent the covariance function by basis functions composed from tensor products of  analytic orthonormal functions defined on $\tdomain$. Basis functions, in particular spline basis functions, have been extensively explored in both nonparametric smoothing and functional data analysis by \cite{Wahb1990}, \cite{Wood2003},  \cite{Rice2001}, \cite{Ramsay2005} and \cite{Crambes2009}, among many others. However, they are not suited for the extrapolation problem of functional snippets, as these bases are local. Unlike spline bases that are controlled by knots, analytic bases are global, in the sense that they are  independent of local information such as knots or design points and are completely determined by their values on a countably infinite subset of the interval $\tdomain$. This feature of analytic bases allows information to pass from the diagonal region to the off-diagonal region along the basis functions. Consequently,  the missing pieces of the covariance function can then be inferred from the data available in the diagonal region when the covariance function is from a  $\tdomain_\delta$-identifiable class.  In contrast, this is generally impossible for B-spline or other local bases.

{In addition to the minimal identifiability assumption, the consistency of the proposed estimation method requires regularity conditions to overcome the challenges of extrapolation. One regularity condition that we identified is the \emph{bounded sequential compactness} (BSC) of the family $\mathcal C$ of covariance functions under consideration. This new concept, developed in Section \ref{subsec:cov-theory}, essentially controls the complexity (or size) of the family $\mathcal C$ and enables us to establish the consistency and convergence rate of the proposed estimator in a nonparametric extrapolation setting. This condition is mild; for example, all families of functions that are uniformly bounded and Lipschitz continuous with a common Lipschitz constant are BSC families, as shown in Section \ref{subsec:cov-theory}. Such a regularity condition, not seen in the literature, is not required for interpolation and thus intrinsically   separates nonparametric extrapolation from interpolation.
} 

{When our work was completed, we became aware of a related work that is independently developed by \cite{Delaigle2019}. Although the work also uses a basis expansion approach, it is substantially different from ours. First, it focuses more on development of identifiability conditions while ours is on methodological development and theoretical analysis of the proposed estimators. Second, the method of \cite{Delaigle2019} extrapolates a pilot estimate from the diagonal region to the entire region by basis expansion without regularization. This may lead to excessive variability of the estimator in the off-diagonal region. In contrast, our method estimates the basis coefficients directly from data with penalized least squares and does not require a pilot estimate. Third, the convergence rate established in that work hinges on and thus is limited by the convergence rate of the pilot estimate, while our analysis gives an explicit rate that is adaptive to the smoothness of the underlying covariance function. Finally, the work of \cite{Delaigle2019} heuristically includes the identity function $f(x)=x$ into the Fourier basis to handle nonperiodicity, while ours adopts the Fourier extension technique that is well established in the field of numerical analysis.}

{In summary, this paper makes the following contributions to the field of functional snippet analysis. First, we develop a computationally efficient estimator for the covariance function based on basis expansion under the minimal identifiability condition. Second, we propose the novel concept of bounded sequential compactness, based on which we establish an explicit and adaptive rate of convergence for the proposed estimator without relying on any specific assumptions for identifibility. Such concept is new in the literature of nonparametric regression. Third, we introduce the well established numerical techniques, Fourier extension and geometric Newton method, to the field of functional data analysis for the first time.}

\section{Methodology}
\label{sec:meth}
Let $\{X(t):t\in \tdomain\}$ be a second-order stochastic process on a compact interval $\tdomain\subset\mathbb R$, which without loss of generality is taken to be $[0,1]$. The mean and covariance functions of $X$ are defined as $\mu_{0}(t)=\expect X(t)$ and $\gamma_{0}(s,t)=\text{Cov}(X(s),X(t)),$ respectively. The observed functions $X_1,\ldots,X_n$ are statistically modeled as independent and identically distributed realizations of $X$. In practice, each realization $X_i$ is only recorded at subject-specific $m_i$ time points $T_{i1},\ldots, T_{im_i}$ with measurement errors. More precisely, for functional snippets, the observed data  are pairs $(T_{ij},Y_{ij})$, where
\begin{equation}
	\begin{aligned}
	Y_{ij}=X_i(T_{ij})+\varepsilon_{ij},\quad i=1,\ldots,n, \quad j=1,\ldots,m_i,
	\end{aligned}
\end{equation}
 $\varepsilon_{ij}$ is the random noise with mean zero and unknown variance $\sigma^2$, and there is a $\delta\in(0,1)$ for which $|T_{ij}-T_{ik}|\le \delta $ for all $i$, $j$ and $k$. 
The focus of this paper is to estimate the mean and covariance functions of $X$ using these data pairs $(T_{ij},Y_{ij})$.

\subsection{Mean Function}\label{subsec:mean}

Although functional snippets pose a challenge for covariance estimation, they usually do not obstruct  mean estimation, since data for the estimation are available likely across the whole domain of interest. In light of this observation, traditional methods such as local linear smoothing \citep{Yao2005b, Li2010} can be employed. Below we adopt an analytic basis expansion approach. The advantage of this approach is its computational efficiency and adaptivity to the  regularity of the underlying mean function $\mu_0$; see also Section \ref{subsec:mean-theory}.

Let $\boldsymbol{\Phi}=\{\phi_1,\ldots\}$ be a complete orthonormal basis of $\ltwo(0,1)$ that consists of squared integrable functions defined on the interval $[0,1]$. When $\mu_0\in\ltwo(0,1)$, it can be represented by the following series in terms of the basis $\boldsymbol\Phi$,
\begin{equation*}
\mu_0(t)=\sum_{k=1}^\infty a_k\phi_k(t),
\end{equation*}
where $a_k=\int_0^1 \mu_0(t)\phi_k(t)\diffop t$. In practice, one often approximates such series by its first $q>0$ leading terms, where $q$ is a tuning parameter controlling the approximation quality. The coefficients $a_1,\ldots,a_q$ are then estimated from data by penalized least squares. Specifically, with the notation $\boldsymbol{\Phi}_q(t)=(\phi_{1}(t), \cdots, \phi_{q}(t))^{\tp}\in\mathbb{R}^{q}$ and $\mathbf a_0=(a_1,\ldots,a_q)^{\tp}$, the estimator of $\mathbf a_0=(a_1,\ldots,a_q)^{\tp}$ is given by 

\begin{equation}\label{eq:mean-estimator}
	\begin{aligned}
	\hat{\mathbf a}=\mathop{\arg\min}\limits_{\textbf{a}\in \mathbb{R}^{q}}\Big\{\sum_{i=1}^{n}v_{i}\sum_{j=1}^{m_{i}}[Y_{ij}-\textbf{a}^{\tp}\boldsymbol{\Phi}_q(T_{ij})]^{2}+\rho H(\textbf{a}^{\tp}\boldsymbol{\Phi}_q)  \Big\},
	\end{aligned}
\end{equation}
and $\mu_0$ is estimated by $$\hat\mu(t)=\hat {\vec a}^\top \boldsymbol\Phi_q(t),$$
where the weights $v_{i}>0$ satisfy $\sum_{i=1}^{n}v_{i}m_{i}=1$, $H(\cdot)$ represents the roughness penalty, and $\rho$ is a tuning parameter that provides trade-off between the fidelity to the data and the smoothness of the estimate. 
There are two commonly used schemes for the weights, equal weight per observation (OBS) and equal weight per subject (SUBJ), for which the weights $v_{i}$ are $1/(\sum_{i=1}^{n}m_{i})$ and $1/(nm_{i})$, respectively.  These and also alternative weight schemes are discussed in  \cite{Zhang2016, Zhang2018a}. 

The penalty term in \eqref{eq:mean-estimator} is introduced to prevent excessive variability of the estimator when a large number of basis functions are required to adequately approximate $\mu_0$ and when the sample size is not sufficiently large. In the asymptotic analysis of $\hat\mu$ in Section \ref{sec:asyth}, we will see that this penalty term does not affect the convergence rate of $\hat\mu$ when the tuning parameter $\rho$ is not too large. In our study, the roughness penalty is given by
\begin{equation*}
H(g)=\int_{0}^{1}\{g^{(2)}(t)\}^{2}\diffop t,
\end{equation*}
where $g^{(2)}$ denotes the second derivative of $g$. In matrix form, for $g(t)=\mathbf a^\tp\boldsymbol{\Phi}_q(t)$, it equals $\textbf{a}^{\tp}\textbf{W}\textbf{a}$, where  $\mathbf W$ is a $q\times q$ matrix with elements $W_{kl}=\int_{0}^{1}\phi_k^{(2)}(t)\phi_l^{(2)}(t)\diffop t$. The choices of $q$ and $\rho$ are discussed in Section \ref{subsec:computation}.

\subsection{Covariance Function}\label{subsec:cov}
Since functional snippets do not provide any direct information for the off-diagonal region, the only way to recover the covariance in the off-diagonal region  is to infer it from the diagonal region.  
{The following definition formulates this basic requirement for identifiability.} 
\begin{definition}
    {A family $\mathcal C$ of covariance functions is called a $\tdomain_\delta$-identifiable family if $\gamma_1,\gamma_2\in\mathcal C$ and $\gamma_1(s,t)=\gamma_2(s,t)$ for all $(s,t)\in\tdomain_\delta$ imply that $\gamma_1(s,t)=\gamma_2(s,t)$ for all $(s,t)\in\tdomain^2$.}
\end{definition}

 \lin{Intuitively, we consider a family $\mathcal C$ of covariance functions and require the covariance functions to be uniquely identified \emph{within} the family $\mathcal C$  by their values on the diagonal region. Below we provide four examples to illustrate the ubiquitousness of  $\tdomain_\delta$-identifiable families.}
\begin{example}[Analytic functions]\label{ex:af}
A function is analytic if it can be locally represented by a convergent power series. According to Corollary 1.2.7 of \cite{krantz2002primer}, if two analytic functions agree on $\tdomain_\delta$, then they are identical on $[0,1]^2$. Thus, the family of analytic functions is a $\tdomain_\delta$-identifiable family as observed by \cite{Descary2019recovering}, who also provided an elegant counterexample to demonstrate that $C^{\infty}([0,1]^2)$, the space of infinitely differentiable functions,  is not $\tdomain_\delta$-identifiable. 
\end{example}
\begin{example}[Sobolev sandwich families]\label{ex:ssf}For any $0<\epsilon<\delta$, consider the family of continuous functions that belong to a two-dimensional Sobolev space on $\tdomain_\epsilon$ and are analytic elsewhere. Such functions have an $r$-times differentiable diagonal component sandwiched between two analytic off-diagonal pieces. The family is $\tdomain_\delta$-identifiable, because the values of such functions on the off-diagonal region are fully determined by the values on the uncountable set $\tdomain_\epsilon^c\cap\tdomain_\delta\subset\tdomain_\delta$ according to Corollary 1.2.7 of \cite{krantz2002primer}. Note that this family contains   functions with derivatives only up to a finite order.
\end{example}
\begin{example}[Semiparametric families]\label{ex:spf}
Consider the family of functions of the form $g(s)h(s,t)g(t)$, where $g$ is a function from a nonparametric class $\mathcal G$ and $h$ is from a parametric class  $\mathcal H$ of correlation functions.  This  family, considered in \cite{Lin2019}, is generally $\tdomain_\delta$-identifiable,  as long as both $\mathcal G$ and $\mathcal{H}$ are identifiable. For instance, when $\mathcal G$ is a one-dimensional Sobolev space and $\mathcal H$ is the class of Mat\'ern correlation functions, the family is  $\tdomain_\delta$-identifiable. Note that no analyticity is assumed for  this family.
\end{example}

\begin{example}[Linearly predictable families]\label{ex:lp}
	\lin{For a random process $X$ defined on $\tdomain$, we say that the snippet $\{X(t)\}_{t\in I^\ast}$ on the subinterval $I^\ast\subset \tdomain$ is linearly $(B,\epsilon)$-predictable \citep{Delaigle2019} from another subinterval $I\subset\tdomain$, if for all $t\in I^\ast$, there exists an integrable function $L_t(s)$ defined on $I$  such that $\sup_{t\in I^\ast}\sup_{s\in I}|L_t(s)|<\infty$, $\sup_{t\in I^\ast}\int_I|L_t(s)|\diffop s<{B}$ and $X(t)=\mu(t) + \int_I L_t(s)\{X(s)-\mu(s)\}\diffop s+Z(t)$, where for all $t\in I^\ast$, $Z(t)$ is a zero mean random variable such that $\expect Z^2(t)\leq \epsilon^2$. 
	Fix an integer $h>0$ and a partition $I_0,\ldots,I_h$ of $\tdomain$ such that $I_j\times I_j\subset \tdomain_{\delta}$. Consider only the class $\mathscr X$ of random processes $X$ whose snippet $\{X(t)\}_{t\in I_j}$ is  linearly $(B_j,\epsilon_j)$-predictable from $I_{j^\ast}$ for some $0\leq j^\ast\leq j-1$ and all $\epsilon_j>0$, and for all $j=1,\ldots,h$. Let $\mathcal L$ be the collection of covariance functions of random processes in $\mathscr X$. By  \cite{Delaigle2019}, each member in $\mathcal L$ is identifiable within $\mathcal L$ from its values on the diagonal region $\tdomain_{\delta}$, and thus $\mathcal L$ is $\tdomain_\delta$-identifiable.}
\end{example}

 With the  $\tdomain_\delta$-identifiability of the family $\mathcal{C}$, it is now possible to infer the off-diagonal region by the information contained in the raw covariance  $\Gamma_{ijk}=\{Y_{ij}-\hat{\mu}(T_{ij})\}\{Y_{ik}-\hat{\mu}(T_{ik})\}$ available only in the diagonal region. 
To this end,  
we propose to transport  information from the diagonal region to the off-diagonal region through the basis functions $\phi_k\otimes\phi_l$ with  $(\phi_k\otimes\phi_l)(s,t)=\phi_k(s)\phi_l(t)$ for $s,t\in\tdomain$, by approximating $\gamma_0$ with 
\begin{equation}\label{fsanalytic}
\gamma_{{\vec C}_0}(s,t)= \sum_{1 \leq k,l \leq p} c_{kl}\phi_{k}(s)\varphi_{l}(t),\quad (s,t) \in [0,1]^2,
\end{equation}
where $ c_{kl}=\iint \gamma_0(s,t)\phi_k(s)\phi_l(t)\diffop s\diffop t$, ${\vec C}_0$ is the matrix of coefficients $c_{kl}$, and $p\geq 1$ is an integer. There are countless bases that can serve in \eqref{fsanalytic}; however, if we choose an analytic basis $\boldsymbol{\Phi}$, then their values in the diagonal region completely determine their values in the off-diagonal region. When such a representation  of the covariance function $\gamma_0$ is adopted and the unknown coefficients $c_{kl}$ are estimated from data, the information  contained in the estimated coefficients extends  from the diagonal region to the off-diagonal region  through the analyticity of the basis functions.  

To estimate the coefficients $c_{kl}$ from data, we  adopt the idea of penalized least squares, where the squared loss of a given function $\gamma$ is measured by the sum of weighted squared errors   
$$SSE(\gamma)=\sum_{i=1}^n w_i \sum_{1\leq j\neq k\leq m_{i}}\{\Gamma_{ijl}-\gamma(T_{ij},T_{il})\}^2,$$ 
where $w_{i}>0$ are weights satisfying $\sum_{i=1}^{n}m_{i}(m_{i}-1)w_{i}=1$, while the roughness penalty is given by $$J(\gamma)=\iint\frac{1}{2}\left\{\left[\frac{\partial^2\gamma}{\partial s^2}\right]^2+2\left[\frac{\partial^2\gamma}{\partial s\partial t}\right]^2+\left[\frac{\partial^2\gamma}{\partial t^2}\right]^2\right\}\diffop s\diffop t.$$ Then,   \begin{equation}
{J}(\gamma_{\mathbf C})=\tr(\textbf{C}\textbf{U}\textbf{C}\textbf{W})+\tr(\textbf{C}\textbf{V}\textbf{C}\textbf{V}),
 \end{equation}
where $\gamma_{\mathbf C}$ is defined in \eqref{fsanalytic} with ${\vec C}_0$ replaced by $\vec C$, $\tr$ denotes the matrix trace, and $\mathbf U,\mathbf V$ are $p\times p$ matrices with elements $U_{kl}=\int_{0}^{1}\phi_k(s)\phi_l(s)\diffop s$ and $V_{kl}=\int_{0}^{1}\phi_k^{(1)}(s)\phi_l^{(1)}(s)\diffop s$, respectively. The estimator $\hat{\gamma}(s,t)$ of $\gamma_0(s,t)$ is then taken as $\hat\gamma(s,t)=\boldsymbol{\Phi}_p^\tp(s)\hat{\mathbf C}\boldsymbol{\Phi}_p(t)$ with 
 \begin{equation}
 \begin{aligned}
  \hat{\mathbf{C}}=\mathop{\arg\min}\limits_{\mathbf{C}:\,\gamma_{\mathbf{C}}\in\mathcal C}\, \sum_{i=1}^{n}w_{i}\sum_{1 \le j \neq l \le m_{i}}\{\Gamma_{ijl}-\gamma_{\textbf{C}}(T_{ij},T_{il})\}^{2}+\lambda {J}(\gamma_{\mathbf{C}}),
 \end{aligned}\label{Mrep}
 \end{equation}
 where $\lambda$ is a tuning parameter that provides a trade-off between the fidelity to the data and the smoothness of the estimate. Numerical method to solve the constraint optimization \eqref{Mrep} is detailed in Appendix B.

Similar to \eqref{eq:mean-estimator}, the penalty term in \eqref{Mrep} is introduced to overcome excessive variability of an estimator when a large number of basis functions are required while the sample size is relatively small. It does not affect  the convergence rate of $\hat\gamma$ when the tuning parameter $\lambda$ is not too large. The choices of $p$ and $\lambda$ are discussed in Section \ref{subsec:computation}. For the choice of the weights $w_{i}$,  \cite{Zhang2016} discussed several weighing schemes, including the OBS scheme $w_i=1/\{\sum_{i=1}^{n}m_{i}(m_{i}-1)\}$ and the SUBJ scheme $w_i=1/\{nm_{i}(m_{i}-1)\}$. An optimal weighing scheme was proposed in \cite{Zhang2018a};  we refer to this paper for further details.

\section{Theory}
\label{sec:asyth}
As functional snippets are often sparsely recorded, in the sense that $m_i\leq m_0<\infty$ for all $i=1,\ldots,n$ and some $m_0>0$, in this section we focus on  theoretical analysis  tailored to this scenario. For simplicity, we assume that the number of observations for each trajectory is equal, i.e., $m_{i}=m$ for all $i=1,\ldots,n$. Note that under this assumption, the SUBJ and OBJ schemes coincide. The results for general number $m_{i}$ of observations and weight schemes can be derived in a similar fashion. We start with a discussion on the choice of basis functions and then proceed to study the convergence rates of the estimated mean and covariance functions.

\subsection{Analytic Basis}\label{subsec:basis}
While all complete orthonormal bases can be used for the proposed estimator in \eqref{eq:mean-estimator}, an analytic basis is preferred for the estimator in \eqref{Mrep}. For a clean presentation, we exclusively consider analytic bases $\boldsymbol{\Phi}=\{\phi_1,\ldots\}$ that work for both \eqref{eq:mean-estimator} and \eqref{Mrep}. In this paper,  a basis is called an analytic $(\alpha,\beta)$-basis if its basis functions are all analytic and satisfy the following property:  for some constants $\alpha,\beta\geq 0$, there exists a constant $C$ such that $\|\phi_k\|_\infty\leq Ck^\alpha$ and $\|\phi_k^{(r)}\|_{\ltwo}\leq Ck^{\beta r}$ for $r=1,2$ and all $k=1,\ldots$. Here, $\|\phi_k\|_\infty$ denotes the supremum norm of $\phi_k$, defined by $\sup_{t\in[0,1]}|\phi_k(t)|$, and $\phi_k^{(r)}$ represents the $r$th derivative of $\phi_k$. Throughout this paper, we assume that the basis $\boldsymbol\Phi=\{\phi_1,\ldots\}$ is an analytic $(\alpha,\beta)$-basis.

Different bases lead to different convergence rates of the approximation to $\mu_0$ and $\gamma_0$. For the mean function $\mu_0$, when using the first $q$ basis functions $\phi_1,\ldots,\phi_q$, the approximation error is quantified by 
\begin{equation*}
\mathcal E(\mu_0,\boldsymbol\Phi,q)=\left\Vert \mu_0-\sum_{k=1}^q a_k\phi_k \right\Vert_{\ltwo},
\end{equation*}
where we recall that $a_k=\int_0^1\mu_0(t)\phi_k(t)\diffop t$. The convergence rate of the error $\mathcal E(\mu_0,\boldsymbol\Phi,q)$, denoted by $\tau_q=\tau_q(\mu_0,\boldsymbol\Phi)$, signifies the approximation power of the basis $\boldsymbol\Phi$ for $\mu_0$. Similarly, the approximation error for $\gamma_0$ is measured by
\begin{equation*}
\mathcal E(\gamma_0,\boldsymbol\Phi,p)=\left\Vert \gamma_0-\sum_{k=1}^p\sum_{l=1}^p c_{kl}\phi_k\otimes\phi_l \right\Vert_{\ltwo},
\end{equation*}
where the $\ltwo$ norm of a function $\gamma(s,t)$ is defined by $\|\gamma\|_{\ltwo}=\{\int_0^1\int_0^1|\gamma(s,t)|^2\diffop s\diffop t\}^{1/2}$. The convergence rate of $\mathcal E(\gamma_0,\boldsymbol\Phi,p)$ is denoted by $\kappa_p=\kappa_p(\gamma_0,\boldsymbol\Phi)$. Below we discuss two examples of bases.

\begin{example}[Fourier basis]\label{ex:1}
Fourier basis functions, defined by $\phi_1(t)=1$, $\phi_{2k}(t)=\cos(2k\pi t)$ and $\phi_{2k+1}(t)=\sin(2k\pi t)$ for $k\geq 1$, constitute a complete orthonormal basis of $\ltwo(\tdomain)$ for $\tdomain=[0,1]$. It is also an analytic $(0,1)$-basis. When $\mu_0$ is periodic on $\tdomain$ and belongs to the Sobolev space $\mathscr H^r(\tdomain)$ (see Appendix A.11.a and A.11.d of \cite{Canuto2006} for the definition), 
then, according to Eq. (5.8.4) of \cite{Canuto2006}
one has $\tau_q=O(q^{-r})$. 
Similarly, if $\gamma_0$ is periodic and belongs to $\mathscr H^r(\tdomain^2)$, then $\kappa_p=O(p^{-r})$. 
When $\mu_0$ or $\gamma_0$ is not a periodic function, a technique called Fourier extension, briefly described in Appendix A, can be adopted to yield the same rate \citep{Adcock2014}. This technique is well studied in the field of 
 computational physics \citep{Boyd2002} and numerical analysis \citep{Huybrechs2010} as a tool to overcome the so-called Gibbs phenomenon \citep{Zygmund2003}, but seems not well explored in statistics yet. 
In Section \ref{sec:simu}, we numerically illustrate the application of this technique.
\end{example}

\begin{example}[Legendre polynomials]\label{ex:2} The canonical Legendre polynomial $P_k(t)$ of degree $k$ is defined on $[-1,1]$ by $$P_k(t)=\frac{1}{2^k k!}\frac{\diffop ^k}{\diffop t^k}\{(x^2-1)^k\}.$$ These polynomials are orthogonal in $\ltwo(-1,1)$. By a change of variable and normalization, they can be turned into an orthonormal basis of $\ltwo(\tdomain)$. The Legendre polynomials have numerous applications in  approximation theory and numerical integration; see \cite{Wang2011a} and references therein. One can show that the Legendre basis is an analytic $(1/2,1)$-basis. According to Eq (5.8.11) of \cite{Canuto2006}, one has $\tau_q=O(q^{-r})$ and $\kappa_p=O(p^{-r})$ when $\mu_0$ belongs to $\mathscr H^r(\tdomain)$ and $\gamma_0$ belongs to $\mathscr H^r(\tdomain^2)$, respectively.
\end{example}

\subsection{Mean function}\label{subsec:mean-theory}
For functional snippets, we shall assume that the observations from a subject scatter randomly in a subject specific time interval, whose length is $\delta$ and whose middle point is called the reference time in this paper. We further assume that the reference time $R_{i}$ of the $i$th subject are independently and identically distributed in the interval $[\delta/2,1-\delta/2]$, and the observed time points $T_{i1},\ldots,T_{im_{i}}$, conditional on $R_{i}$, are independently and identically distributed in the interval $[R_{i}-\delta/2,R_{i}+\delta/2]$.

To study the estimator $\hat{\mu}$, we make the following  assumptions.
 
 \begin{description}[labelwidth=1.2cm,leftmargin=1.4cm,align=left]
 	\item[\mylabel{cond:timept}{A.1}] There exist $0 < \mathfrak{c}_{1} \leq  \mathfrak{c}_{2} < \infty,$ such that the density $ f_{R}(s)$ of the reference time $R$ satisfies $\mathfrak{c}_{1} \leq  f_{R}(s) \leq  \mathfrak{c}_2,$ for any $s\in[\delta/2,1-\delta/2]$. There exist $0 < \mathfrak{c}_{3} \leq  \mathfrak{c}_{4} < \infty$, such that the  conditional density $f_{T|R}(t|s)$ of the observed time $T$ satisfies 
 	$\mathfrak{c}_{3} \leq f_{T|R}(t|s) \leq \mathfrak{c}_{4}$, for any given reference time $s\in[\delta/2,1-\delta/2]$ and $t\in [s-\delta/2,s+\delta/2]$.
 \end{description}
 
   \begin{description}[labelwidth=1.2cm,leftmargin=1.4cm,align=left]
	\item[\mylabel{cond:cov}{A.2}] $\expect\|X\|_{\ltwo}^2\leq \mathfrak{c}_5<\infty$ for some constant $\mathfrak{c}_5>0$.
\end{description}

\begin{description}[labelwidth=1.2cm,leftmargin=1.4cm,align=left]
	\item[\mylabel{cond:rho:q}{A.3}] 
	$q^{2\alpha+2}/n\rightarrow 0$ and $\rho / (n^{-1/2}q^{\alpha-4\beta-1/2})\rightarrow 0$.
\end{description}

Assumption A.1 requires the density of the reference time and conditional densities of the time points to be bounded away from zero and infinity. This also guarantees that the marginal probability density of the time points $T_{ij}$ is bounded away from zero and infinity.   Assumption A.2 is mild and assumption A.3 facilitates the convergence rate, where the dimension $q$ can grow with $n$.  In the sequel, we use 
$a_n\asymp b_n$ to denote  $0<\lim_{n\rightarrow \infty} a_n/b_n <\infty$.
\begin{theorem}\label{meanthm} If $\boldsymbol\Phi$ is a $(\alpha,\beta)$-basis, conditions (A.1)--(A.3) imply that 
	\begin{equation}
	\|\hat{\mu}-\mu_{0}\|_{\ltwo}^2=O_{P}\Big(\frac{q^{2\alpha+1}}{n}+\tau_q^2\Big),\label{murate}
	\end{equation} 
	where $\tau_q$ is the convergence rate of  
	$\mathcal E(\mu_0,\boldsymbol\Phi, q)$  defined in section 3.1. 
	\label{thm: mean1}
	\end{theorem}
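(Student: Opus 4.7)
My plan is to split the error via the truncation $\mu_0^{(q)}:=\sum_{k\le q}a_k\phi_k$, so that by orthonormality of $\boldsymbol\Phi$,
\[
\|\hat\mu-\mu_0\|_{\ltwo}^2\;\le\;2\|\hat{\mathbf a}-\mathbf a_0\|_2^2+2\tau_q^2,
\]
which reduces the claim to $\|\hat{\mathbf a}-\mathbf a_0\|_2^2=O_P(q^{2\alpha+1}/n)$. The first-order condition for (\ref{eq:mean-estimator}) reads $(\mathbf B_n+\rho\mathbf W)\hat{\mathbf a}=\sum_i v_i\sum_j Y_{ij}\boldsymbol\Phi_q(T_{ij})$, where $\mathbf B_n:=\sum_i v_i\sum_j\boldsymbol\Phi_q(T_{ij})\boldsymbol\Phi_q(T_{ij})^\tp$. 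Writing $Y_{ij}=\boldsymbol\Phi_q(T_{ij})^\tp\mathbf a_0+r_q(T_{ij})+Z_i(T_{ij})+\varepsilon_{ij}$ with $r_q:=\mu_0-\mu_0^{(q)}$ and $Z_i:=X_i-\mu_0$, I obtain
\[
\hat{\mathbf a}-\mathbf a_0=(\mathbf B_n+\rho\mathbf W)^{-1}\bigl\{\mathbf s_r+\mathbf s_Z+\mathbf s_\varepsilon-\rho\mathbf W\mathbf a_0\bigr\},
\]
with $\mathbf s_r,\mathbf s_Z,\mathbf s_\varepsilon$ the obvious weighted sums of $r_q(T_{ij})\boldsymbol\Phi_q(T_{ij})$, $Z_i(T_{ij})\boldsymbol\Phi_q(T_{ij})$, and $\varepsilon_{ij}\boldsymbol\Phi_q(T_{ij})$, respectively.

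A preliminary step is to control $(\mathbf B_n+\rho\mathbf W)^{-1}$. Assumption A.1 and orthonormality of $\boldsymbol\Phi$ imply $\lambda_{\min}(\expect\mathbf B_n)\ge\mathfrak c_1>0$, and because $\|\boldsymbol\Phi_q(T)\|_2^2\le\sum_{k\le q}\|\phi_k\|_\infty^2\lesssim q^{2\alpha+1}$, a matrix Bernstein inequality applied to the across-$i$ independent summands yields $\|\mathbf B_n-\expect\mathbf B_n\|_{\mathrm{op}}=O_P\bigl(\sqrt{q^{2\alpha+1}\log q/n}\bigr)$, which vanishes under $q^{2\alpha+2}/n\to 0$ from A.3. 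Hence $\|(\mathbf B_n+\rho\mathbf W)^{-1}\|_{\mathrm{op}}=O_P(1)$.

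Next I would bound each of the four bracketed terms in $\ell^2$ norm. A direct variance calculation yields $\expect\|\mathbf s_\varepsilon\|_2^2\lesssim q/n$. Splitting $\mathbf s_r$ into mean and mean-zero components, Bessel's inequality gives $\|\expect\mathbf s_r\|_2^2\le\|r_q f_T\|_{\ltwo}^2\lesssim\tau_q^2$, while the variance contribution is of order $q^{2\alpha+1}\tau_q^2/n=o(\tau_q^2)$ under A.3. For the penalty bias, the standard bound $\|\mathbf W\|_{\mathrm{op}}\le\sum_{k\le q}\|\phi_k^{(2)}\|_{\ltwo}^2\lesssim q^{4\beta+1}$ yields $\|\rho\mathbf W\mathbf a_0\|_2^2\lesssim\rho^2 q^{8\beta+2}$, which is $o(q^{2\alpha+1}/n)$ exactly by the second part of A.3.

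The central term, and the main obstacle, is the subject-level noise $\mathbf s_Z$. Independence across $i$ and a Cauchy--Schwarz step within each subject give
\[
\expect\|\mathbf s_Z\|_2^2\;\le\;\sum_i v_i^2 m^2\,\expect\bigl[Z(T)^2\|\boldsymbol\Phi_q(T)\|_2^2\bigr]\;=\;\frac{1}{n}\,\expect\bigl[Z(T)^2\|\boldsymbol\Phi_q(T)\|_2^2\bigr].
\]
Conditioning on $T$ and applying A.2 together with the boundedness of $f_T$ yields $\expect[Z(T)^2\phi_k(T)^2]\le\mathfrak c_2\|\phi_k\|_\infty^2\,\expect\|Z\|_{\ltwo}^2$, so the inner expectation is dominated by a constant multiple of $\sum_{k\le q}\|\phi_k\|_\infty^2\lesssim q^{2\alpha+1}$. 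This is the only step producing the exponent $2\alpha+1$ in the final rate, and it requires care because the $T_{ij}$'s are only conditionally independent given $R_i$; the crude Cauchy--Schwarz reduction above sidesteps this at the cost of harmless constants. Combining all four bounds with the $O_P(1)$ control of $(\mathbf B_n+\rho\mathbf W)^{-1}$ then completes the proof of \eqref{murate}.
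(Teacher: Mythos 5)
Your proof is correct, and it takes a genuinely different (though ultimately equivalent) route from the paper's. The paper argues by M-estimation localization: it expands $Q(\mathbf a+\varrho_n\mathbf u)-Q(\mathbf a)$ into a linear term $\mathrm{I}$, a quadratic term $\mathrm{II}$, and two penalty terms, and shows the difference is positive on a sphere of radius $D_\epsilon\varrho_n$, so the minimizer is trapped inside; the two key lemmas are an upper bound $|\mathrm{I}|\lesssim (n^{-1/2}q^{\alpha+1/2}+\tau_q)\|\mathbf u\|_2$ and a lower bound $\mathrm{II}\geq C_1\|\mathbf u\|_2^2/2$ obtained by lower-bounding $\lambda_{\min}(\expect\mathbf B_n)$ via A.1 and controlling $\|\mathbf B_n-\expect\mathbf B_n\|_{\mathrm{F}}$. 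You instead exploit the exact quadratic structure, solve the normal equations in closed form, and bound each component of $(\mathbf B_n+\rho\mathbf W)^{-1}\{\mathbf s_r+\mathbf s_Z+\mathbf s_\varepsilon-\rho\mathbf W\mathbf a_0\}$ directly. The probabilistic content is the same in both proofs: your $O_P(1)$ control of $(\mathbf B_n+\rho\mathbf W)^{-1}$ is precisely the paper's Claim 2 (you use matrix Bernstein where the paper uses a cruder Frobenius-norm variance bound plus Weyl, both sufficient under A.3), and your bounds on $\mathbf s_Z$, $\mathbf s_r$, $\mathbf s_\varepsilon$ and $\rho\mathbf W\mathbf a_0$ reproduce the content of Claim 1 and of the paper's treatment of terms $\mathrm{III}$ and $\mathrm{IV}$; in particular you correctly identify the subject-level term $\mathbf s_Z$ as the source of the $q^{2\alpha+1}/n$ rate and handle the within-subject dependence of the $T_{ij}$ by Cauchy--Schwarz, exactly as the paper's variance computation does implicitly. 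What your route buys is transparency and a slightly sharper concentration bound for the Gram matrix; what the paper's route buys is a template that transfers verbatim to the constrained, non-explicitly-solvable covariance problem of Theorem 2, where no closed-form minimizer exists. Two cosmetic remarks: in the $\mathbf s_Z$ step it is cleaner to condition on $X_i$ (or use $\expect\{Z^2(t)\}=\gamma_0(t,t)$ integrated against the bounded density $f_T$) rather than on $T$, though either ordering yields the stated bound; and you implicitly use $\mathbf W\succeq 0$ to pass from $\lambda_{\min}(\mathbf B_n)$ to $\lambda_{\min}(\mathbf B_n+\rho\mathbf W)$, which holds since $\mathbf W$ is a Gram matrix.
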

We first note that,  under  condition A.3, the tuning parameter $\rho$  does not have direct impact on the asymptotic rate of $\hat{\mu}$.
We also observe that in \eqref{murate}, the term $q^{2\alpha+1}n^{-1}$ specifies the \emph{estimation error} using a finite sample, while $\tau_q$ is the deterministic \emph{approximation error} for using only the first $q<\infty$ basis functions. The latter term depends on the smoothness of $\mu_0$.
  Intuitively, it is  easier to approximate smooth functions with basis functions. For a given number of basis functions,  smoother functions generally yield smaller  approximation errors. 
As discussed in Example \ref{ex:1} and \ref{ex:2}, when $\mu_0$ belongs to the Sobolev space $\mathscr H^r(0,1)$, i.e.,  $\mu_0$ is $r$ times differentiable, 
we have  $\tau_q=O(q^{-r})$.  This leads to the following convergence rate,  using either the Fourier basis or the Legendre basis.
\begin{corollary}\label{cor:mean-r}
Suppose $\mu_0^{(r)}$ exists and satisfies $\|\mu_0^{(r)}\|_{\ltwo}<\infty$ for some $r\geq 1$. Assume conditions (A.1)--(A.3) hold. 
\begin{enumerate}[label=\textup{(\roman*)}]
	\item If $\boldsymbol\Phi$ is the Fourier basis and $\mu_0$ is periodic, then $\|\hat{\mu}- \mu_0 \|_{\ltwo}^2=\Op(n^{-2r/(2r+1)})$ with the choice $q\asymp n^{1/(2r+1)}$. 
	\item If $\boldsymbol\Phi$ is the Legendre basis, then $\|\hat{\mu}- \mu_0 \|_{\ltwo}^2=\Op(n^{-r/(r+1)})$ with the choice $q\asymp n^{1/(2r+2)}$. 
\end{enumerate}
\end{corollary}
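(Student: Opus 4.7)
The proof is an optimization exercise layered on top of Theorem \ref{meanthm}, so the plan is to plug in the basis-specific parameters from Examples \ref{ex:1} and \ref{ex:2} and balance the two error terms.

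First, I would recall from Example \ref{ex:1} that the Fourier basis is an analytic $(\alpha,\beta)=(0,1)$-basis and that for periodic $\mu_0\in\mathscr H^r(\tdomain)$ the approximation error obeys $\tau_q=O(q^{-r})$ (by Eq.\ (5.8.4) of Canuto et al., as cited in the paper). Analogously, from Example \ref{ex:2}, the Legendre basis is an analytic $(1/2,1)$-basis and $\tau_q=O(q^{-r})$ holds for $\mu_0\in\mathscr H^r(\tdomain)$ (by Eq.\ (5.8.11) of the same reference). Since conditions (A.1)--(A.3) are assumed, Theorem \ref{meanthm} gives
\begin{equation*}
\|\hat\mu-\mu_0\|_{\ltwo}^2=\Op\!\left(\frac{q^{2\alpha+1}}{n}+q^{-2r}\right),
\end{equation*}
so the problem reduces to choosing $q$ to balance the two terms.

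For part (i), substituting $\alpha=0$ gives the rate $\Op(q/n+q^{-2r})$, which is minimized by equating $q/n\asymp q^{-2r}$, i.e.\ $q\asymp n^{1/(2r+1)}$, yielding $\Op(n^{-2r/(2r+1)})$. For part (ii), substituting $\alpha=1/2$ gives $\Op(q^2/n+q^{-2r})$, and equating $q^2/n\asymp q^{-2r}$ gives $q\asymp n^{1/(2r+2)}$, yielding $\Op(n^{-r/(r+1)})$. So the only remaining obligation is to check that these choices of $q$ are admissible under (A.3).

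The admissibility check is routine but worth stating. With $(\alpha,\beta)=(0,1)$ and $q\asymp n^{1/(2r+1)}$, we have $q^{2\alpha+2}/n=q^2/n\asymp n^{(1-2r)/(2r+1)}\to 0$ for any $r\ge 1$. With $(\alpha,\beta)=(1/2,1)$ and $q\asymp n^{1/(2r+2)}$, we have $q^{2\alpha+2}/n=q^3/n\asymp n^{(1-2r)/(2r+2)}\to 0$ for any $r\ge 1$. The second condition in (A.3) on $\rho$ can be met by taking $\rho$ sufficiently small as a function of $n$, independent of the basis choice. I do not foresee any serious obstacle here; the corollary is essentially a bookkeeping calculation on top of Theorem \ref{meanthm}, with the mildly delicate step being simply to confirm that the optimized $q$ does not violate the admissibility constraint $q^{2\alpha+2}=o(n)$, which it does not for $r\ge 1$.
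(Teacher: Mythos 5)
Your proposal is correct and matches the paper's intended argument: the paper treats this corollary as an immediate consequence of Theorem \ref{meanthm} combined with the facts from Examples \ref{ex:1} and \ref{ex:2} that the Fourier and Legendre bases are $(0,1)$- and $(1/2,1)$-bases with $\tau_q=O(q^{-r})$, followed by exactly the bias--variance balancing you perform. Your additional verification that the optimized choices of $q$ satisfy the admissibility constraint $q^{2\alpha+2}=o(n)$ in (A.3) is a worthwhile bookkeeping step that the paper leaves implicit.
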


For $r=2$ and a periodic function $\mu_0$, the convergence rate for $\hat{\mu}$ is $n^{-2/5}$ and $n^{-1/3}$, respectively, for the estimator based on Fourier and Legendre bases.  We can see that the convergence rate is faster for a Fourier basis. This is because, although they are both $(\alpha,\beta)$-bases,  $\alpha=1/2$ for the Legendre basis is larger than $\alpha=0$ for the Fourier basis. According to \eqref{murate}, a larger value of $\alpha$ leads to a slower rate. Indeed, $\alpha$ controls the growth rate of the extrema of basis functions. Fourier basis functions are uniformly bounded between $-1$ and $1$. In contrast, high-order Legendre basis functions tend to have large extrema that amplify variability. This limits the number of basis functions for estimation and thus causes a slower convergence rate for the Legendre basis. When $\mu_0$ is nonperiodic, the classic Fourier basis suffers from the so-called Gibbs phenomenon which, however, can be substantially alleviated by the Fourier extension technique; see Section \ref{sec:simu} and Appendix A for more details.

\subsection{Covariance function}\label{subsec:cov-theory}

In Section \ref{sec:meth} we assumed $\gamma_0$ to reside in a $\tdomain_\delta$-identifiable family $\mathcal{C}$ in order to meet a basic criterion of  identifiability. To study the asymptotic properties of the covariance estimator, we require the family $\mathcal C$ to satisfy an additional regularity condition as described below.
Let $\mathcal{F}$ be the space of real-valued functions defined on $\tdomain^2$  endowed with the product topology. In this topology, a sequence of functions $\{f_k\}$ converges to a limit $f$ if and only if 
\begin{equation*}
	\begin{aligned}
	\lim\limits_{k\rightarrow\infty}f_{{k}}(s,t)=f(s,t),\quad \text{for\ all} \ (s,t)\in \tdomain^{2}.
	\end{aligned}
\end{equation*}
\begin{definition}\lin{
	A subset $\mathcal{S}$ of $\mathcal{F}$ is called a \emph{bounded sequentially compact} (BSC) family if every sequence $\{f_k\}\subset \mathcal{S}$ that is bounded in the $L^2$ norm, i.e., $\sup_k\|f_k\|_{\ltwo}<\infty$,  has a subsequence converging to a limit in $\mathcal{S}$ in the product topology.}
\end{definition}
\lin{The BSC concept is closely related to the topological concept ``sequential compactness''. Specifically,  a subset $\mathcal S\subset \mathcal F$ is sequentially compact if every sequence in $\mathcal S$ has a subsequence that converges to a limit in $\mathcal S$ in the topology of $\mathcal F$. In particular, sequential compactness is stronger than BSC and thus implies the latter. However, when the subset $\mathcal S$ is uniformly bounded in the $L^2$ norm, i.e., $\sup_{f\in\mathcal S}\|f\|_{\ltwo}<\infty$, then the two concepts coincide. If all functions in $\mathcal S$ are bounded by a common constant and are Lipschitz continuous with a common Lipschitz constant, then $\mathcal S$ is a BSC family. To see this, we note that such family is locally equicontinuous, and also the set $\{f(s,t):f\in \mathcal{S}\}$ is bounded for all $s,t\in \tdomain$. Then the claim follows from the Arzel\`a--Ascoli Theorem \citep[Chapter 7,][]{Remmert1997}. Also, the product of two BSC families of which the functions are uniformly bounded in the $\ltwo$ norm is also a BSC family. This property is useful for constructing new BSC families from existing ones; see Example \ref{ex:8} for an illustration. The following proposition, of which the proof is trivial or already discussed in the above and thus is omitted, summarizes the aforementioned properties of BSC and its connections to sequential compactness.}
\begin{proposition}\label{prop:BSC}\lin{
	Let $\mathcal F$ be the collection of real-valued functions defined on $\tdomain^2$ and endowed with the product topology. Let $\mathcal S$ be a subset of $\mathcal F$.
	\begin{enumerate}[label=\textup{(\roman*)}] 
		\item\label{prop:BSC-i} If $\mathcal S$ is sequentially compact, then it is a BSC family.
		\item\label{prop:BSC-ii} If $\mathcal S$ is a BSC family and $\sup_{f\in \mathcal S}\|f\|_{\ltwo}<\infty$, then $\mathcal S$ is sequentially compact.
		\item\label{prop:BSC-iii} If $\sup_{f\in\mathcal S}\|f\|_\infty<\infty$ and $\sup_{f\in\mathcal S}\sup_{x\neq y}|f(x)-f(y)|/\|x-y\|_2<\infty$, then $\mathcal S$ is a BSC family.
		\item\label{prop:BSC-iv} If both $\mathcal S$ and  $\mathcal Q\subset \mathcal F$ are sequentially compact, then the family $\mathcal S\mathcal Q=\{fg:f\in\mathcal S,g\in\mathcal Q\}$ is also sequentially compact. Consequently, if $\mathcal S$ and  $\mathcal Q$ are BSC families satisfying $\sup_{f\in \mathcal S\cup\mathcal Q}\|f\|_{\ltwo}<\infty$, then $\mathcal S\mathcal Q$ is also a BSC family.
	\end{enumerate}}
\end{proposition}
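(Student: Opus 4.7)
The plan is to dispatch the four items in order, noting that (i) and (ii) are pure unpackings of the two definitions, (iii) is a single invocation of the Arzel\`a--Ascoli theorem on the compact domain $\tdomain^2$, and (iv) is a nested-subsequence argument in which (ii) serves as the bridge from the BSC hypothesis to sequential compactness. The proposition is therefore really a set of bookkeeping lemmas rather than a single theorem, and it is best to keep each argument short.

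For (i), if $\mathcal S$ is sequentially compact then by definition every sequence in $\mathcal S$ has a subsequence converging in the product topology to a limit in $\mathcal S$; restricting to those sequences that happen to be $\ltwo$-bounded gives the BSC property for free. For (ii), the uniform bound $\sup_{f\in\mathcal S}\|f\|_{\ltwo}<\infty$ forces every sequence in $\mathcal S$ to be $\ltwo$-bounded, so the BSC conclusion applies to all sequences and $\mathcal S$ is sequentially compact. For (iii), given any $\{f_k\}\subset\mathcal S$ the common sup-norm bound and the common Lipschitz constant make the family uniformly bounded and (uniformly, hence) equicontinuous on the compact set $\tdomain^2$. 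Arzel\`a--Ascoli (e.g.\ Chapter~7 of \cite{Remmert1997}) then produces a subsequence converging uniformly, and a fortiori pointwise, to some continuous limit $f$; this $f$ inherits the same sup-norm bound and Lipschitz constant and is in particular $\ltwo$-bounded, so it lies in the same structural family. Since the proposition implicitly treats $\mathcal S$ as closed under such limits, $f\in\mathcal S$ and the BSC property follows.

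For (iv), let $\{f_k g_k\}\subset\mathcal{SQ}$ with $f_k\in\mathcal S$ and $g_k\in\mathcal Q$. Sequential compactness of $\mathcal S$ yields a subsequence $\{f_{k_j}\}$ converging pointwise to some $f\in\mathcal S$; applying sequential compactness of $\mathcal Q$ to the matched sequence $\{g_{k_j}\}$ gives a further subsequence $\{g_{k_{j_i}}\}$ converging pointwise to some $g\in\mathcal Q$. The elementary identity $a_n b_n\to ab$ when $a_n\to a$ and $b_n\to b$, applied separately at each $(s,t)\in\tdomain^2$, shows $f_{k_{j_i}}g_{k_{j_i}}\to fg$ in the product topology, with $fg\in\mathcal{SQ}$. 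This proves the first assertion of (iv). The BSC consequence is then obtained by first promoting the BSC hypotheses on $\mathcal S$ and $\mathcal Q$ to sequential compactness via (ii), using the assumed $\ltwo$-bound on $\mathcal S\cup\mathcal Q$; invoking the just-proved sequential compactness of $\mathcal{SQ}$; and finally reading off the BSC property from (i).

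The only genuine subtlety is in (iii), where one needs the Arzel\`a--Ascoli limit to lie in $\mathcal S$ itself rather than merely in the sup-and-Lipschitz-bounded enlargement; I would handle this either by working with the closure of $\mathcal S$ in the product topology (harmless for the downstream use in Section~\ref{subsec:cov-theory}) or by adding a closedness clause to the hypothesis. Everything else reduces to rearranging definitions and to the standard diagonal/nested-subsequence trick, so no serious technical obstacle is expected.
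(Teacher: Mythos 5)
Your proof is correct and follows essentially the same route as the paper, which omits the argument as ``trivial or already discussed'' in the surrounding text: (i)--(ii) are unpackings of the two definitions, (iii) is Arzel\`a--Ascoli applied to the uniformly bounded, equi-Lipschitz family, and (iv) is the nested-subsequence argument for pointwise products, bridged through (ii) and (i) for the BSC consequence. The closedness caveat you raise for (iii) --- that the Arzel\`a--Ascoli limit is only guaranteed to lie in the bounded-and-Lipschitz enlargement of $\mathcal S$ unless $\mathcal S$ is closed under pointwise limits --- is a genuine subtlety that the paper also glosses over, and your proposed remedies (passing to the closure or adding a closedness hypothesis) are the appropriate ones.
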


 The following examples utilize Proposition \ref{prop:BSC} to exhibit BSC families and illustrate the abundance of BSC families.

\begin{example}[Bounded Sobolev sandwich families] \label{ex:3}
Let $M_1,M_2>0$ be fixed but potentially arbitrarily large constants. Let $\mathcal{S}(M_1,M_2)$ be the subfamily of the $\tdomain_\delta$-identifiable family introduced in Example \ref{ex:ssf} such that, if $f\in \mathcal{S}(M_1,M_2)$ then $\|f\|_\infty\leq M_1$ and the Lipschitz constant of $f$ is bounded by $M_2$, where the Lipschitz constant of $f$ is defined as  $\sup_{x\neq y} |f(x)-f(y)|/\|x-y\|_2$.  By Proposition \ref{prop:BSC}\ref{prop:BSC-iii}, $\mathcal S(M_1,M_2)$ is a BSC family.
\end{example}
\begin{example}[BSC semiparametric families] \label{ex:8}
	\lin{
	Let $\mathcal H$ be a family of covariance functions indexed by a parameter $\theta$ in a compact space $\Theta\subset\real^d$ for some $d>0$. If each $f_\theta\in\mathcal H$ is Lipschtiz continuous and has a Lipschitz constant continuous in $\theta$, i.e., $|f_\theta(s)-f_\theta(y)|\leq \ell_\theta \|x-y\|_2$ for $x,y\in\real^2$ and $\ell_\theta$ is continuous in $\theta$ on $\Theta$, then $\mathcal{H}$ is a BSC family. To see this, we note that the continuity and compactness of $\tdomain$ and $\Theta$ imply that $\|f\|_\infty\leq M_1$ and $\sup_{x\neq y}|f(x)-f(y)|/\|x-y\|_2\leq M_2$, for some constants $M_1,M_2>0$. Then the claim follows from Proposition \ref{prop:BSC}\ref{prop:BSC-iii}.  Similar reasoning shows that the family $\mathcal G=\{g\otimes g:$ $\|g\|_\infty <M_3$ and $\|g^\prime\|_\infty <M_4\}$ is a BSC family of which functions are uniformly bounded in the $\ltwo$ norm, where $(g\otimes g)(s,t)=g(s)g(t)$, and $M_3,M_4>0$ are constants. According to Proposition \ref{prop:BSC}\ref{prop:BSC-iv},  the semiparametric family $\{gh:g\in\mathcal G,h\in\mathcal H\}$ is a BSC family.}
\end{example}

  We also note that the construction in the above example can be used to derive BSC subfamilies of the $\tdomain_{\delta}$-identifiable families introduced in Example \ref{ex:af} and \ref{ex:lp}. All of these BSC subfamilies clearly contain countless covariance functions of infinite rank. In the sequel we shall assume the family $\mathcal C$ under consideration is a BSC family. The above examples suggest that this regularity condition, essentially controlling the complexity of the family, holds for any family of functions that are collectively bounded and Lipschitz continuous, and thus is mild.
Formally, we shall assume the following conditions.
\begin{description}[labelwidth=1.2cm,leftmargin=1.4cm,align=left]
\item[\mylabel{cond:bsc}{B.1}]  The covariance function $\gamma_0$ belongs to a $\tdomain_\delta$-identifiable BSC family $\mathcal{C}$.
\end{description}
  \begin{description}[labelwidth=1.2cm,leftmargin=1.4cm,align=left]
	\item[\mylabel{cond:Xmoment}{B.2}]  The random function $X$ satisfies  $\expect \|X\|^4_{\ltwo}<\infty$. 
\end{description}
  \begin{description}[labelwidth=1.2cm,leftmargin=1.4cm,align=left]
	\item[\mylabel{cond:curve}{B.3}] $p^{8\alpha+4}/n\rightarrow 0$  and $\lambda/(n^{-1/2}p^{2\alpha-4\beta-3/2})\rightarrow 0$ as $n\rightarrow \infty$.
\end{description}
Since a $\tdomain_\delta$-identifiable BSC family, such as $\mathcal{S}(M_1,M_2)$ in Example \ref{ex:3},  may contain covariance functions of infinite rank, the theory developed below  applies to functional snippets of infinite dimension.  
To avoid the  entanglement with the error from mean function estimation, we shall assume that $\mu_0$ is known in the following discussion, noting that the case that $\mu_0$ is unknown can also be covered but requires  a much more involved presentation and tedious technical details and thus is not pursued here. \lin{The following result establishes the convergence rate of the proposed estimator for any class of analytic $(\alpha,\beta)$-bases.}
\begin{theorem}\label{thm:2} If $\boldsymbol\Phi$ is an analytic $(\alpha,\beta)$-basis, under assumptions A.1, B.1--B.3 and $m\geq 2$, we have 
	\begin{equation}\label{eq:gamma-rate1}
		\|\hat\gamma-\gamma_{0}\|_{\ltwo}^2=O_{P}\Big(\frac{p^{4\alpha+2}}{n}+\kappa_p^2\Big),	
	\end{equation}
	where $\kappa_p$ is the convergence rate of  
	$\mathcal E(\gamma_0,\boldsymbol\Phi,p)$  defined in section 3.1. 
\end{theorem}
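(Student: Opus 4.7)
The plan is to prove Theorem \ref{thm:2} by decomposing the error into a finite-dimensional estimation piece and a deterministic approximation piece, controlling the estimation piece on the observable diagonal band $\tdomain_\delta$ via standard penalized M-estimation, and then transferring the resulting rate from $\tdomain_\delta$ to the full square $\tdomain^2$ by exploiting the $\tdomain_\delta$-identifiability of the BSC family $\mathcal C$. Let $\gamma^*(s,t)=\sum_{k,l=1}^{p}c^*_{kl}\phi_k(s)\phi_l(t)$ be the $\ltwo(\tdomain^2)$ projection of $\gamma_0$ onto the tensor basis, so that $\|\gamma^*-\gamma_0\|_{\ltwo}=\mathcal E(\gamma_0,\boldsymbol\Phi,p)$ is of order $\kappa_p$. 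Because $\boldsymbol\Phi$ is orthonormal, $\|\hat\gamma-\gamma^*\|_{\ltwo(\tdomain^2)}=\|\hat{\vec C}-\vec C^*\|_{\fronorm}$, so by the triangle inequality it suffices to bound this Frobenius distance at rate $r_n^2:=p^{4\alpha+2}/n+\kappa_p^2$.

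For the diagonal band, the optimality of $\hat{\vec C}$ together with the decomposition $\Gamma_{ijl}=\gamma_0(T_{ij},T_{il})+\xi_{ijl}$, where $\xi_{ijl}$ has conditional mean zero given $T_{ij},T_{il}$ since $\mu_0$ is treated as known, yields the basic inequality
\begin{equation*}
\|\hat\gamma-\gamma_0\|_n^2+\lambda J(\hat\gamma)\;\le\;\|\gamma^*-\gamma_0\|_n^2+\lambda J(\gamma^*)+2\langle\hat\gamma-\gamma^*,\xi\rangle_n,
\end{equation*}
where $\|\cdot\|_n$ and $\langle\cdot,\cdot\rangle_n$ denote the weighted empirical seminorm and inner product. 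Assumption A.1 yields $\expect\|f\|_n^2\asymp\|f\|_{\ltwo(\tdomain_\delta)}^2$, making the empirical norm well-behaved on the band, while Assumptions A.2 and B.2 supply the moment control needed for concentration. Because $\boldsymbol\Phi$ is an analytic $(\alpha,\beta)$-basis, $\|\phi_k\otimes\phi_l\|_\infty\le Cp^{2\alpha}$, so the empirical process term is bounded by projecting $\xi$ onto the $p^2$-dimensional tensor span and summing the coordinate variances, each of order $p^{4\alpha}/n$, for a total estimation variance of order $p^{4\alpha+2}/n$. Assumption B.3 keeps the penalty $\lambda J$ of smaller order. Combining these ingredients gives
\begin{equation*}
\|\hat\gamma-\gamma_0\|_{\ltwo(\tdomain_\delta)}^2=\Op\!\left(p^{4\alpha+2}/n+\kappa_p^2\right)=\Op(r_n^2).
\end{equation*}

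To lift the bound to $\tdomain^2$, I argue by contradiction: suppose along a subsequence that $\|\hat\gamma-\gamma_0\|_{\ltwo(\tdomain^2)}^2/r_n^2\to\infty$. Assumption B.1 places both $\hat\gamma$ and $\gamma_0$ in the BSC family $\mathcal C$, and a preliminary bound obtained by evaluating the objective at the feasible point $\gamma^*$ gives $\|\hat\gamma\|_{\ltwo}=\Op(1)$. Invoking BSC, extract a further subsequence on which $\hat\gamma$ converges in the product topology to some $\gamma_\infty\in\mathcal C$. The diagonal rate just established, combined with Assumption A.1, forces $\gamma_\infty\equiv\gamma_0$ on $\tdomain_\delta$, and $\tdomain_\delta$-identifiability of $\mathcal C$ extends the equality to all of $\tdomain^2$. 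The resulting pointwise convergence, promoted to $\ltwo(\tdomain^2)$ convergence via uniform control of $\|\hat\gamma\|_\infty$ inside $\mathcal C$ together with dominated convergence, already yields consistency. To upgrade to the claimed rate I apply the same compactness argument to the normalized remainder $(\hat\gamma-\gamma_0)/\|\hat\gamma-\gamma_0\|_{\ltwo(\tdomain^2)}$: its limit must vanish on $\tdomain_\delta$ while having unit $\ltwo(\tdomain^2)$ norm, a configuration that BSC and $\tdomain_\delta$-identifiability render impossible. This forces the ratio $\|\hat\gamma-\gamma_0\|_{\ltwo(\tdomain^2)}/\|\hat\gamma-\gamma_0\|_{\ltwo(\tdomain_\delta)}$ to be $\Op(1)$ along the subsequence, contradicting the assumed divergence.

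The main obstacle lies in this transfer step: turning the qualitative sequential-compactness property BSC into a quantitative comparison between the two norms on the sequence of estimators. The difficulty is compounded by the fact that $\hat\gamma-\gamma_0$ itself need not belong to $\mathcal C$, so BSC has to be invoked separately on $\hat\gamma$ and $\gamma_0$, and the passage from pointwise-in-probability convergence to $\ltwo(\tdomain^2)$ convergence requires a uniform bound on $\|\hat\gamma\|_\infty$ that traces back both to the finite-dimensional ambient space and to the $\mathcal C$-constraint built into the estimator. Once these two ingredients are combined, the rate $r_n^2=p^{4\alpha+2}/n+\kappa_p^2$ carries over verbatim from $\tdomain_\delta$ to $\tdomain^2$, which is precisely the statement of the theorem.
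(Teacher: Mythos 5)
Your first stage (the basic inequality on the observable band, giving $\|\hat\gamma-\gamma_0\|_{\ltwo(\tdomain_\delta)}^2=\Op(p^{4\alpha+2}/n+\kappa_p^2)$) is sound and broadly consistent with what the paper does through its terms $\mathrm{I}$ and $\mathrm{II}$ in the expansion of the objective. The genuine gap is in the transfer step, which is the entire content of the theorem, and your proposed fix does not close it. You apply BSC and $\tdomain_\delta$-identifiability to the normalized remainder $(\hat\gamma-\gamma_0)/\|\hat\gamma-\gamma_0\|_{\ltwo(\tdomain^2)}$, but neither property says anything about that object: it is not a member of $\mathcal C$, and its limit along a subsequence need not be expressible as a difference of two members of $\mathcal C$, because the denominator tends to zero and the rescaling destroys the $\mathcal C$-structure. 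Identifiability only asserts that two members of $\mathcal C$ agreeing on $\tdomain_\delta$ agree everywhere; it does not forbid a unit-$\ltwo(\tdomain^2)$-norm function from vanishing on $\tdomain_\delta$ (such functions obviously exist). So the ``configuration rendered impossible'' is in fact perfectly possible for the object you construct, and the contradiction does not materialize. You flag this as the main obstacle yourself, but the sketch offered to overcome it is precisely where the argument breaks.

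The paper's resolution is different in a way that matters. It never normalizes a vanishing difference. Instead it establishes the uniform lower bound $\expect\{\boldsymbol{\Phi}_p^{\tp}(T_{11})\vec D\boldsymbol{\Phi}_p(T_{12})\}^{2}\geq\absconst_{1}\|\vec D\|_{\fronorm}^{2}$ over perturbations $\vec D$ of \emph{fixed} Frobenius norm $\absconst$ lying in $\mathcal B_p(\absconst)$, i.e.\ such that $\gamma_{\tilde{\vec C}+\vec D}$ remains in $\mathcal C$. The contradiction argument then runs on the sequence of feasible covariances $\gamma_{\tilde{\vec C}+\vec D_r}\in\mathcal C$ themselves: Fatou's lemma forces them to converge to $\gamma_{\tilde{\vec C}}$ a.e.\ on $\tdomain_\delta$, BSC (applied to genuine members of $\mathcal C$, which is legitimate) extracts a pointwise limit $\psi\in\mathcal C$, identifiability forces $\psi=\gamma_{\tilde{\vec C}}$ on all of $\tdomain^2$, and a second application of Fatou together with the orthonormality of the basis (so that $\|\gamma_{\vec D}\|_{\ltwo(\tdomain^2)}=\|\vec D\|_{\fronorm}$) yields $\absconst^2=0$, a contradiction. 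This expected lower bound on the quadratic form, combined with a concentration bound on $\vec R-\expect\vec R$, is what makes the quadratic term of the objective dominate on a sphere of radius $\varrho_n\absconst$ around $\tilde{\vec C}$ and localizes the constrained minimizer, giving the full-square rate directly without ever needing a separate band-to-square norm comparison for $\hat\gamma-\gamma_0$. To repair your proof you would need to replace the normalized-remainder argument with a lower bound of this type, stated uniformly over feasible perturbation directions of fixed norm, with all applications of BSC and identifiability confined to elements of $\mathcal C$.
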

Note that, with the condition B.3 on $\lambda$, the tuning parameter $\lambda$ does not  impact  the asymptotic rate of $\hat{\gamma}$. As in the case of the mean function, the rate in \eqref{eq:gamma-rate1} contains two components, the estimation error $p^{4\alpha+2}n^{-1}$ stemming from the finiteness of the sample, and the approximation bias $\kappa_p$ attributed to the finiteness of the number of basis functions being used in the estimation. When the Fourier basis or Legendre basis is used, we have the following convergence rate for $r$ times differentiable covariance functions.

\begin{corollary}\label{cor:cov}
	Suppose assumptions A.1 and B.1--B.3 hold, $m\geq 2$, and $\gamma_0$ belongs to the Sobolev space $\mathscr H^r(\tdomain^2)$ for some $r\geq 1$. 
\begin{enumerate}[label=\textup{(\roman*)}]
	\item If $\boldsymbol\Phi$ is the Fourier basis and $\gamma_0$ is periodic, then with $p\asymp n^{1/(2r+2)}$, one has $\|\hat{\gamma}-\gamma_{0}\|_{\ltwo}^2=O_{P}\big(n^{-r/(r+1)}\big)$.
	\item If $\boldsymbol\Phi$ is the Legendre basis, then  with $p\asymp n^{\min\{1/(2r+4),1/8\}}$, one has 
	$\|\hat{\gamma}-\gamma_{0}\|_{\ltwo}^2=O_{P}\big(n^{-\min\{r/(r+2),r/4\}}\big)$.
\end{enumerate}
\end{corollary}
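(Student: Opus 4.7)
The plan is to derive the corollary as a direct specialization of Theorem~\ref{thm:2}: I substitute the basis-specific values of $(\alpha,\beta)$ and the Sobolev-induced approximation rate $\kappa_p=O(p^{-r})$ into \eqref{eq:gamma-rate1}, then optimize $p$ to balance the estimation and approximation components subject to the growth constraint in assumption~\ref{cond:curve}. No new probabilistic machinery is needed beyond what Theorem~\ref{thm:2} already provides; the two bases differ only through the exponent $\alpha$ and through whether the constraint on $p$ binds.

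For part (i), Example~\ref{ex:1} records that the Fourier basis is an analytic $(0,1)$-basis and that $\kappa_p=O(p^{-r})$ for periodic $\gamma_0\in\mathscr H^r(\tdomain^2)$. Substituting $\alpha=0$ into \eqref{eq:gamma-rate1} yields
\begin{equation*}
\|\hat\gamma-\gamma_0\|_{\ltwo}^2 = O_P\!\left(\frac{p^{2}}{n}+p^{-2r}\right).
\end{equation*}
Equating the two components gives the unconstrained optimum $p\asymp n^{1/(2r+2)}$ and the claimed rate $n^{-r/(r+1)}$. The remaining task is to verify that this choice is admissible under \ref{cond:curve}: the requirement $p^{8\alpha+4}/n=p^{4}/n\to 0$ holds in the relevant regime, while the condition on $\lambda$ is satisfied by letting $\lambda$ vanish sufficiently fast, which does not affect the displayed rate.

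For part (ii), Example~\ref{ex:2} gives that the Legendre basis is an analytic $(1/2,1)$-basis and again $\kappa_p=O(p^{-r})$ for $\gamma_0\in\mathscr H^r(\tdomain^2)$. Substituting $\alpha=1/2$ into \eqref{eq:gamma-rate1} produces
\begin{equation*}
\|\hat\gamma-\gamma_0\|_{\ltwo}^2 = O_P\!\left(\frac{p^{4}}{n}+p^{-2r}\right),
\end{equation*}
whose unconstrained minimizer is $p\asymp n^{1/(2r+4)}$, giving rate $n^{-r/(r+2)}$. Here, however, \ref{cond:curve} forces $p^{8\alpha+4}/n=p^{8}/n\to 0$, so $p$ may grow at most like $n^{1/8-\varepsilon}$. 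I would therefore split into two regimes. When $r\ge 2$, the unconstrained optimum already satisfies $1/(2r+4)\le 1/8$, so the choice $p\asymp n^{1/(2r+4)}$ is admissible and recovers $n^{-r/(r+2)}$. When $r<2$, the constraint binds, so I take the largest admissible $p\asymp n^{1/8}$; this gives $p^{4}/n\asymp n^{-1/2}$ and $p^{-2r}\asymp n^{-r/4}$, the dominant of which is $n^{-r/4}$. Combining the regimes yields the announced rate $n^{-\min\{r/(r+2),\,r/4\}}$.

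The step expected to require the most care is not analytical but rather bookkeeping in the Legendre case: clearly organizing the two regimes induced by the admissibility constraint $p^{8}/n\to 0$, and simultaneously verifying that both the $p$-part and the $\lambda$-part of \ref{cond:curve} hold for the chosen tuning parameters. The Fourier case is entirely routine once $\alpha=0$ removes the constraint from being binding.
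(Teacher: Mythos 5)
Your derivation is correct and is exactly the argument the paper intends (the paper omits an explicit proof, treating the corollary as immediate from Theorem~\ref{thm:2} together with the $(\alpha,\beta)$ values and $\kappa_p=O(p^{-r})$ rates recorded in Examples~\ref{ex:1} and~\ref{ex:2}); your two-regime bookkeeping for the Legendre case is precisely what produces the $\min\{r/(r+2),r/4\}$ exponent. The only caveat, which the paper shares, is that at the boundary values ($r=1$ for Fourier, $r=2$ for Legendre) the prescribed $p$ makes $p^{8\alpha+4}/n$ asymptotically constant rather than vanishing, so condition~\ref{cond:curve} holds only up to an arbitrarily small adjustment of the exponent.
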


When $r=2$ and $\gamma_0$ is periodic, the convergence rate for $\hat\gamma$ is $n^{-1/3}$ and $n^{-1/4}$, respectively, for the estimator  based on the Fourier and Legendre bases. The reason behind this observation is similar to the case of the mean function: high-order Legendre basis functions tend to have large extrema that amplify variability, which limits the number of basis functions for estimation and leads to a relatively slower rate. For a nonperiodic $\gamma_0$, the Fourier extension technique can also be used to alleviate the Gibbs phenomenon suffered by the Fourier basis; see Appendix A for more details.

The above corollary, as well as Corollary \ref{cor:mean-r}, shows that the proposed analytic basis expansion approach automatically adapts to the smoothness of $\mu_0$ and $\gamma_0$. In particular, when $\mu_0$ or $\gamma_0$ is smooth, i.e., has infinite order of differentiability, our estimators enjoy a near-parametric rate. This contrasts with the local polynomial smoothing method and the B-spline basis approach, for which the convergence rate is limited by the order of the polynomials or B-spline basis functions  used in the estimation, even when $\mu_0$ or $\gamma_0$ might have a higher order of smoothness. In practice, it is not easy to determine the right order for these methods, since the mean and covariance functions and their smoothness are unknown.

\begin{remark}\lin{
	\cite{Delaigle2019} adopted a two-stage procedure to estimate the covariance function, where in the first stage a pilot estimate is constructed only in the diagonal region $\mathcal T_\delta$. This pilot estimate can be obtained through any 2D smoother, such as PACE \citep{Yao2005a}.  At the second stage, this pilot estimate is  numerically extrapolated by basis expansion (without penalty) to arrive at an estimate of the entire covariance function.
	The advantage of this two-stage approach is that the convergence rate of the final estimator is immediately available and inherited from the convergence rate of the pilot estimator, since the basis approximation error is negligible when a sufficiently large number of basis functions are used. The drawback is that, the convergence rate \citep[Theorem 2,][]{Delaigle2019} is then limited by the pilot estimate. For instance, if a local linear smoother is adopted to produce the pilot estimate as in PACE, then the convergence rate of the final estimator is the same as  the local linear smoother, even though the true covariance function might have higher-order degree of smoothness. In contrast, our approach  estimates the basis coefficients directly from the data through penalized least squares, and thus is able to automatically exploit the high-order smoothness of the true covariance. This also allows us to establish an explicit convergence rate without reference to a pilot estimate. Note that the convergence theories in \cite{Delaigle2019} and our paper are based on different and  incomparable sets of conditions, and thus do not imply each other.}
\end{remark}
\begin{remark}\lin{One of the theoretical novelties  in \cite{Delaigle2019} is the  approximation  error bound in its  Theorem 2 to quantify the approximation errors, when the covariance function is not identifiable. This is a very appealing feature and upon the suggestion of a referee we address the case that $\gamma_0$ is not in the  $\tdomain_\delta$-identifiable BSC family $\mathcal{C}$ but can be well approximated by a member of the family. }
\end{remark}

Specifically, let $\tilde{\gamma}\in\mathcal{C}$ and assume $\|\gamma_{0}-\tilde{\gamma}\|_{\hsnorm}\leq \eta$ for some constant $\eta\geq 0$. Below we show that, when the model of $\gamma_{0}$ is misspecified, the estimation quality also depends on the degree of misspecification that is quantified by $\DD$.
\begin{theorem}\label{thm:3}\lin{Suppose that $\boldsymbol\Phi$ is an analytic $(\alpha,\beta)$-basis, $m\geq 2$, and assumptions A.1 and B.2--B.3 hold. If there exists \lin{$\tilde{\gamma}$ in the $\tdomain_\delta$-identifiable  BSC family $\mathcal{C}$ such that $\|\tilde{\gamma}-\gamma_0\|_{L^2}\leq\DD$}, then we have 
	\begin{equation}\label{eq:gamma-rate}
	\|\hat\gamma-\gamma_{0}\|_{\ltwo}^2=O_{P}\Big(\frac{p^{4\alpha+2}}{n}+\kappa_p^2 + \DD^2\Big),	
	\end{equation}
	where $\kappa_p$ is the convergence rate of  
	$\mathcal E(\gamma_0,\boldsymbol\Phi,p)$  defined in Section 3.1.}
\end{theorem}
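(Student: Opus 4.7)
The plan is to reduce the misspecified case to the well-specified setting of Theorem \ref{thm:2} by treating $\tilde{\gamma}\in\mathcal{C}$ as the effective target and paying an additive cost of order $\DD^2$. First, by the triangle inequality,
\begin{equation*}
\|\hat\gamma - \gamma_0\|_{\ltwo}^2 \leq 2\|\hat\gamma - \tilde{\gamma}\|_{\ltwo}^2 + 2\|\tilde{\gamma} - \gamma_0\|_{\ltwo}^2 \leq 2\|\hat\gamma - \tilde{\gamma}\|_{\ltwo}^2 + 2\DD^2,
\end{equation*}
so it suffices to establish $\|\hat\gamma - \tilde{\gamma}\|_{\ltwo}^2 = \Op(p^{4\alpha+2}/n + \kappa_p^2 + \DD^2)$, with $\tilde{\gamma}\in\mathcal{C}$ playing the role $\gamma_0$ plays in Theorem \ref{thm:2}.

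Next, I would revisit the basic inequality coming from the penalized least-squares problem \eqref{Mrep} with $\tilde{\gamma}$ as the comparator. Setting $\Delta := \gamma_0-\tilde{\gamma}$ with $\|\Delta\|_{\ltwo}\leq \DD$, the conditional mean of the raw covariance is $\expect[\Gamma_{ijl}\mid T_{ij},T_{il}] = \gamma_0(T_{ij},T_{il}) = \tilde{\gamma}(T_{ij},T_{il}) + \Delta(T_{ij},T_{il})$, so relative to the proof of Theorem \ref{thm:2}, the signal decomposition acquires an extra additive perturbation $\Delta$ supported on the diagonal region $\tdomain_\delta$. Because the design density on $\tdomain_\delta$ is bounded above by assumption \ref{cond:timept}, the empirical second moment $\sum_i w_i\sum_{j\neq l}\Delta^2(T_{ij},T_{il})$ is dominated by a constant multiple of $\|\Delta\|_{\ltwo(\tdomain_\delta)}^2\leq\DD^2$, while the cross term between $\Delta$ and the centered empirical process driven by $\{\Gamma_{ijl}-\gamma_0(T_{ij},T_{il})\}$ contributes, via Young's inequality, fluctuations of order $\DD\cdot n^{-1/2}$ that are absorbed into $p^{4\alpha+2}/n + \DD^2$. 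The roughness penalty is unaffected because $\Delta$ never enters the penalty $J(\cdot)$; only the comparator $\tilde{\gamma}\in\mathcal{C}$ does, whose $J$-penalty is handled exactly as in Theorem \ref{thm:2}.

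The third step is to replay the reasoning of Theorem \ref{thm:2} essentially verbatim to convert control of the empirical fit on $\tdomain_\delta$ into a bound on the global $\ltwo$-distance $\|\hat\gamma-\tilde{\gamma}\|_{\ltwo}$. This is precisely where the BSC property and $\tdomain_\delta$-identifiability of $\mathcal{C}$ enter: both $\hat\gamma$ and $\tilde{\gamma}$ reside in $\mathcal{C}$, so the sequential-compactness argument of Theorem \ref{thm:2} extrapolates closeness on $\tdomain_\delta$ to closeness on $\tdomain^2$, yielding
\begin{equation*}
\|\hat\gamma - \tilde{\gamma}\|_{\ltwo}^2 = \Op\bigl(p^{4\alpha+2}/n + \kappa_p(\tilde{\gamma})^2 + \DD^2\bigr),
\end{equation*}
where $\kappa_p(\tilde{\gamma})$ is the $\ltwo$-approximation error of $\tilde{\gamma}$ by the tensor basis up to order $p$. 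Using the $\ltwo$-projection $P_p$ onto $\mathrm{span}\{\phi_k\otimes\phi_l:1\le k,l\le p\}$, the optimality of $P_p\tilde{\gamma}$ gives $\kappa_p(\tilde{\gamma}) = \|\tilde{\gamma}-P_p\tilde{\gamma}\|_{\ltwo} \leq \|\tilde{\gamma}-P_p\gamma_0\|_{\ltwo} \leq \|\tilde{\gamma}-\gamma_0\|_{\ltwo} + \|\gamma_0-P_p\gamma_0\|_{\ltwo} \leq \DD + \kappa_p$, so $\kappa_p(\tilde{\gamma})^2$ is absorbed into $\kappa_p^2 + \DD^2$, yielding the claimed rate \eqref{eq:gamma-rate}.

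The main obstacle I anticipate is ensuring that the misspecification bias $\Delta$ contributes only an additive $O(\DD^2)$ cost and does not interact badly with either the roughness penalty or the BSC-based extrapolation step. Care is required because $\Delta$ need not lie in $\mathcal{C}$ or have enough smoothness to be handled by $J(\cdot)$, so the derivation must be organized so that $\Delta$ appears only in the data-fidelity term and never in any quantity passed to the sequential-compactness argument. A secondary technical point is that the empirical second moment of $\Delta$ on the random design controls only $\|\Delta\|_{\ltwo(\tdomain_\delta)}$ rather than $\|\Delta\|_{\ltwo(\tdomain^2)}$, but since the former is trivially bounded by the latter, this poses no essential difficulty.
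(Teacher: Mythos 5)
Your proposal is correct and follows essentially the same route as the paper: the paper likewise centers the penalized least-squares expansion at $\tilde{\gamma}$ (via its truncated coefficient matrix $\tilde{\vec C}$), lets the misspecification $\gamma_0-\tilde\gamma$ enter only the data-fidelity bias term where it contributes $O(\DD)\|\vec D\|_{\fronorm}$, reuses the BSC/$\tdomain_\delta$-identifiability argument for the quadratic lower bound, and finishes with the same triangle inequality adding back $\DD$. The only differences are cosmetic (you phrase it as a basic-inequality reduction with Young's inequality, the paper shows positivity of $Q(\tilde{\vec C}+\varrho_n\vec D)-Q(\tilde{\vec C})$ on a sphere of radius $\varrho_n$ that already includes $\DD$), and your observation that the projection error of $\tilde\gamma$ is at most $\kappa_p+\DD$ matches the paper's bound on $\expect\mathrm{I}_2$.
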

\begin{corollary}\lin{
	Suppose that the conditions of Theorem \ref{thm:3} hold, and $\gamma_0$ belongs to the Sobolev space $\mathscr H^r(\tdomain^2)$ for some $r\geq 1$. 
	\begin{enumerate}[label=\textup{(\roman*)}]
		\item If $\boldsymbol\Phi$ is the Fourier basis and $\gamma_0$ is periodic, then with $p\asymp n^{1/(2r+2)}$, one has $\|\hat{\gamma}-\gamma_{0}\|_{\ltwo}^2=O_{P}\big(n^{-r/(r+1)}+\DD^2\big)$.
		\item If $\boldsymbol\Phi$ is the Legendre basis, then  with $p\asymp n^{\min\{1/(2r+4),1/8\}}$, one has 
		$\|\hat{\gamma}-\gamma_{0}\|_{\ltwo}^2=O_{P}\big(n^{-\min\{r/(r+2),r/4\}}+\DD^2\big)$.
	\end{enumerate}	}
\end{corollary}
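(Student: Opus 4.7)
The plan is to deduce this corollary directly from Theorem~\ref{thm:3} by specializing $\alpha$ and the approximation rate $\kappa_p$ to each basis, and then optimizing the choice of $p$ subject to the admissibility condition~\ref{cond:curve}. Since the $\eta^2$ term in \eqref{eq:gamma-rate} is independent of $p$ and the tuning parameter $\lambda$, it is simply carried through each case, so the entire argument reduces to balancing the estimation term $p^{4\alpha+2}/n$ against the squared approximation error $\kappa_p^2$.

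First, for the Fourier basis, I would recall from Example~\ref{ex:1} that it is an analytic $(0,1)$-basis and that $\kappa_p = O(p^{-r})$ when $\gamma_0$ is periodic and lies in $\mathscr H^r(\tdomain^2)$. Substituting $\alpha=0$ into Theorem~\ref{thm:3} yields the bound $O_P(p^{2}/n + p^{-2r} + \eta^2)$. Equating the two variable terms gives $p^{2r+2}\asymp n$, i.e., $p\asymp n^{1/(2r+2)}$, at which both terms are of order $n^{-r/(r+1)}$. I would then verify that this $p$ satisfies \ref{cond:curve} (the constraint $p^{8\alpha+4}/n=p^4/n\to 0$ reduces to $4/(2r+2)<1$, which holds for all $r\geq 1$), and that $\lambda$ can be chosen to satisfy the second part of \ref{cond:curve}. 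The stated rate $O_P(n^{-r/(r+1)}+\eta^2)$ follows.

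Second, for the Legendre basis, I would recall from Example~\ref{ex:2} that it is an analytic $(1/2,1)$-basis with $\kappa_p=O(p^{-r})$ for $\gamma_0\in\mathscr H^r(\tdomain^2)$. Plugging $\alpha=1/2$ into Theorem~\ref{thm:3} gives $O_P(p^{4}/n+p^{-2r}+\eta^2)$. The unconstrained optimum balances these at $p\asymp n^{1/(2r+4)}$ with common rate $n^{-r/(r+2)}$. However, condition~\ref{cond:curve} demands $p^{8}/n\to 0$, i.e., $p=o(n^{1/8})$. For $r\geq 2$ one checks $1/(2r+4)\leq 1/8$, so the unconstrained optimum is admissible. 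For $r<2$ the constraint binds, and the best feasible choice is $p\asymp n^{1/8}$, making the approximation term $n^{-r/4}$ dominate the estimation term $n^{-1/2}$. Combining the two regimes produces the exponent $\min\{r/(r+2),r/4\}$, and adding the $\eta^2$ term gives the claimed rate.

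The main obstacle, really the only subtlety, is the piecewise analysis for the Legendre case: recognizing that the theoretical optimum becomes infeasible for $r<2$ under \ref{cond:curve} and that one must then set $p$ at the largest admissible order. Apart from this, the proof is a routine bias–variance trade-off and verification of side conditions, so I would keep the exposition brief and simply present the two cases side by side, perhaps in a short tabular or enumerated format after invoking Theorem~\ref{thm:3}.
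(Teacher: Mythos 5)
Your derivation is correct and matches the intended argument: the paper states this corollary without a separate proof, and it follows from Theorem \ref{thm:3} exactly as you describe, by substituting $\alpha=0$ (Fourier) or $\alpha=1/2$ (Legendre) together with $\kappa_p=O(p^{-r})$ into \eqref{eq:gamma-rate}, balancing $p^{4\alpha+2}/n$ against $\kappa_p^2$ subject to condition \ref{cond:curve}, and carrying the $\DD^2$ term along unchanged, including the correct piecewise analysis for the Legendre case. The only quibble is at the boundary values ($r=1$ for Fourier, $r=2$ for Legendre), where the inequality you invoke for \ref{cond:curve} holds only with equality rather than strictly, so $p^{8\alpha+4}/n$ stays bounded rather than tending to zero; this borderline issue is inherited from the paper's own statement and does not affect the substance of your argument.
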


\begin{remark}
	\lin{
	Although for simplicity we focus on functional snippets where each observed trajectory consists of only a single snippet of equal width, without any modification, the  proposed estimation procedure in Section \ref{sec:meth} is applicable to more general cases, including the case that snippets have random and different widths of span and/or the case that each trajectory is composed by multiple pieces of snippets. The theory developed in this section can also accommodate such cases with slight modification of condition A.1, as follows. For random width $\delta$, we require $\prob(\delta > \delta_0)>0$ for some constant $\delta_0\in(0,1)$. To model multiple pieces of snippets per trajectory, one can introduce multiple reference time points, one for each piece, i.e., for the $j$th piece within a single trajectory, there is a reference time $R^{(j)}$. Without loss of generality, we assume $R^{(1)}<R^{(2)}<\cdots < R^{(S)}$, where the (potentially random) quantity $S\geq 1$ denotes the number of snippets per trajectory. Then our theories are still valid if the condition A.1 holds for the first and last pieces, i.e. for the pieces indexed by the reference time points $R^{(1)}$ and $R^{(S)}$.}
\end{remark}

\section{Numerical Studies}\label{sec:simu}

\subsection{Computational Details}\label{subsec:computation}

To compute the estimator in \eqref{eq:mean-estimator}, one needs to determine a set of basis functions. We recommend the Fourier basis, for it is often computationally stabler than Legendre polynomials and other polynomial bases. To handle nonperiodic data, we incorporate the technique of Fourier extension described in Appendix A, and thus need to select an additional tuning parameter, the extension margin $\zeta$.  Through extensive numerical experiments, we found that the results are often not sensitive to $\zeta$ when it is not too large and too small. As a rule-of-thumb, we recommend $\zeta$ to be one tenth of the span of the study. If computational capacity allows, a data-driven value for $\zeta$ can also be selected via cross-validation. 

To select the other two tuning parameters $q$ and $\rho$, we adopt a $K$-fold cross-validation procedure with $K=5$, as follows. Let $\Xi$ and $\Theta$ be sets of candidate values for $q$ and $\rho$, respectively. We choose a pair $(q,\rho)\in \Xi\times \Theta$ that minimizes the validation error $$\mathrm{CV}(q,\rho)=\sum_{k=1}^K\sum_{i \in \mathcal{P}_k}\sum_{j=1}^{m_i}(Y_{ij}-\hat{\mu}_{-k}^{q,\rho}(T_{ij}))^2,$$
where $\mathcal{P}_1,\ldots,\mathcal{P}_K$ form a roughly even partition of $\{1,\ldots,n\}$, and $\hat{\mu}_{-k}^{q,\rho}$ is the estimator for $\mu_0$ with $q$ basis functions and the penalization parameter $\rho$ and only using subjects with indices in $\{1,\ldots,n\}\backslash\mathcal{P}_k$.

To compute the covariance estimator in \eqref{Mrep}, we shall choose a BSC family $\mathcal{C}$, and for our simulation studies we adopt the family exhibited in Example \ref{ex:3} with large values of $M_1=100$ and $M_2=100$, since it is a large family that allows us to reduce model bias while identify the covariance function. The matrix $\mathbf{C}$ in \eqref{Mrep} is positive definite, which renders the optimization difficult. We tackle the issue of positive definiteness via  geometric Newton method by realizing that the space of symmetric positive definite matrices is a Riemannian manifold when it is endowed with the easy-to-compute Log-Cholesky metric \citep{Lin2019+a}; see Appendix B for details. 
\lin{In contrast,  \cite{Delaigle2019} reparameterized $\mathbf{C}$ by its Cholesky factor $\mathbf{B}$, which is a lower triangular matrix satisfying $\mathbf{C}=\mathbf{B}\mathbf{B}^\top$, and then turned it into a nonconstrained optimization problem. However, this approach is observed to suffer from numerical instability, as the Cholesky decomposition $\mathbf{C}=\mathbf{B}\mathbf{B}^\top$ is not unique.}

For the other two parameters $p$ and $\lambda$, we adopt the following selection procedure. Let $\Xi$ and $\Theta$ be sets of candidate values for $p$ and $\lambda$, respectively. We choose a pair $(p,\lambda)\in \Xi\times \Theta$ that minimizes the error 
\begin{equation}\label{eq:cov-cv}
\mathrm{Err}(p,\lambda)=\sum_{\stackrel{1\leq j,l\leq G}{|t_j-t_l|\leq \hat\delta}}\{\check{\gamma}(t_j,t_l)-\hat{\gamma}^{p,\lambda}(t_j,t_l)\}^2,
\end{equation}
where $t_1,\ldots,t_G$ are $G$ equally spaced points on the domain $\tdomain$ for some $G>0$, $\hat\delta=\max\{|T_{ij}-T_{il}|: 1\leq i\leq n, 1\leq j,l\leq m_i\}$ is an estimate for $\delta$, $\check{\gamma}$ is a pilot estimate for $\gamma_0$ on the diagonal region $\tdomain_{\hat\delta}$ and can be computed by PACE \citep{Yao2005a,fdapace}, and $\hat{\gamma}^{p,\lambda}$ is the estimator with $p$ basis functions and the penalization parameter $\lambda$. Unlike the selection of $q$ and $\rho$ for estimating the mean function, we use \eqref{eq:cov-cv} instead of the cross-validation error that would be computed from the raw observations $\Gamma_{ijl}=\{Y_{ij}-\hat\mu(T_{ij})\}\{Y_{il}-\hat\mu(T_{il})\}$. This is because the raw observations are too noisy and often result in substantial  variability in the cross-validation procedure. In contrast, by utilizing the PACE estimate, we not only denoise the raw observations but also better leverage the information available in the diagonal region through the equally spaced grid $(t_1,\ldots,t_G)$. In \cite{Delaigle2019}, the pilot estimator $\check{\gamma}$ is used to directly estimate the basis coefficients, while we use it for selecting tuning parameters. The simulation studies in the next subsection demonstrate that our strategy is favored in most cases. 
\subsection{Monte Carlo Simulations}

We now illustrate the numerical performance of the proposed approach using the Fourier basis. 
For the mean function, we consider two scenarios, $\mu_1(t) = \sum_{k=1}^9 (-1)^k1.2^{-k}\phi_k(t)$ and $\mu_2(t)=2t$. The former is a periodic function while the latter is nonperiodic. For the covariance function, we consider the following cases:
\begin{itemize}
	\item[I.] the periodic covariance function $\gamma_1(s,t)=(\phi_1(s),\ldots,\phi_5(s))\cdot \mathbf{C} \cdot (\phi_1(t),\ldots,\phi_5(t))^\tp$, where $\mathbf C$ is a $5\times 5$ matrix with $\textbf{C}=\{c_{kl}\}$ and $c_{kl}=2^{-|k-l|-5/2}$ if $k\neq l$ and $1.5^{1-k}$ if $k=l$;
	\item[II.]\lin{the nonperiodic and nonsmooth covariance function $\gamma_2(s,t)$ that is determined by the correlation function $e^{-|s-t|^2}$ and the variance function $v(t)=\{1+\int_0^t (1+\lfloor4.5x\rfloor)\diffop x\}/\sqrt{2}$, \[
	\gamma_2(s,t)=\sqrt{v(s)v(t)/2}e^{-|s-t|^2},
	\]
	where $\lfloor4.5x\rfloor$ denotes the integer part of $4.5x$;}
	\item[III.] \lin{the periodic covariance function $\gamma_3$ that is the same as $\gamma_1$ except that the dimension of $\mathbf C$ is increased to $30\times 30$;}
	\item[IV.] \lin{the one-rank covariance function $\gamma_4(s,t)=0.4\varphi(s)\varphi(t)$ with $\varphi(t)=0.3f_{0.3,0.05}(t)+0.7f_{0.7,0.05}(t)$, where $f_{a,b}$ is the probability density of the normal distribution with mean $a$ and standard deviation $b$.}
\end{itemize}
\lin{Covariance functions in the cases I, III and IV fall into the family $\mathcal C$ we chose in Section \ref{subsec:computation}, while the one in the case II is not, since the function $\gamma_2$ is nonsmooth and falls outside our chosen family $\mathcal{C}$. The covariance functions in the last two cases, $\gamma_3$ and $\gamma_4$, although have different ranks, require a large number of Fourier basis functions for a good approximation. All of the last three covariance functions represent challenging cases for our approach.}

In the evaluation of the performance for the mean function, the covariance function is fixed to be $\gamma_1$, while the mean function is fixed to be $\mu_1$ in the evaluation of the covariance function. This strategy avoids the bias  from covariance to influence the estimation of the mean function, and vice versa. 

The estimation quality is measured by the empirical mean integrated squared error (MISE) based on $N=100$ independent simulation replicates. For the mean estimator $\hat\mu$, the MISE is defined by 
\[\text{MISE}=\frac{1}{N}\sum_{k=1}^{N}\int\{\hat\mu_{k}(t)-\mu_0(t)\}^2\diffop t, \]
and for the covariance estimator $\hat\gamma$, it is defined by  
\[\text{MISE}=\frac{1}{N}\sum_{k=1}^{N}\iint\{\hat\gamma_{k}(s,t)-\gamma_0(s,t)\}^2\diffop s\diffop t,\]
where $\hat\mu_k$ and $\hat\gamma_k$ are estimators in the $k$th simulation replicate. The tuning parameters $q$, $p$, $\rho$, $\lambda$ and the extension margin (see Appendix A) are selected by the procedures described in Section \ref{subsec:computation}.

In all replicates, the reference time $R_i$ are sampled from a uniform distribution on $[\delta/2,1-\delta/2]$. The number of observations $m_{i}$ are independent and follow the distribution $2+\text{Poisson(3)}$.  The measurement noise variables $\varepsilon_{ij}$  are i.i.d. sampled from a Gaussian  distribution with mean zero and variance $\sigma^{2}$, where the noise level $\sigma^2$ is set to make the signal-to-noise ratio $\expect \|X-\mu_0\|_{\ltwo}^{2}/\sigma^{2}=4$. We consider three sample sizes, $n=50, 150, 500$, and two different values of $\delta$, $\delta = 0.25, 0.75$,  representing short snippets and long snippets, respectively. 

 The results are summarized in Table \ref{tab:mean} and \ref{tab:cov} for the mean and covariance functions, respectively. As expected, in all settings, the performance of the estimators improves as $n$ or $\delta$  gets larger. We also observe that if the function to be estimated is periodic, like $\mu_1$ and $\gamma_1$,  it is better not to use Fourier extensions. However, if the function is nonperiodic, like $\mu_2$ and $\gamma_2$, then the estimators with Fourier extension considerably outperform those without the extension, especially for the mean function or when the sample size is large. This demonstrates that Fourier extension is a rather effective technique that complements the Fourier basis for nonparametric smoothing, and might deserve further investigation in the framework of statistical methodology. 
 
 As a comparison, we follow the description of \cite{Delaigle2019} to implement their estimator for the covariance function, where the pilot estimate is obtained by local linear smoothing in the region $\tdomain_{\delta}$.
 \lin{We also compare our method with the estimator $\hat\gamma_{DP}$ of \cite{Descary2019recovering} and and the one $\hat\gamma_{ZC}$ of \cite{Zhang2018}. These two estimates are obtained by extrapolating the raw covariance function in the region $\tdomain_{\delta}$ using matrix completion techniques. In our implementation of these estimators, we replace the raw covariance function in $\tdomain_{\delta}$ with the smooth pilot estimate obtained by local linear smoothing, as we find that this way substantially improves the quality of $\hat\gamma_{DP}$ and $\hat\gamma_{ZC}$. In particular, the obtained  matrix completion estimators are also relatively smooth, as demonstrated by the estimators $\hat\gamma_{DP}$ and $\hat\gamma_{ZC}$ in Figures \ref{fig:cov1-d25-n150}--\ref{fig:cov4-d25-n150}. This contrasts with the implementation of these estimators in \cite{Delaigle2019}, as shown in their Figure 4, in which the raw covariance is used and the yielded estimators exhibit significant roughness and variability. For both $\hat\gamma_{DP}$ and $\hat\gamma_{ZC}$, we set the resolution (the number of grid) to $d=51$. We then follow the description of  \cite{Descary2019recovering} and \cite{Zhang2018} to set other tuning parameters required to compute $\hat\gamma_{DP}$ and $\hat\gamma_{ZC}$, except the rank. For $\hat\gamma_{ZC}$, although \cite{Zhang2018} suggested a random subsampling cross-validation to determine the rank, no details are given for the number of random splits. As of  \cite{Delaigle2019}, we set the rank of $\hat\gamma_{ZC}$ to the true rank for the cases I, III and IV, and to the maximum allowed rank for the case II. Therefore, the $\hat\gamma_{ZC}$ we computed here is an oracle estimator, as it uses the practically inaccessible true rank. For $\hat\gamma_{DP}$, although \cite{Descary2019recovering} provided a graphical strategy to select the rank, it is not suitable for extensive simulation studies. To determine the rank for $\hat\gamma_{DP}$, we apply their graphical strategy to five random samples for each of the simulation settings. This yields five selected ranks, and we use the median  $r_{DP}$ of them for all simulated replicates. However, we observe that the estimator $\hat\gamma_{DP}$ is sensitive to the rank, and the value $r_{DP}$ might not be suitable for all replicates. To compensate this, we also compute the estimators $\hat\gamma_{DP,-1}$ corresponding to the rank  $r_{DP}-1$ (if  $r_{DP}>1$) and $\hat\gamma_{DP,+1}$ corresponding to the rank  $r_{DP}+1$, respectively, and set the MISE of $\hat\gamma_{DP}$ to the minimum of the MISE of $\hat\gamma_{DP}$, $\hat\gamma_{DP,-1}$ and $\hat\gamma_{DP,+1}$.}
 
\lin{Table \ref{tab:cov} summarizes the numerical results, while Figures \ref{fig:cov1-d25-n150}--\ref{fig:cov4-d25-n150}, depicting the estimated covariance function corresponding to the median MISE for each method when $\delta=0.25$ and $n=150$, provide visual comparison of the four methods. Based on the results, we make the following observations for each case.
\begin{itemize}
	\item[I.] The proposed method substantially outperforms the others in all settings, which is also evidenced by Figure \ref{fig:cov1-d25-n150}. This hardly surprising, since the true covariance function $\gamma_1$ favors our method.
	\item[II.] All methods have similar performance, which is also supported by Figure \ref{fig:cov2-d25-n150}. Our method is slightly favored when the sample size is small, while the matrix completion methods have slightly better performance when the sample size is large.
	\item[III.] Our method has superior performance in all settings. This could attribute to the fact that the true covariance function is still representable by a finite number of Fourier basis functions, although the number of Fourier basis functions in such representation is large.
	\item[IV.] Since this is a rather challenging case, all methods have close but deteriorated performance, and there is no clear winner. Figure \ref{fig:cov4-d25-n150} suggests that all methods are able to capture the four modes of the true covariance function, but none of them can get close to the magnitude of these modes.
\end{itemize}}
In summary, the performance of our method dominates the one of the DHHK estimator which also uses Fourier basis expansion. \lin{As previously mentioned, DHHK uses Fourier basis expansion as a purely numerical device to extrapolate a pilot estimate on the diagonal region obtained by a smoothing method, while ours directly estimates the basis coefficients from the data based on penalized least squares. We find that the approach of \cite{Delaigle2019} tends to produce excessive variability, especially in the off-diagonal region. In addition, these two methods adopt different strategies to handle nonperiodicity. Specifically, DHHK adds the identity function to the basis system, while ours uses the numerically well established Fourier extension technique. From the case II and Figure \ref{fig:cov2-d25-n150}, our strategy seems more promising. These primary differences between DHHK and our method might explain why the latter has superior performance relative to the former in almost all cases and settings. Comparing to the matrix completion methods, the proposed method seems preferred when the sample size is small or a small number of basis functions are sufficient for a good approximation to the true covariance function. In other scenarios, like the cases II and IV with a large sample size, the performance of our estimator is nearly as good as the matrix completion methods.}
 
\begin{table}[t]
	\caption{MISE of the proposed estimator for the mean function. The MISE and their Monte Carlo standard errors in this table are scaled by 10 for a clean presentation. FE refers to Fourier basis with Fourier extension, while NFE refers the basis without the extension.\label{tab:mean}}
	\begin{center}
		\renewcommand{\arraystretch}{1.2}
		\begin{tabular}{|c|c|c|c|c|c|c|c|}
			\hline
			&$n$ &\multicolumn{2}{|c|}{$50$} &\multicolumn{2}{|c|}{$150$} & \multicolumn{2}{|c|}{$450$} \tabularnewline
			\cline{2-8}
			&$\delta$&0.25&0.75&0.25&0.75&0.25&0.75\\
			\hline
			\hline
			\multirow{2}{*}{$\mu_{1}$} &FE&3.27(3.19) &1.94(1.27) &1.10(0.59) &0.68(0.44)&0.35(0.20)&0.25(0.16)   \\
			\cline{2-8}
			&NFE&3.18(3.09)&1.89(1.26)&1.05(0.57)&0.65(0.41)&0.34(0.19)&0.23(0.16)\\
			\hline
			\multirow{2}{*}{$\mu_{2}$}   &FE&2.74(3.20)&1.53(0.95)&0.99(0.73)&0.66(0.50)&0.42(0.28)&0.27(0.20)\\
			\cline{2-8}
			&NFE&3.20(3.03)&2.00(0.75)&1.61(0.60)&1.20(0.37)&0.89(0.31)&0.77(0.21)\\
			\hline
		\end{tabular}
	\end{center}
\end{table}
\begin{table}
	\caption{MISE of the proposed estimator for the covariance function. The MISE and their Monte Carlo standard errors in this table are scaled by 10 for a clean presentation. LWZ refers to the proposed Fourier basis expansion approach with Fourier extension, while LWZNE refers the one without the extension. DHHK refers to the estimators of \cite{Delaigle2019}. DP refers to the estimators of \cite{Descary2019recovering}. ZC refers to the estimators of \cite{Zhang2018}.\label{tab:cov}}
	\begin{center}
		\renewcommand{\arraystretch}{1.2}
		\resizebox{\columnwidth}{!}{%
			\begin{tabular}{|c|c|c|c|c|c|c|c|}
				\hline 
				& $n$ & $\delta$  & LWZ & LWZNE & DHHK & DP & ZC \tabularnewline
				\hline 
				\multirow{6}{*}{$\gamma_1$} & \multirow{2}{*}{50} & 0.25 & 6.40(3.90) & 6.32(3.85) & 10.65(4.10) & 9.64(5.12) & 8.22(4.66)\tabularnewline
				\cline{3-8}
				& & 0.75 & 6.04(3.17) & 5.81(3.91) & 9.46(3.59) & 7.58(4.89) & 8.11(3.93)\tabularnewline
				\cline{2-8}
				& \multirow{2}{*}{150} & 0.25 & 3.60(2.20) & 3.44(2.05) & 8.94(3.13) & 8.53(5.17) & 7.71(4.21)\tabularnewline
				\cline{3-8}
				& & 0.75 & 3.38(2.01) & 3.11(2.25) & 5.91(2.99) & 4.42(3.36) & 5.64(3.18)\tabularnewline
				\cline{2-8}
				& \multirow{2}{*}{450} & 0.25 & 2.22(1.21) & 2.09(1.12) & 7.61(3.06) & 7.67(4.25) & 6.68(3.01)\tabularnewline
				\cline{3-8}
				& & 0.75 & 1.52(1.39) & 1.42(1.38) & 3.53(2.65) & 2.31(1.79) & 3.11(2.18)\tabularnewline
				\cline{1-8}
				\multirow{6}{*}{$\gamma_2$} & \multirow{2}{*}{50} & 0.25 & 5.40(3.50) & 5.51(3.73) & 7.88(4.89) & 6.35(3.49) & 5.83(2.82)\tabularnewline
				\cline{3-8}
				& & 0.75 & 4.09(3.03) & 4.16(3.13) & 4.90(2.93) & 4.41(3.10) & 4.13(2.75)\tabularnewline
				\cline{2-8}
				& \multirow{2}{*}{150} & 0.25 & 2.65(1.47) & 2.74(1.43) & 4.32(4.15) & 3.73(4.40) & 3.07(3.33)\tabularnewline
				\cline{3-8}
				& & 0.75 & 2.35(2.02) & 2.41(2.59) & 2.48(2.18) & 2.33(2.78) & 2.35(2.70)\tabularnewline
				\cline{2-8}
				& \multirow{2}{*}{450} & 0.25 & 1.56(1.02) & 1.68(1.07) & 2.47(2.16) & 2.00(1.64) & 1.58(1.03)\tabularnewline
				\cline{3-8}
				& & 0.75 & 1.16(0.93) & 1.29(0.94) & 1.12(1.17) & 0.98(0.91) & 1.08(1.00)\tabularnewline
				\cline{1-8}
				\multirow{6}{*}{$\gamma_3$} & \multirow{2}{*}{50} & 0.25 & 7.91(5.62) & 7.85(5.78) & 12.81(7.72) & 11.55(6.88) & 10.14(4.98)\tabularnewline
				\cline{3-8}
				& & 0.75 & 6.85(3.74) & 6.53(3.45) & 11.51(8.13) & 8.61(4.61) & 7.94(3.66)\tabularnewline
				\cline{2-8}
				& \multirow{2}{*}{150} & 0.25 & 5.25(2.88) & 4.93(2.82) & 9.92(5.85) & 9.26(5.56) & 8.37(3.83)\tabularnewline
				\cline{3-8}
				& & 0.75 & 4.81(3.45) & 4.42(3.61) & 7.61(5.31) & 6.14(3.82) & 6.09(3.81)\tabularnewline
				\cline{2-8}
				& \multirow{2}{*}{450} & 0.25 & 3.50(2.13) & 3.22(1.88) & 8.46(4.52) & 8.14(4.22) & 8.06(4.63)\tabularnewline
				\cline{3-8}
				& & 0.75 & 2.18(0.98) & 1.79(0.73) & 4.52(3.02) & 2.88(1.23) & 3.35(1.50)\tabularnewline
				\cline{1-8}
				\multirow{6}{*}{$\gamma_4$} & \multirow{2}{*}{50} & 0.25 & 10.64(5.27) & 10.54(5.12) & 13.28(7.81) & 10.13(5.77) & 12.47(6.85)\tabularnewline
				\cline{3-8}
				& & 0.75 & 7.50(2.04) & 7.45(2.07) & 8.14(2.24) & 7.84(2.04) & 8.08(2.24)\tabularnewline
				\cline{2-8}
				& \multirow{2}{*}{150} & 0.25 & 8.57(2.15) & 8.51(2.19) & 8.97(3.72) & 8.40(4.88) & 9.19(5.88)\tabularnewline
				\cline{3-8}
				& & 0.75 & 6.93(1.36) & 6.89(1.41) & 7.17(1.33) & 7.23(1.23) & 6.67(1.30)\tabularnewline
				\cline{2-8}
				& \multirow{2}{*}{450} & 0.25 & 7.05(1.20) & 6.97(1.16) & 8.19(2.92) & 7.25(4.53) & 7.90(3.82)\tabularnewline
				\cline{3-8}
				& & 0.75 & 6.47(0.67) & 6.41(0.68) & 6.84(0.94) & 6.24(0.93) & 6.26(0.97)\tabularnewline
				\cline{1-8}
			\end{tabular}%
		}
	\end{center}
\end{table}

\begin{figure}
	\includegraphics[scale=0.5]{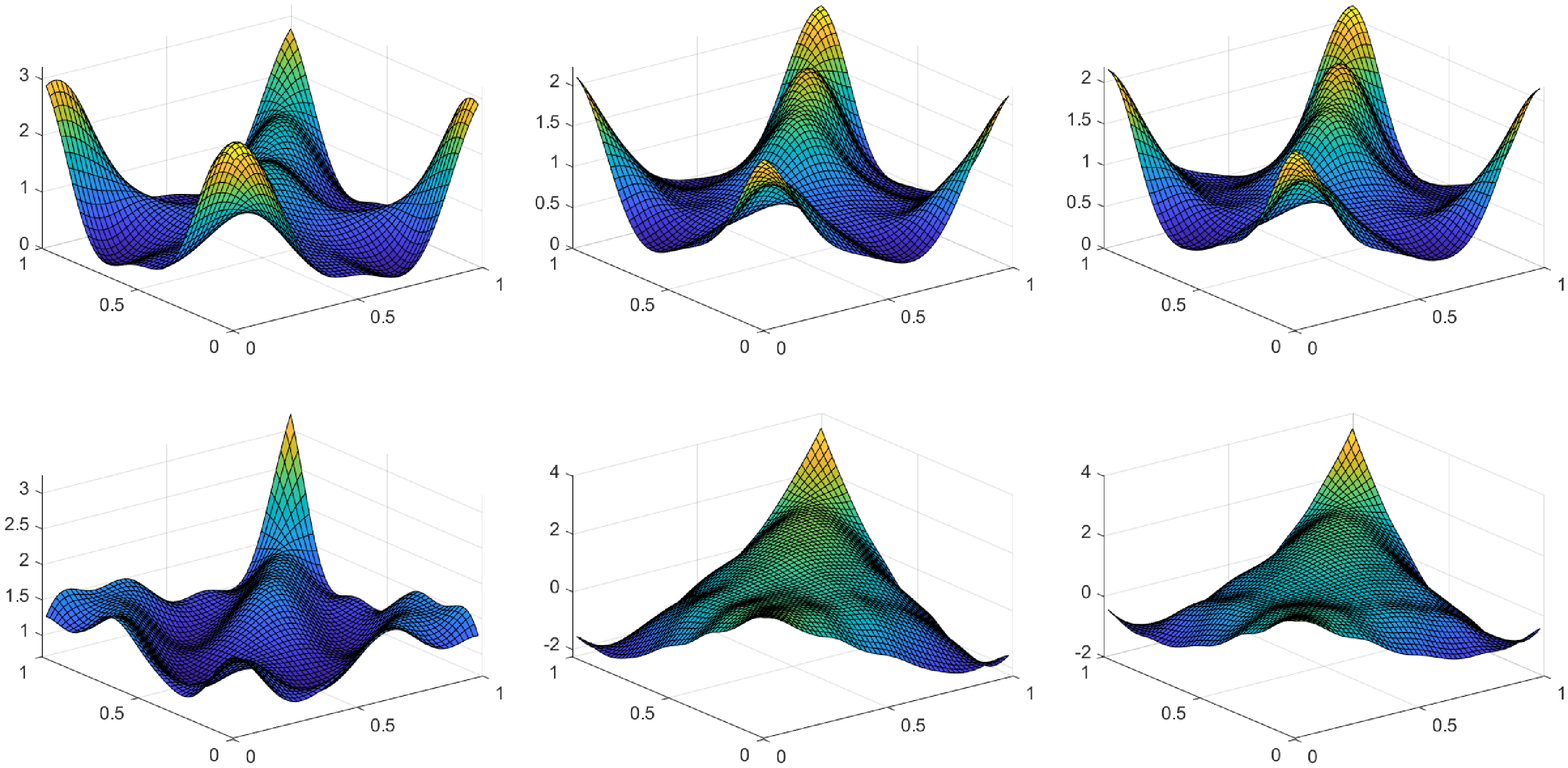}
	\caption{Estimated covariance functions corresponding to the median MISE for $\gamma=\gamma_1$; top left, true covariance function; top middle, LWZ estimate; top right, LWZ estimate without Fourier extension; bottom left, DHHK estimate; bottom middle, DP estimate; bottom right, ZC estimate.\label{fig:cov1-d25-n150}}
\end{figure}

\begin{figure}
	\includegraphics[scale=0.5]{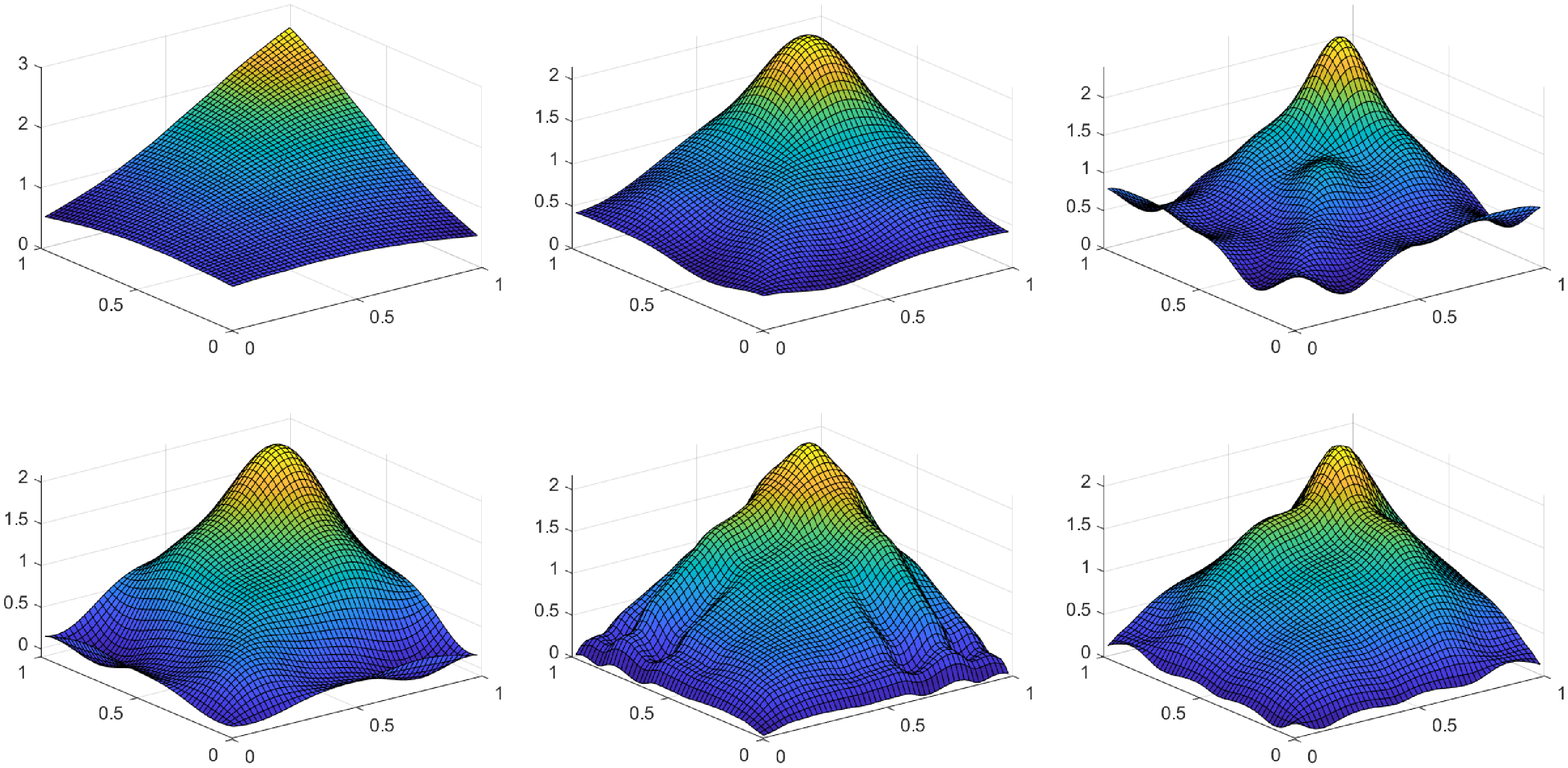}
	\caption{Estimated covariance functions corresponding to the median MISE for $\gamma=\gamma_2$; top left, true covariance function; top middle, LWZ estimate; top right, LWZ estimate without Fourier extension; bottom left, DHHK estimate; bottom middle, DP estimate; bottom right, ZC estimate.\label{fig:cov2-d25-n150}}
\end{figure}

\begin{figure}
	\includegraphics[scale=0.5]{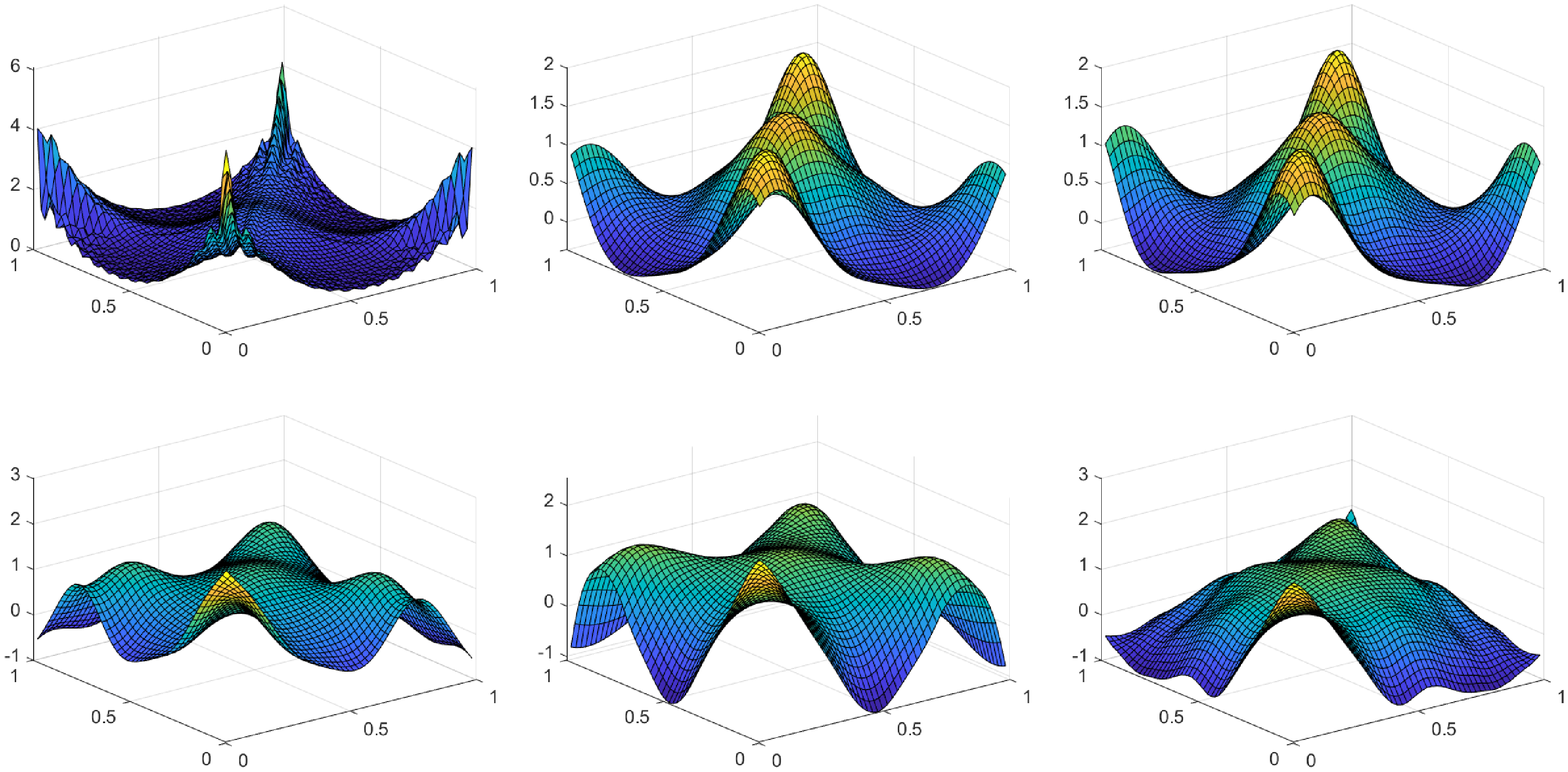}
	\caption{Estimated covariance functions corresponding to the median MISE for $\gamma=\gamma_3$; top left, true covariance function; top middle, LWZ estimate; top right, LWZ estimate without Fourier extension; bottom left, DHHK estimate; bottom middle, DP estimate; bottom right, ZC estimate.\label{fig:cov3-d25-n150}}
\end{figure}

\begin{figure}
	\includegraphics[scale=0.5]{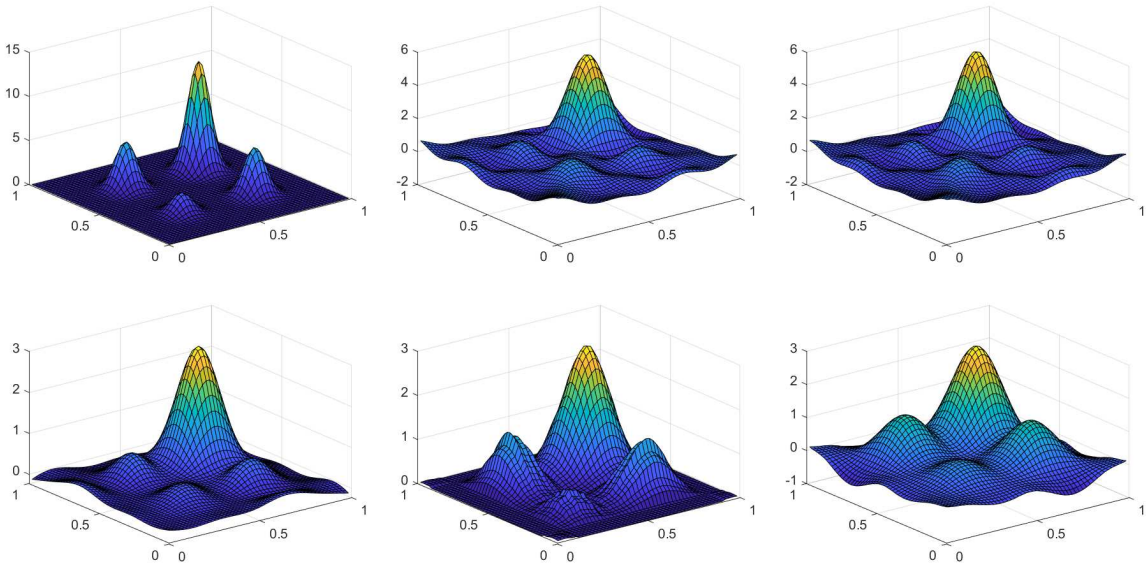}
	\caption{Estimated covariance functions corresponding to the median MISE for $\gamma=\gamma_4$; top left, true covariance function; top middle, LWZ estimate; top right, LWZ estimate without Fourier extension; bottom left, DHHK estimate; bottom middle, DP estimate; bottom right, ZC estimate.\label{fig:cov4-d25-n150}}
\end{figure}

\section{Applications}\label{sec:application}
\subsection{Spinal Bone Mineral Density}\label{subsec:sbmd}
In the study of \cite{BACHRACH1999}, 423 individuals with age ranging from 8 to 27, were examined for their longitudinal spinal bone mineral density. The bone density of each individual was irregularly recorded in four consecutive years, at most once for each year.  The data for each individual then lead to a functional snippet spanning at most 4 years.  In our study, individuals who have only one measurement are  excluded, since they do not carry information for the covariance structure. This results in  a total of 280 individuals who have at least two measurements and whose ages range from 8.8 to 26.2. 

 We are interested in the mean and covariance structure of the mineral density, the latter enabling us to derive the first few principal components. Figure \ref{fig:cov}(a) depicts  the empirical design of the covariance function,  underscoring the nature of these data as a collection of snippets: there is no data available to directly infer the off-diagonal region of the covariance structure. We also note that the design time points are  irregular. This feature renders techniques based on matrix completion less appropriate since they require a regular design for the measurement time points. In contrast, our method is able to  accommodate this irregularity. 
 
 The mineral density data and the estimated mean function are displayed in Figure \ref{fig:mean}(a) and \ref{fig:mean}(b), respectively. We observe that the mean density starts with a low level, rises rapidly before the age of 16 and then goes up relatively slowly to a peak at the age of around 20. This indicates  that the spinal bone mineral accumulates fast during adolescence, during which rapid physical growth and psychological changes occur, and then  remains at a stable high level in the early 20s. From Figure \ref{fig:mean}(a), we see that observations are relatively sparse from age 23 to age 26, especially near the boundary at age 26. Therefore, we suspect  that the upward trend around age 26 might be due to a boundary effect, in particular the marked rightmost point in Figure \ref{fig:mean}(a). To check this, we refit the data with this point excluded. The refitted mean curve is in  Figure \ref{fig:mean}(c), and indeed  the upward trend disappears. 
 
  The estimated covariance surface, after removing the rightmost point, is shown in Figure \ref{fig:cov}(b), which suggests larger variability of the data around the age of 17. It also indicates that the correlation of the longitudinal mineral density at different ages decays drastically as ages become more distant. \lin{Given the estimated covariance function in the entire domain, we are able to derive the first three principal components that are shown in Figure \ref{fig:bone-pc}. These principal components account for 69.6\%, 22.1\% and 3.6\% of the variance of the data, respectively. We see that the first three principal components can explain over 95\% of the total variance of the data. The first principal component, which explains nearly 70\% of the variation, shows that  the highest variation in the   bone density trajectories corresponds to  overall growth  that is  consistently either above or below  the mean  curve with the difference to the mean most prominent around age 15. Those with positive first principal component scores have bone densities consistently above the mean function with a surge at age 15, and vice versa for those with negative scores. The second principal component reflects the contrast of growth before and after age 17.5. Those with positive second scores have above average bone densities up to age 17.5 but then drop below the average afterwards.  
  The third principal component reflects the random fluctuation  of the bone growth.  }
 
 \begin{figure*}[t!]
    \centering
    \begin{subfigure}[t]{0.32\textwidth}
        \centering
        \includegraphics[scale=0.5]{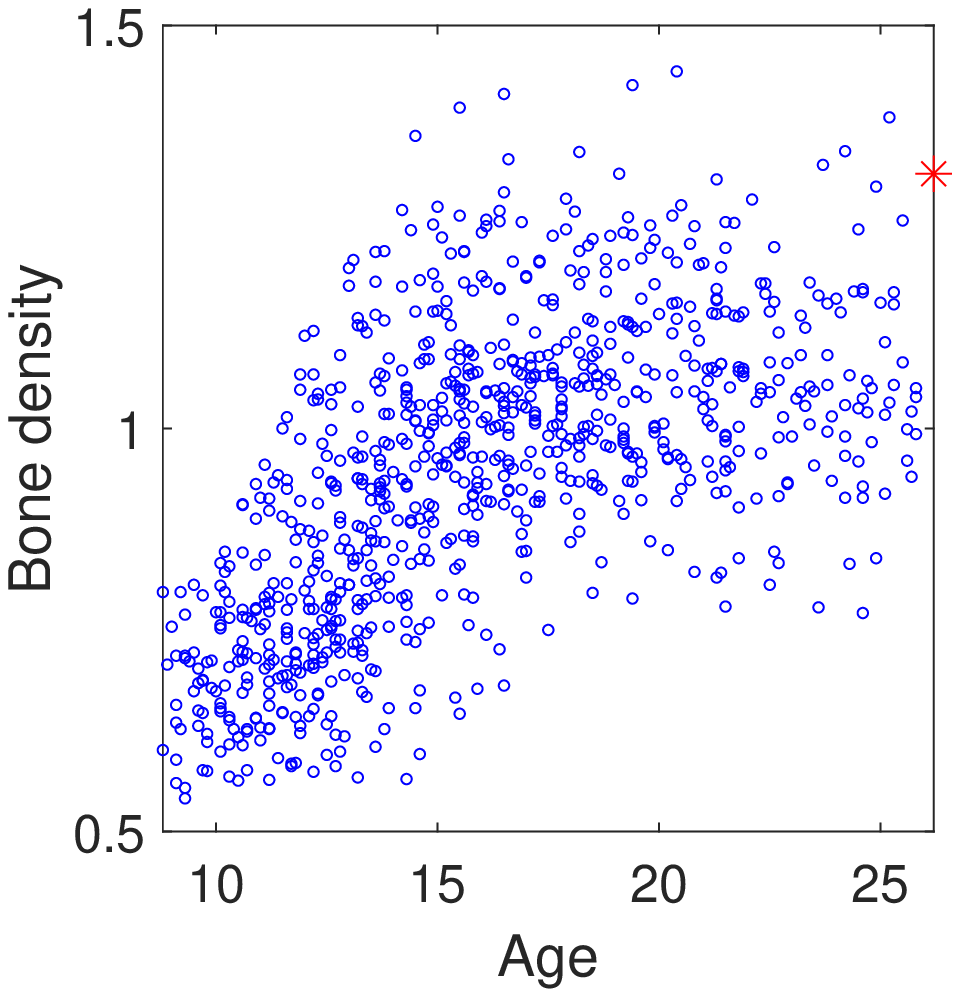}
        \caption{}
    \end{subfigure}%
    ~ 
    \begin{subfigure}[t]{0.32\textwidth}
        \centering
        \includegraphics[scale=0.5]{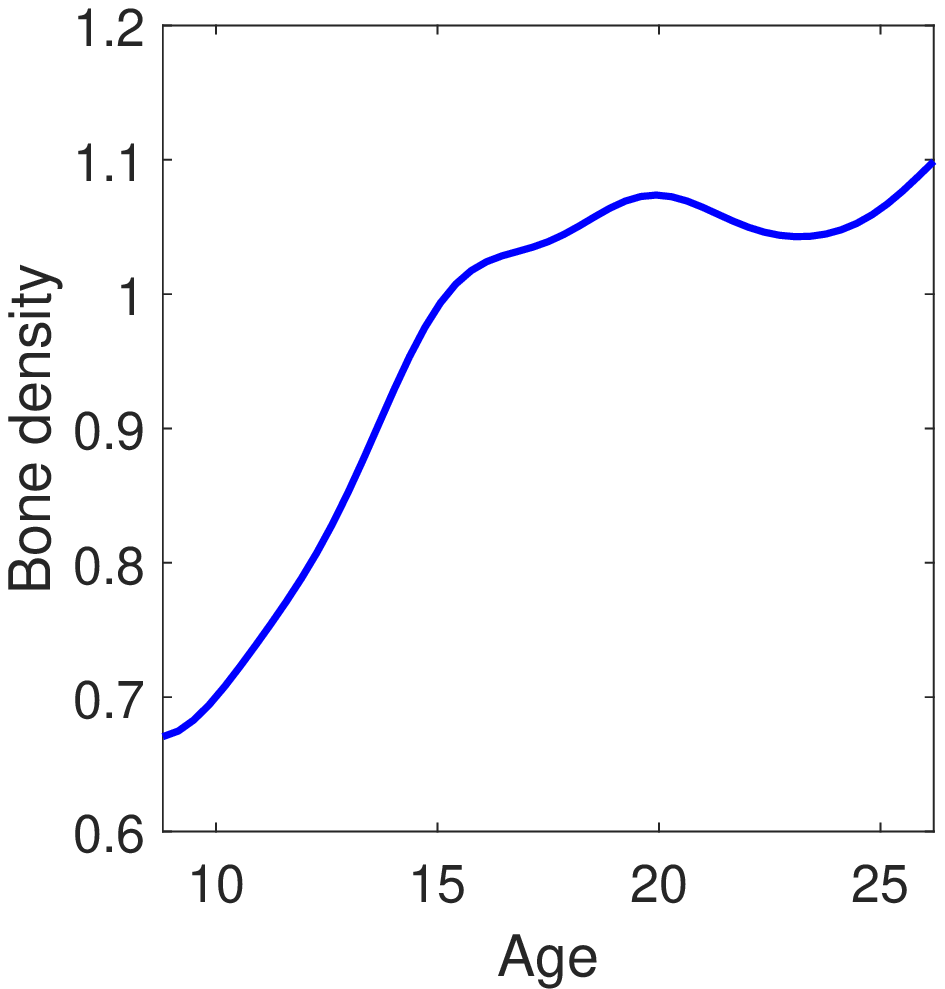}
        \caption{}
    \end{subfigure}
    ~
    \begin{subfigure}[t]{0.32\textwidth}
        \centering
        \includegraphics[scale=0.5]{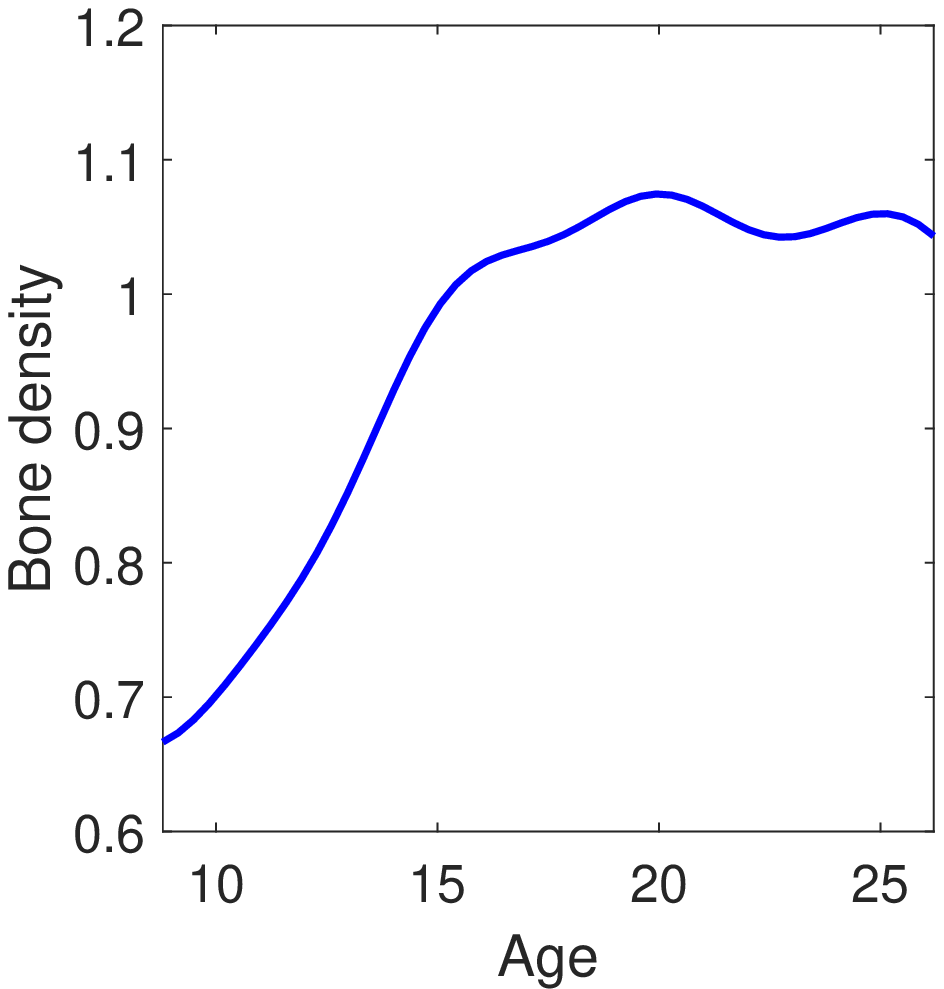}
        \caption{}
    \end{subfigure}
    \caption{(a) Spinal bone mineral density data. (b) Estimated mean function. (c) Estimated mean function  when the  rightmost point in the left panel is removed from the data.}\label{fig:mean}
\end{figure*}

\begin{figure*}[t!]
    \centering
    \begin{subfigure}[t]{0.47\textwidth}
        \centering
        \includegraphics[scale=0.6]{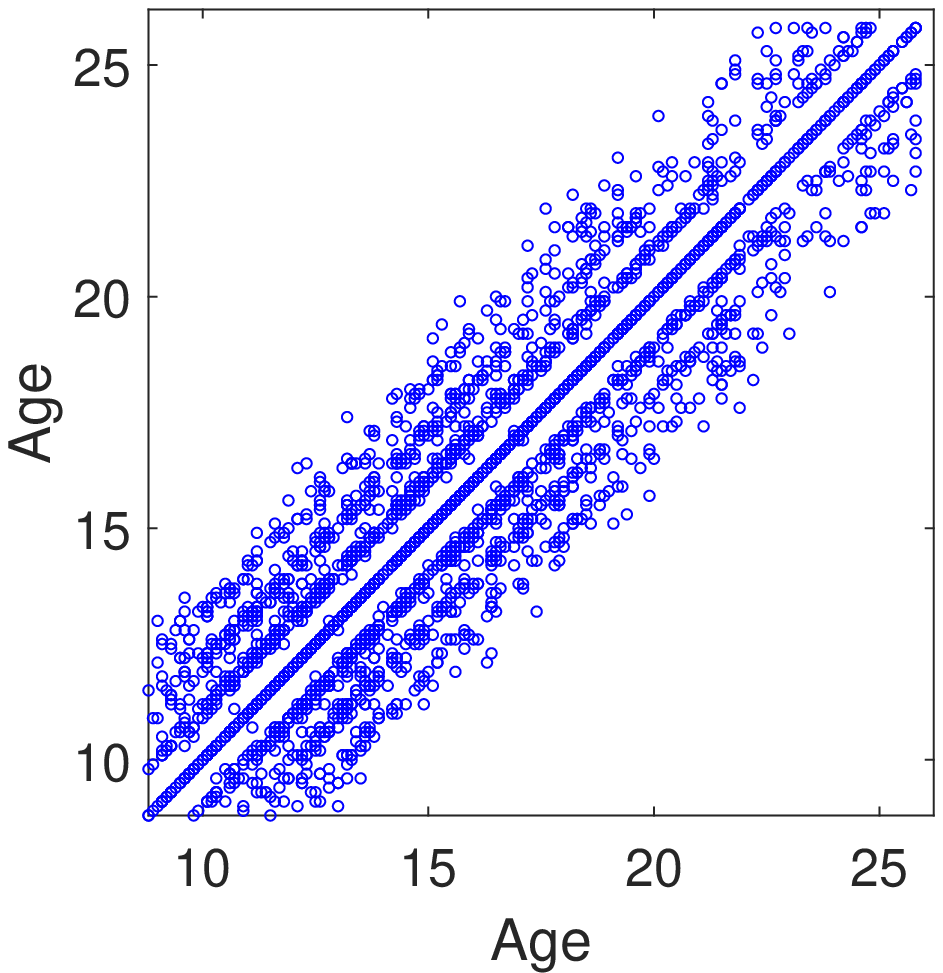}
        \caption{}
    \end{subfigure}%
    ~ 
    \begin{subfigure}[t]{0.53\textwidth}
        \centering
        \includegraphics[scale=0.6]{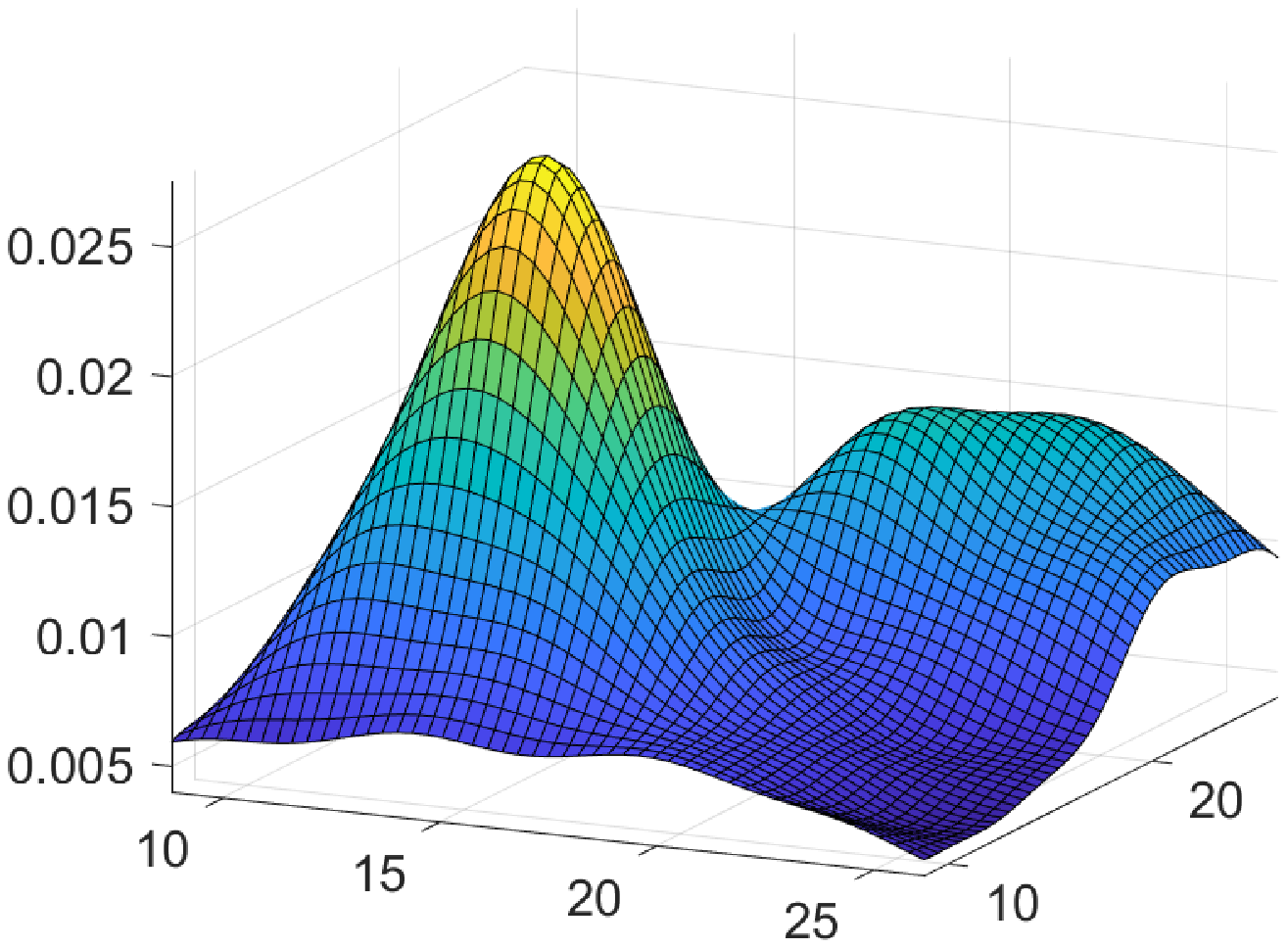}
        \caption{}
    \end{subfigure}
    \caption{(a) Empirical design plot of the covariance structure of  the spinal bone mineral density. (b) Estimated  covariance function of the  spinal bone mineral density, using the proposed method with Fourier basis and nonperiodic extension.}
\label{fig:cov}
\end{figure*}

 \begin{figure*}[t!]
    \centering
    \begin{subfigure}[t]{0.32\textwidth}
        \centering
        \includegraphics[scale=0.5]{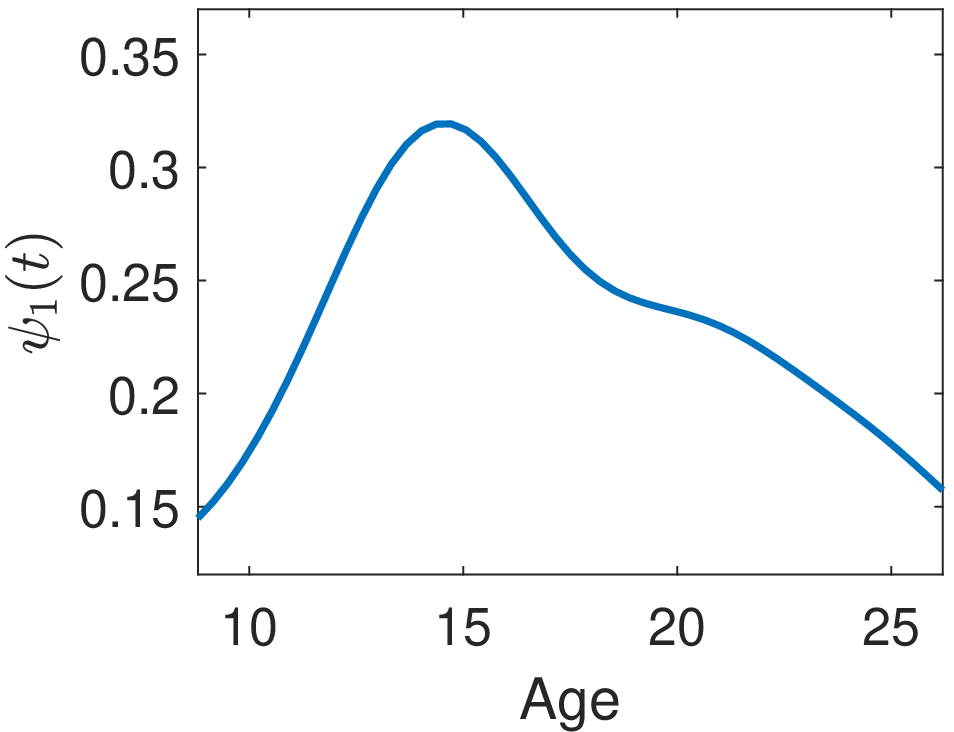}
        \caption{}
    \end{subfigure}%
    ~ 
    \begin{subfigure}[t]{0.32\textwidth}
        \centering
        \includegraphics[scale=0.5]{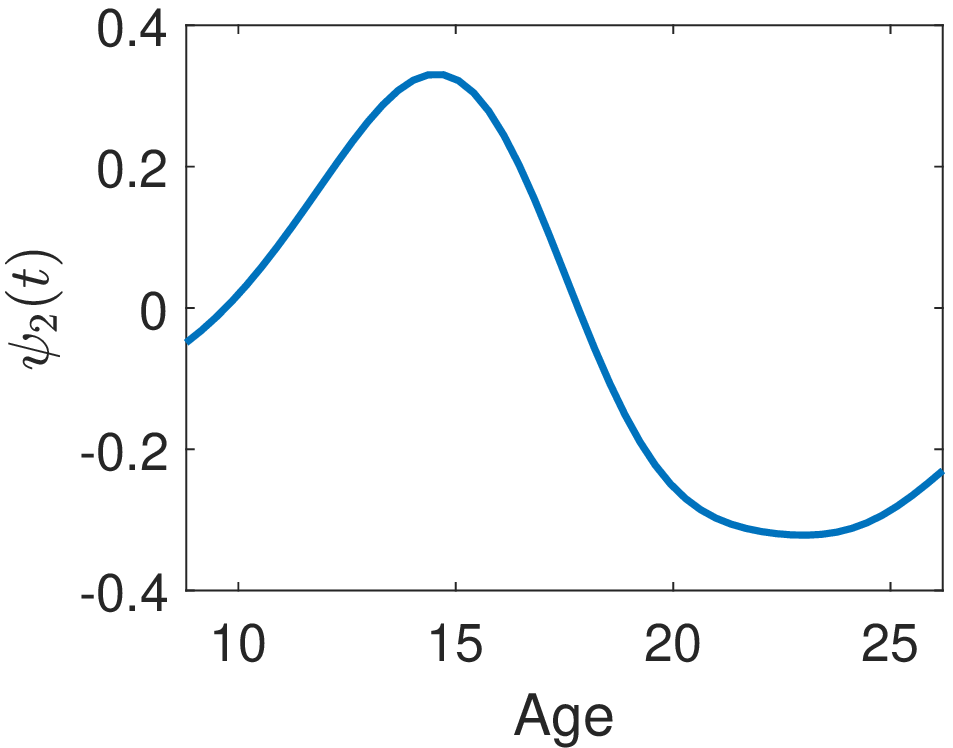}
        \caption{}
    \end{subfigure}
    ~
    \begin{subfigure}[t]{0.32\textwidth}
        \centering
        \includegraphics[scale=0.5]{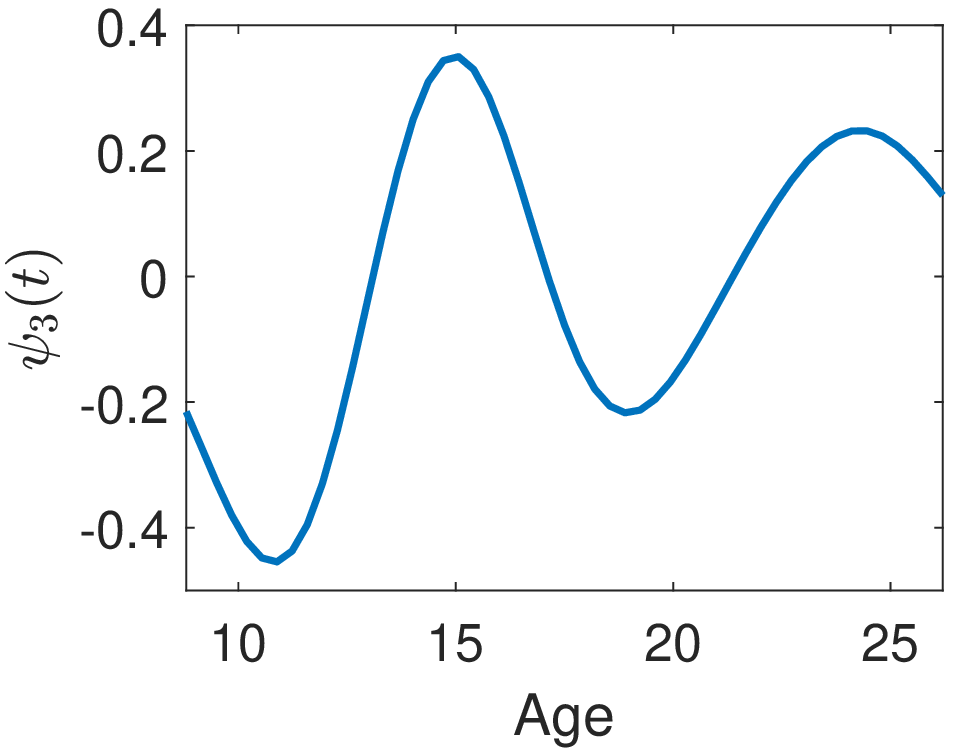}
        \caption{}
    \end{subfigure}
    \caption{The first three principal components of spinal bone density data}\label{fig:bone-pc}
\end{figure*}

\subsection{Systolic Blood Pressure}\label{subsec:sbp}
We apply our method to the study of age-associated changes in blood pressure of healthy men using an anonymous data from the Baltimore
Longitudinal Study of Aging (BLSA). In the study, 1590 healthy male volunteers (age 17--97) were scheduled to visit the Gerontology Research Center bi-annually. On each visit, systolic blood pressure (in mm Hg) was recorded along with other health related information.
Measurements taken within 2 years of death (49 visits in 44 men) are excluded to avoid potential bias due to diseases that might interfere with  blood pressure.
 The data have a sparse and irregular design as many visits were missed by participants or not on the schedule; see \cite{Pearson1997} for more details. Since there are very few data available beyond age 95, we restrict our focus to the age range 17--95. We also exclude subjects with only one visit. This results in a dataset of 1290 subjects, with approximate 7.5 visits per subject, shown in Figure \ref{fig:sbp-mean}(a).

From the design plot shown in Figure \ref{fig:sbp-cov}(a) for the covariance function, we see that there are no design points in the off-diagonal regions. Therefore, this is a dataset of functional snippets. Our estimated mean and covariance functions are shown in Figure \ref{fig:sbp-mean}(b) and Figure \ref{fig:sbp-cov}(b), respectively. We observe that the systolic blood pressure is relatively stable between 17 and 45, and then increases stiffly until the age 85. This is aligned with the discovery in \cite{Pearson1997}. The slight downward trend after 85 could be due to insufficient data around the boundary.  It could also be attributed, at least partly,  to the selection effect  \citep{Mueller1997, Wang1998}: people with lifespan over   85, termed  the oldest-old in aging research, are a selected group of subjects with favorable physical or genetic conditions.  

The estimated covariance function suggests an increasing trend of variability of systolic blood pressure along the age. Similar to the mean function, the slight downward trend after 85 could be due to the selection effect or lack of data. \lin{We also utilize the estimated covariance function in the entire domain to provide the first three principal components, shown in Figure \ref{fig:sbp-pc}. These principal components account for 72.3\%, 16.2\% and 6.07\% of the variance of the data, respectively.  The first principal component function has a similar shape as the mean function, reflecting that the largest variation of  systolic blood pressure is a vertical shift of the trajectory. This is a quite common phenomenon for functional data. The second principal component reflects  the contrast before and after age 68.  Those with positive second scores have higher  than average systolic blood pressure before age 68 but lower than average systolic blood pressure after age 68. The third principal components depicts the contrast before and after age 50. Those with positive third scores have higher than average systolic blood pressure before age 50, with a peak around age 40, but  lower than average blood pressure after age 50, probably due to change to a healthier life style.}  

\begin{figure*}[t!]
	\centering
	\begin{subfigure}[t]{0.47\textwidth}
		\centering
		\includegraphics[scale=0.6]{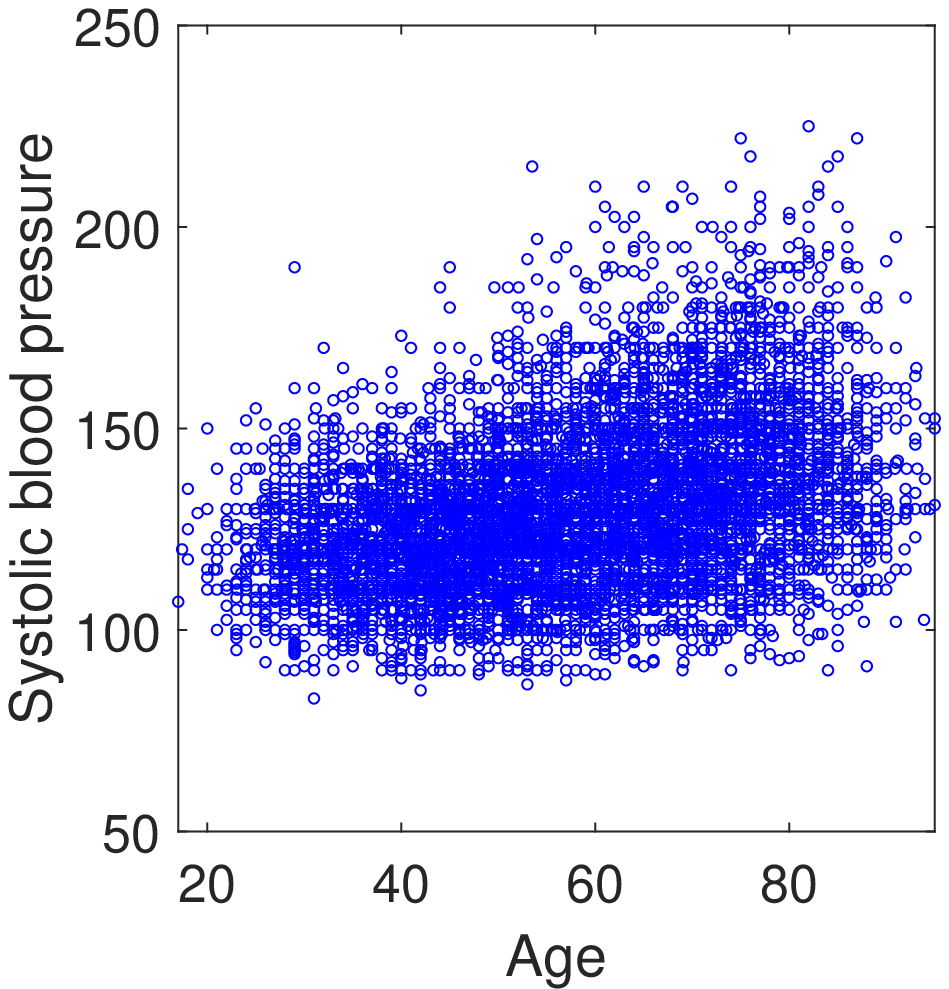}
		\caption{}
	\end{subfigure}%
	~ 
	\begin{subfigure}[t]{0.47\textwidth}
		\centering
		\includegraphics[scale=0.6]{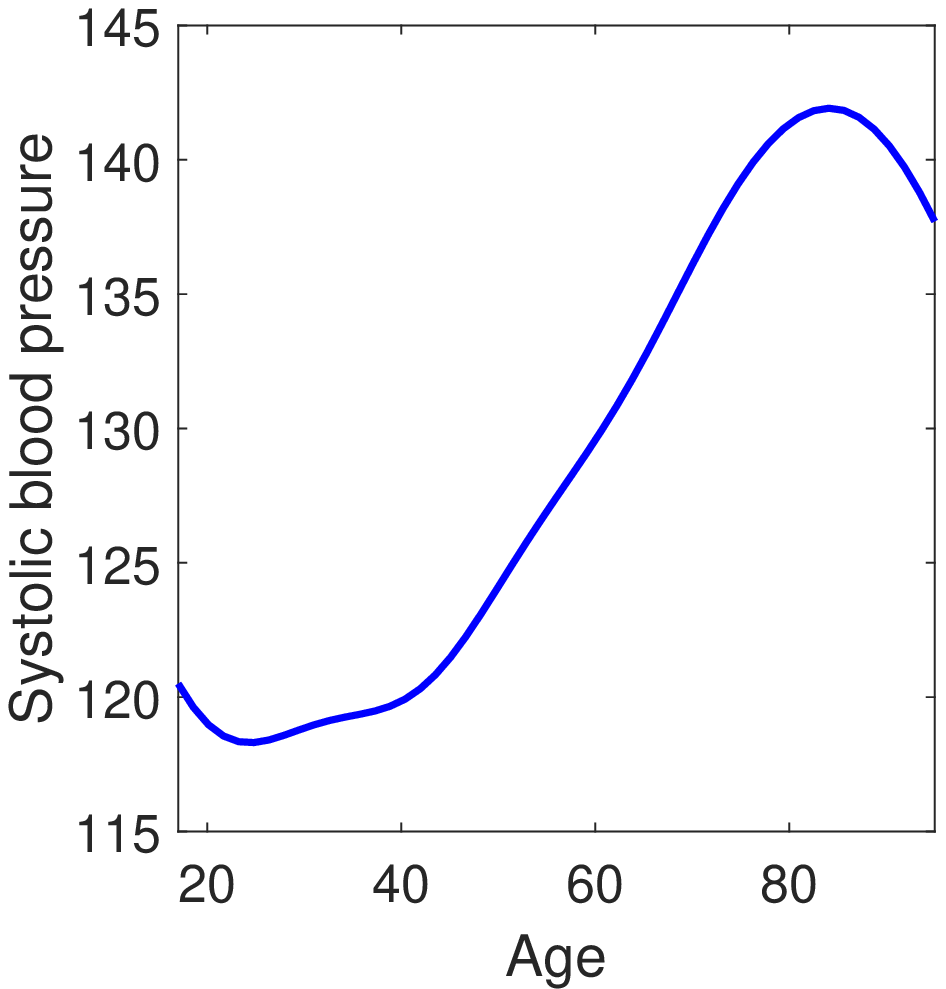}
		\caption{}
	\end{subfigure}

	\caption{(a)  Systolic blood pressure data. (b) Estimated mean function.}\label{fig:sbp-mean}
\end{figure*}

\begin{figure*}[t!]
	\centering
	\begin{subfigure}[t]{0.47\textwidth}
		\centering
		\includegraphics[scale=0.6]{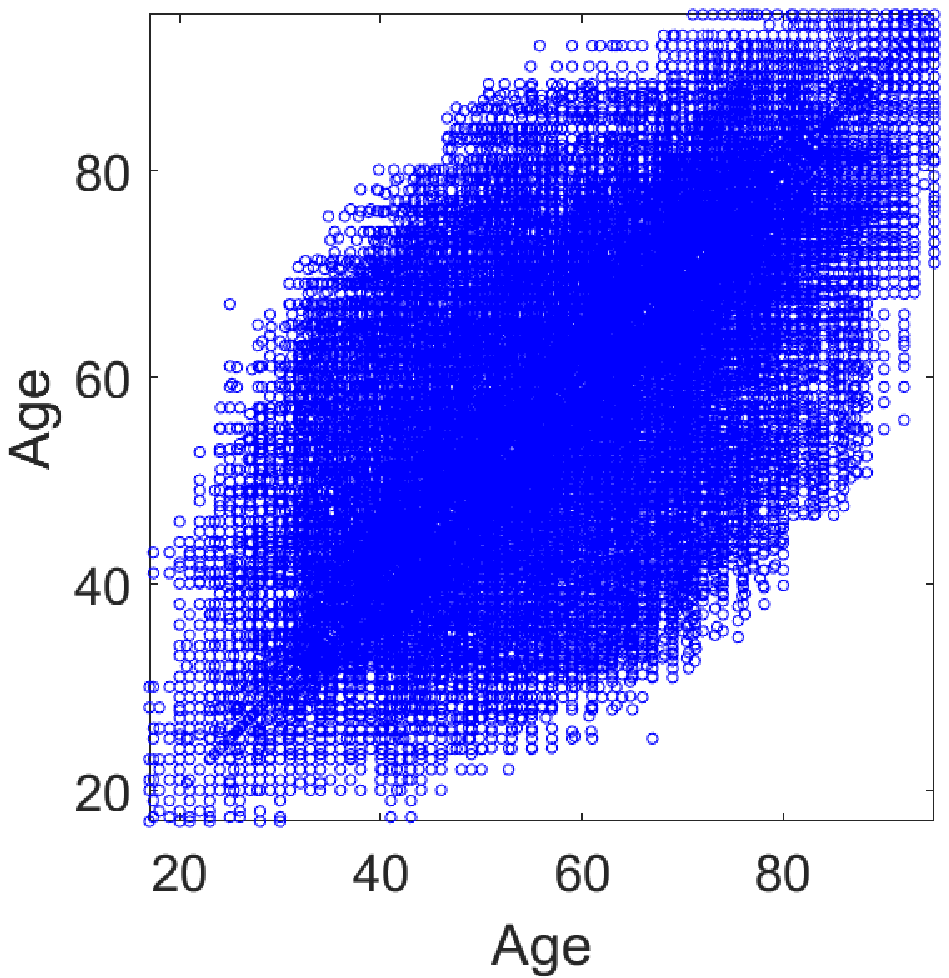}
		\caption{}
	\end{subfigure}%
	~ 
	\begin{subfigure}[t]{0.53\textwidth}
		\centering
		\includegraphics[scale=0.6]{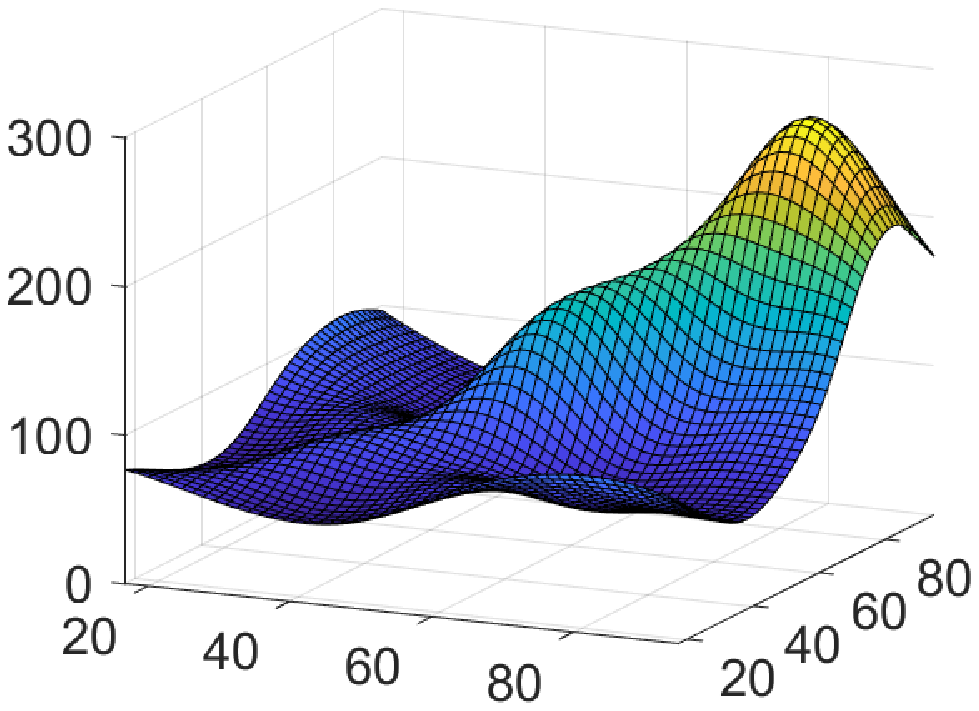}
		\caption{}
	\end{subfigure}
	\caption{(a) Empirical design plot of the covariance structure  of the systolic blood pressure. (b) Estimated  covariance function  of the  systolic blood pressure, using the proposed method with Fourier basis and nonperiodic extension.}
	\label{fig:sbp-cov}
\end{figure*}

\begin{figure*}[t!]
    \centering
    \begin{subfigure}[t]{0.32\textwidth}
        \centering
        \includegraphics[scale=0.5]{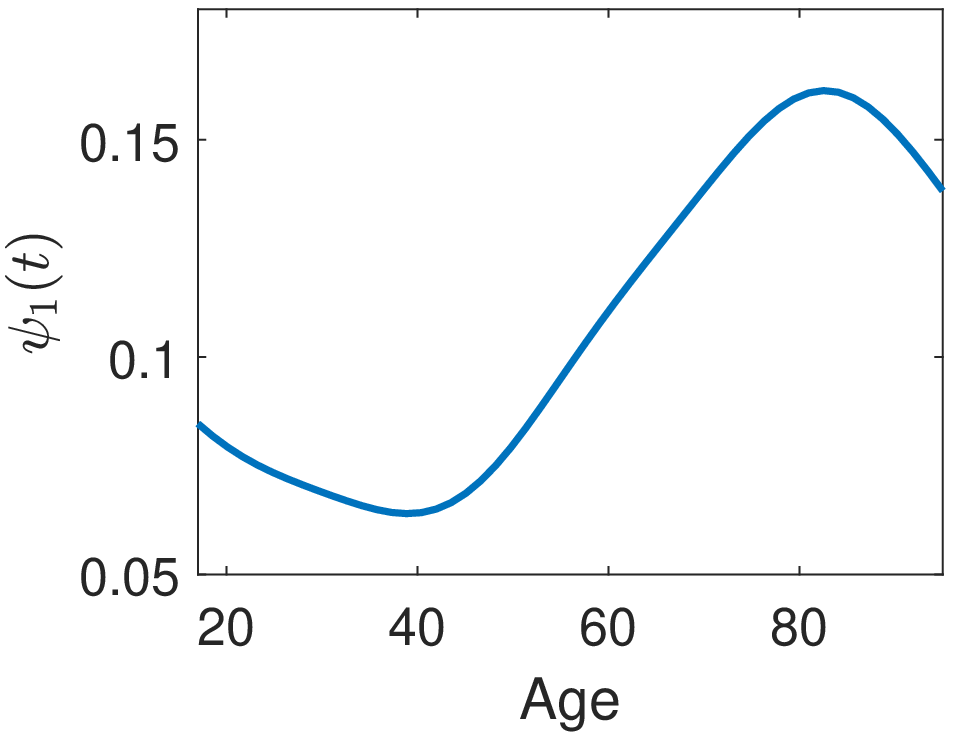}
        \caption{}
    \end{subfigure}%
    ~ 
    \begin{subfigure}[t]{0.32\textwidth}
        \centering
        \includegraphics[scale=0.5]{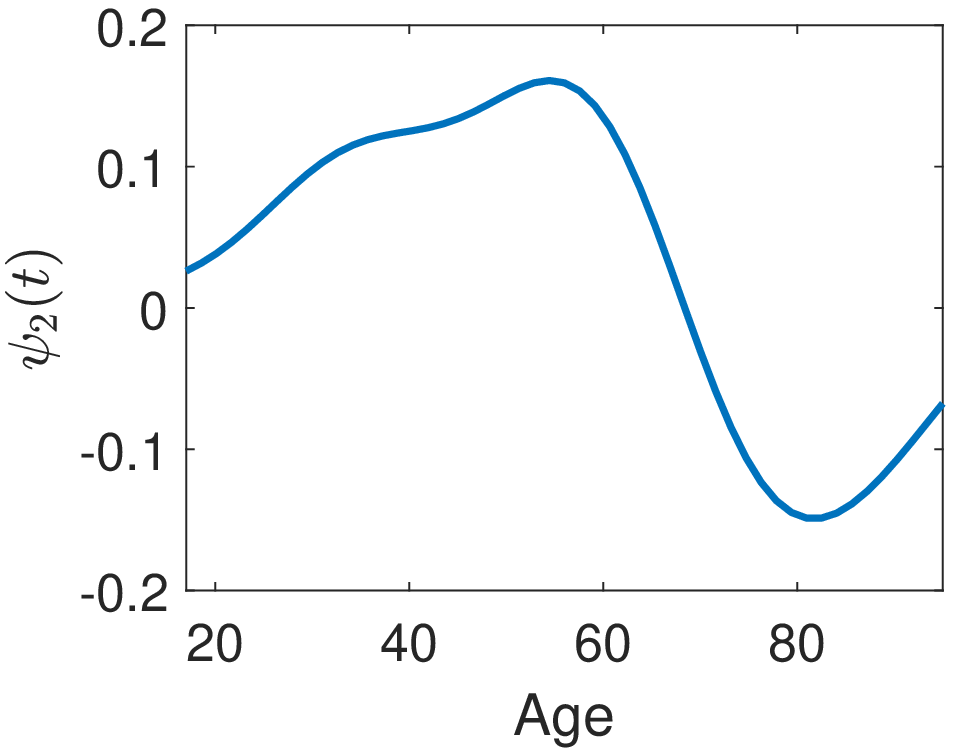}
        \caption{}
    \end{subfigure}
    ~
    \begin{subfigure}[t]{0.32\textwidth}
        \centering
        \includegraphics[scale=0.5]{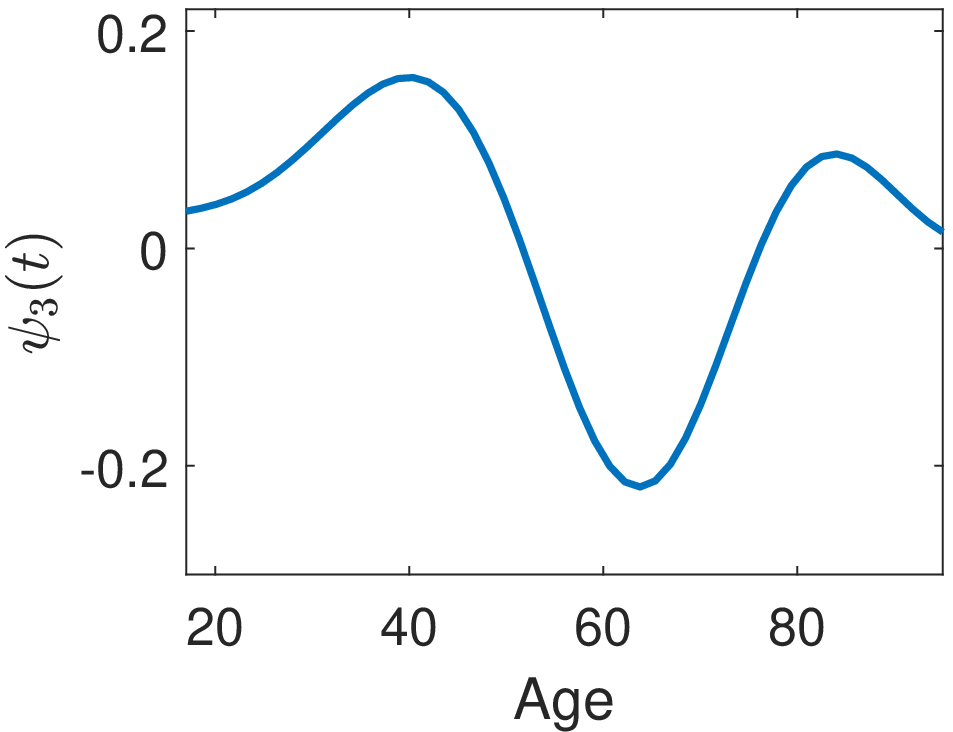}
        \caption{}
    \end{subfigure}
    \caption{The first three principal components of systolic blood pressure data}\label{fig:sbp-pc}
\end{figure*}

\section*{Acknowledgement}
We extend our sincere thanks to the editor and three anonymous reviewers for their detailed and constructive comments that help us substantially improve the paper.

\section*{Appendix A: Fourier Extension}
For a nonperiodic function $g$, its finite Fourier series expansion $g_N$ is observed to suffer from the so-called Gibbs phenomenon \citep{Zygmund2003} which refers to the drastic oscillatory overshoot in the region close to the two endpoints of the domain. 
 A remedy to this issue is to employ the technique of  Fourier extension mentioned in Example \ref{ex:1}.

The idea of Fourier extension is to approximate a nonperiodic function $g$ by Fourier basis functions defined on an \emph{extended domain} $\tdomain_{\zeta}=[-\zeta,1+\zeta]$ for some $\zeta>0$ that we refer to as  \emph{extension margin}.  The basis functions, are then defined by $\phi_1(t)=(1+2\zeta)^{-1/2}$, $\phi_{2k}(t)=\cos\{2k\pi t/(1+2\zeta)\}$ and $\phi_{2k-1}(t)=\sin\{2k\pi t/(1+2\zeta)\}$ for $k>1$. To elaborate, let $\mathscr G_N(\zeta)=\mathrm{span}\{\phi_1,\ldots,\phi_N\}$. The Fourier extension of $g$ within $\mathscr G_N(\zeta)$, denoted by $\tilde{g}_N$, is  defined by 
\begin{equation*}
\tilde{g}_N=\underset{h\in\mathscr G_N(\zeta)}{\arg\min}\,\|g-h\|_{\ltwo(0,1)},
\end{equation*}
where we emphasize  that the norm $\|\cdot\|_{\ltwo(0,1)}$ is for the domain $\tdomain$.   One can easily see that the Fourier extension of $g$ is not unique. However,  all such extensions have the same approximation quality for $g$ over the domain $\tdomain$ of interest. Intuitively, for a Fourier extension of $g$, even if there is a Gibbs phenomenon, for a suitable $\zeta$, the phenomenon is expected to occur only within the extended part of the domain, i.e., $[-\zeta,0]$ and $[1,1+\zeta]$.  Consequently, the nonperiodic function $g$ can then be well approximated on the domain $[0,1]$. Indeed, the speed of the convergence of a Fourier extension of $g$ in the domain $[0,1]$ adapts to the smoothness of $g$ \citep{Adcock2014}. For example, $\tilde{g}_N$ converges to $g$ at the rate of $N^{-r}$ ($c^{-N}$ for some $c>1$, respectively) when $g$ is $r$ times differentiable (analytic, respectively). 

The above discussion can be straightforwardly extended to the two-dimensional case. Let $\mathscr G_N^2(\zeta)=\mathrm{span}\{\phi_k\otimes\phi_l:1\leq k,l\leq N\}$, where $\phi_k\otimes\phi_l$ represents the function $\phi_k(s)\phi_l(t)$ defined on the two-dimensional square $[-\zeta,1+\zeta]^2$. For a function $\gamma$ defined on $[0,1]^2$, its Fourier extension $\tilde{\gamma}$ within $\mathscr G_N^2(\zeta)$ is given by 
\begin{equation*}
\tilde{\gamma}_N=\underset{h\in\mathscr G_N^2(\zeta)}{\arg\min}\,\|\gamma-h\|_{\ltwo([0,1]^2)},
\end{equation*}
where $\ltwo([0,1]^2)$ denotes the space of squared integrable functions defined on $[0,1]^2$ with the norm $\|\gamma-h\|_{\ltwo([0,1]^2)}=\{\int_0^1\int_0^1 |\gamma(s,t)|^2\diffop s\diffop t\}^{1/2}$.

\section*{Appendix B: Geometric Newton Method}
Let $\mathbf{C}=\mathbf{L}\mathbf{L}^{\tp}$, where $\mathbf{L}$
is a lower triangular matrix whose diagonal elements are all positive.
This is the so-called Cholesky decomposition of $\mathbf{C}$ and
it is unique (since we require the diagonal elements of $\mathbf{L}$
to be positive). The objective function in \eqref{Mrep} can be written
as 
\begin{align*}
Q(\mathbf{L})= & \sum_{i=1}^{n}\tr[(\mathbf{R}_{i}-\mathbf{B}_{i}\mathbf{L}\mathbf{L}^{\tp}\mathbf{B}_{i}^{\tp})(\mathbf{R}_{i}-\mathbf{B}_{i}\mathbf{L}\mathbf{L}^{\tp}\mathbf{B}_{i}^{\tp})^{\tp}]\\
& -\sum_{i=1}^{n}\sum_{j=1}^{N_{i}}(\Gamma_{ijj}-\mathbf{b}_{ij}\mathbf{L}\mathbf{L}^{\tp}\mathbf{b}_{ij}^{\tp})^{2}\\
& +\lambda\tr(\mathbf{\mathbf{L}\mathbf{L}^{\tp}U\mathbf{L}\mathbf{L}^{\tp}W}+\mathbf{\mathbf{L}\mathbf{L}^{\tp}V\mathbf{L}\mathbf{L}^{\tp}V}),
\end{align*}
where $\mathbf{R}_{i}$ is a matrix whose $(j,k)$-entry is $\Gamma_{ijk}$,
$\mathbf{B}_{i}$ is a matrix whose $(j,k)$-entry is $\phi_{j}(T_{ik})$
for $j=1,\ldots,p$ and $k=1,\ldots,m_{i}$, and $\mathbf{b}_{ij}$
is a row vector representing the $j$th row of $\mathbf{B}_{i}$. Instead of a constraint optimization problem in the space of all $p\times p$ lower triangular matrices, we view the minimization of $Q$ as an unconstrained optimization problem in the manifold of $p\times p$ lower triangular matrices whose diagonal elements are all positive, and then adopt Newton's method on manifolds, as follows.

Informally speaking, a $d$-dimensional manifold $\mathcal M$ can be viewed as a subset of $\real^D$ for $d\leq D$, such that, for every $x\in\mathcal{M}$, there is a $d$-dimensional hyperplane $\mathbb{T}_x$ (of $\real^D$) that is tangential to $\mathcal{M}$ at $x$. If one endows the hyperplane $\mathbb{T}_x$ with an inner product, and the inner product (depending on $x$) varies smoothly with respect to $x$, then the manifold is called a Riemannian manifold. For a Riemannian manifold, for each $x$, there is an map $\mathrm{Exp}_x:\mathbb T_x\rightarrow\mathcal M$, called the Riemannian exponential map at $x$, such that for $u\in\mathbb{T}_x\subset\real^D$, $\mathrm{Exp}(u)=\psi(1)$ for a geodesic $\psi:[-1,1]\rightarrow\mathcal M$ passing through $x$ and $\psi^\prime(0)=u\in\real^D$. For an in-depth introduction to manifolds, we recommend the textbook by \cite{Lang1995}. For the manifold $\mathcal{M}=\{\mathbf L\in\real^{p\times p}:\,\mathbf{L}$ is lower triangular and has positive diagonal elements$\}$, \cite{Lin2019+a} introduced a way to define smoothly varying inner products among the hyperplanes $\mathbb{T}_{\mathbf X}$ for $\mathbf X\in\mathcal M$, which turn $\mathcal{M}$ into a Riemannian manifold with Riemannian exponential map defined by $\mathrm{Exp}_{\mathbf{X}}\mathbf{S}=(\mathbf{X}-\mathbb{D}(\mathbf{X}))+(\mathbf{S}-\mathbb{D}(\mathbf{S}))+\mathbb{D}(\mathbf{X})\mathrm{expm}\{\mathbb{D}(\mathbf{S})\mathbb{D}(\mathbf{X})^{-1}\}$ for $\mathbf X\in\mathcal M$ and $\mathbf S\in\mathbb T_{\mathbf X}$, where $\mathbb D(\mathbf X)$ denotes the diagonal matrix formed by the diagonal part of $\mathbf X$, and $\mathrm{expm}(\mathbf A)=\sum_{k=0}^\infty \mathbf{A}^k /k!$, defined for any square matrix $\mathbf A$, denotes the matrix exponential function.

\lin{The geometric Newton method is the generalization of classic Newton method to Riemannian manifolds to compute a zero of a function $F(x)\in\mathbb{T}_x$. It can be used to solve our minimization problem by observing that if $Q(\mathbf L)$ is (locally) minimized at $\mathbf{L}^\ast$ then $\partial Q/\partial \mathbf{L}$ vanishes at $\mathbf L^\ast$. In this case, $F(\mathbf L)=\partial Q/\partial \mathbf{L}$. Starting with an initial point $\mathbf L_0\in\mathcal M$, for $k=0,1,\ldots$, we first solve the Newton equation $$ \mathcal H_{\mathbf L_k}\eta_k=-F(\mathbf{L}_k)$$
for the unknown $\eta_k\in\mathbb T_{\mathbf L_k}$, 
where $\mathcal H_{\mathbf L}:\mathbb T_{\mathbf L}\rightarrow\mathbb T_{\mathbf L}$ denotes the gradient of $F$ (or equivalently, the Hessian of $Q$), and then set $\mathbf{L}_{k+1}=\mathrm{Exp}_{\mathbf L_k}(\eta_k)$. The iteration is terminated if 1) $k\geq K$ for some specified integer $K>0$, 2) $\|F(\mathbf{L}_k)\|_2\leq \epsilon$ for a tolerance level $\epsilon>0$, or 3) $\gamma_{{\vec L}_{k+1}{\vec L}_{k+1}^\tp}\not\in\mathcal C$. One also needs to select an initial point $\mathbf{L}_0$, which is determined in the following way. We first minimize $Q$ with respect to $\mathbf C$ without considering the positive definiteness and the constraint $\gamma_{\vec C}\in\mathcal C$. In this case, $Q$ is a quadratic form of $\vec C$ and a closed-form solution is available. Denote the minimizer by $\check{\vec C}$. If $\check{\vec C}$ is not positive definite,  we obtain its eigendecomposition $\check{\vec C}=\check{\vec P}\check{\Lambda}\check{\vec P}^\tp$ for an orthogonal matrix $\check{\vec P}$ and a diagonal matrix $\check\Lambda$, and then set $\check{\vec C}=\check{\vec P}\mathring{\Lambda}\check{\vec P}^\tp$, where $\mathring{\Lambda}$ is obtained from $\check\Lambda$ by setting its nonpositive diagonal elements to $0.01\times$largest singular value of $\check{\vec C}$.}

It remains to compute the gradient $F$ and the Hessian $\mathcal H$ of the objective function $Q$. Straightforward computation shows that 
\begin{align*}
F(\mathbf L)= & -4\sum_{i=1}^{n}\mathbf{B}_{i}^{\tp}\mathbf{R}_{i}\mathbf{B}_{i}\mathbf{L}\\
& +4\sum_{i=1}^{n}\mathbf{B}_{i}^{\tp}\mathbf{B}_{i}\mathbf{L}\mathbf{L}^{\tp}\mathbf{B}_{i}^{\tp}\mathbf{B}_{i}\mathbf{L}\\
& -4\sum_{i=1}^{n}\sum_{j=1}^{N_{i}}\mathbf{b}_{ij}^{\tp}\mathbf{b}_{ij}\mathbf{L}\mathbf{L}^{\tp}\mathbf{b}_{ij}^{\tp}\mathbf{b}_{ij}\mathbf{L}\\
& +4\sum_{i=1}^{n}\sum_{j=1}^{N_{i}}\Gamma_{ijj}\mathbf{b}_{ij}^{\tp}\mathbf{b}_{ij}\mathbf{L}\\
& +2\lambda\{\mathbf{U}\mathbf{L}\mathbf{L}^{\tp}\mathbf{W}\mathbf{\mathbf{L}}+\mathbf{W}\mathbf{L}\mathbf{L}^{\tp}\mathbf{U}\mathbf{L}\}\\
& +4\lambda\mathbf{V}\mathbf{L}\mathbf{L}^{\tp}\mathbf{V}\mathbf{\mathbf{L}}.
\end{align*}
	
	To compute the Hessian of $Q$, we first observe that, according to Proposition 5.5.4 of \cite{AbsMahSep2008}, 
\[
\mathcal H_{\mathbf{L}}=\mathrm{Hess}(Q\circ\mathrm{Exp}_{\mathbf{L}})(0),
\]
where $\circ$ denotes the composition of functions. 
Let $\mathbf{X}=\mathrm{Exp}_{\mathbf{L}}\mathbf{S}$ and $\Psi(\mathbf{S})=(Q\circ\mathrm{Exp}_{\mathbf{L}})(\mathbf{S})$.
Then
\begin{align*}
\mathrm{grad}\Psi= & -4\sum_{i=1}^{n}(\mathbf{B}_{i}^{\tp}\mathbf{R}_{i}\mathbf{B}_{i}\mathbf{X})\odot\mathbf{Z}\\
& +4\sum_{i=1}^{n}(\mathbf{B}_{i}^{\tp}\mathbf{B}_{i}\mathbf{X}\mathbf{X}^{\tp}\mathbf{B}_{i}^{\tp}\mathbf{B}_{i}\mathbf{X})\odot\mathbf{Z}\\
& -4\sum_{i=1}^{n}\sum_{j=1}^{N_{i}}(\mathbf{b}_{ij}^{\tp}\mathbf{b}_{ij}\mathbf{X}\mathbf{X}^{\tp}\mathbf{b}_{ij}^{\tp}\mathbf{b}_{ij}\mathbf{X})\odot\mathbf{Z}\\
& +4\sum_{i=1}^{n}\sum_{j=1}^{N_{i}}\Gamma_{ijj}(\mathbf{b}_{ij}^{\tp}\mathbf{b}_{ij}\mathbf{X})\odot\mathbf{Z}\\
& +2\lambda\{\mathbf{U}\mathbf{X}\mathbf{X}^{\tp}\mathbf{W}\mathbf{\mathbf{X}}+\mathbf{W}\mathbf{X}\mathbf{X}^{\tp}\mathbf{U}\mathbf{X}\}\odot\mathbf{Z}\\
& +4\lambda(\mathbf{V}\mathbf{X}\mathbf{X}^{\tp}\mathbf{V}\mathbf{\mathbf{X}})\odot\mathbf{Z},
\end{align*}
where $\odot$ denotes matrix Hadamard product, and $\mathbf{Z}=\{z_{jk}\}$
is a $p\times p$ lower triangular matrix such that $z_{jk}=1$ if
$k<j$ and $z_{jj}=e^{s_{jj}/L_{jj}}$ with $L_{jj}$ being the $(j,j)$-entry of $\mathbf L$. To compute the Hessian of
$Q\circ\mathrm{Exp}_{\mathbf{L}}$ at $\mathbf{S}$, we proceed to compute the gradient of $\mathrm{grad}\Psi$. 
We first  observe that, the terms in $\mathrm{grad} \Psi$
can be divided into two types: one is of the form $\mathbf{H}\mathbf{X}\odot\mathbf{Z}$,
and the other is of the form $\mathbf{H}\mathbf{X}\mathbf{X}^{\tp}\mathbf{G}\mathbf{\mathbf{X}}\odot\mathbf{Z}$,
where $\mathbf{H}$ and $\mathbf{G}$ are some matrices not containing
$\mathbf{X}$ or $\mathbf{S}$. Below we address them separately.

For the first type, let
\begin{align*}
\Xi_{jk}\equiv(\mathbf{H}\mathbf{\mathbf{X}})_{jk} & =\sum_{\ell}h_{j\ell}x_{\ell k},
\end{align*}
where $h_{j\ell}$ and $x_{\ell k}$ are the $(j,\ell)$- and $(\ell,k)$-entry of $\mathbf{H}$ and $\mathbf{X}$, respectively. Below we compute
$\frac{\partial\Xi_{jk}}{\partial s_{\tau\nu}}$ for $1\leq k\leq j\leq p$
and $1\leq\nu\leq\tau\leq p$. When $k\neq\nu$, $L_{\ell k}$ does
not contain $s_{\tau\nu}$, and thus
\[
\frac{\partial\Xi_{jk}}{\partial s_{\tau\nu}}=\frac{\partial}{\partial s_{\tau\nu}}\sum_{\ell}h_{j\ell}x_{\ell k}=0.
\]
When $k=\nu$, for $\nu=\tau$,
\[
\frac{\partial\Xi_{jk}}{\partial s_{\tau\nu}}=\frac{\partial}{\partial s_{\tau\nu}}\sum_{\ell}h_{j\ell}x_{\ell k}=e^{s_{\tau\nu}/L_{\tau\nu}}h_{j\tau},
\]
and for $\nu<\tau$, 
\[
\frac{\partial\Xi_{jk}}{\partial s_{\tau\nu}}=\frac{\partial}{\partial s_{\tau\nu}}\sum_{\ell}h_{j\ell}x_{\ell k}=h_{j\tau}.
\]

To deal the second type, let
\begin{align*}
\Xi_{jk}\equiv(\mathbf{H}\mathbf{X}\mathbf{X}^{\tp}\mathbf{G}\mathbf{\mathbf{X}})_{jk} & =\sum_{\omega}\sum_{\ell}\sum_{a}\sum_{b}h_{j\omega}x_{\omega\ell}x_{a\ell}g_{ab}x_{bk}.
\end{align*}
Now we compute $\frac{\partial\Xi_{jk}}{\partial s_{\tau\nu}}$
for $1\leq k\leq j\leq p$ and $1\leq\nu\leq\tau\leq p$. When $k\neq\nu$,
for $\nu=\tau$,

\[
\frac{\partial\Xi_{jk}}{\partial s_{\tau\nu}}=e^{s_{\tau\nu}/L_{\tau\nu}}\sum_{a}\sum_{b}(h_{j\tau}x_{a\nu}g_{ab}x_{bk})+e^{s_{\tau\nu}/L_{\tau\nu}}\sum_{\omega}\sum_{b}h_{j\omega}x_{\omega\nu}g_{\tau b}x_{bk},
\]
and for $\nu<\tau$, 
\begin{align*}
\frac{\partial\Xi_{jk}}{\partial s_{\tau\nu}} &  =\sum_{a}\sum_{b}h_{j\tau}x_{a\nu}g_{ab}x_{bk}+\sum_{\omega}\sum_{b}h_{j\omega}x_{\omega\nu}g_{\tau b}x_{bk}\\
& =h_{j\tau}\mathbf{X}_{\cdot\nu}^{\tp}\mathbf{G}\mathbf{X}_{\cdot k}+\mathbf{H}_{j\cdot}\mathbf{X}_{\cdot\nu}\mathbf{G}_{\tau\cdot}\mathbf{X}_{\cdot k},
\end{align*}
where the notation $\mathbf{X}_{\cdot k}$ denotes the $k$th column
of the matrix $\mathbf{X}$, and $\mathbf{G}_{\tau\cdot}$ denotes
the $\tau$th row of $\mathbf{G}$.

When $k=\nu$, for $\tau=\nu$,
\begin{align*}
\frac{\partial\Xi_{jk}}{\partial s_{\tau\nu}} &  =e^{s_{\tau\nu}/L_{\tau\nu}}\left(\sum_{a}\sum_{b}h_{j\tau}x_{a\nu}g_{ab}x_{bk}+\sum_{\ell}\sum_{b}h_{j\ell}x_{\ell\nu}g_{\tau b}x_{bk}+\sum_{\omega}\sum_{\ell}\sum_{a}h_{j\omega}x_{\omega\ell}x_{a\ell}g_{a\tau}\right)\\
& =e^{s_{\tau\nu}/L_{\tau\nu}}(h_{j\tau}\mathbf{X}_{\cdot\nu}^{\tp}\mathbf{G}\mathbf{X}_{\cdot k}+\mathbf{H}_{j\cdot}\mathbf{X}_{\cdot\nu}\mathbf{G}_{\tau\cdot}\mathbf{X}_{\cdot k}+\mathbf{H}_{j\cdot}\mathbf{X}\mathbf{X}^{\tp}\mathbf{G}_{\cdot\tau}),
\end{align*}
and for $\nu<\tau$, 
\begin{align*}
\frac{\partial\Xi_{jk}}{\partial s_{\tau\nu}} &  =\sum_{a}\sum_{b}h_{j\tau}x_{a\nu}g_{ab}x_{bk}+\sum_{\omega}\sum_{b}h_{j\omega}x_{\omega\nu}g_{\tau b}x_{bk}+\sum_{\omega}\sum_{\ell}\sum_{a}h_{j\omega}x_{\omega\ell}x_{a\ell}g_{a\tau}\\
& =h_{j\tau}\mathbf{X}_{\cdot\nu}^{\tp}\mathbf{G}\mathbf{X}_{\cdot k}+\mathbf{H}_{j\cdot}\mathbf{X}_{\cdot\nu}\mathbf{G}_{\tau\cdot}\mathbf{X}_{\cdot k}+\mathbf{H}_{j\cdot}\mathbf{X}\mathbf{X}^{\tp}\mathbf{G}_{\cdot\tau}.
\end{align*}

\section*{Appendix C: Technical Proofs}
\paragraph{Notation.} 
Without loss of generality, we assume $\tdomain=[0,1]$.  We use $\boldsymbol{\Phi}_q$ to denote the column vector $(\bs_{1},\ldots,\bs_{q})^{\tp}$,
and $\boldsymbol{\Phi}_q(t)$ to denote its value evaluated at $t$. The $\ell^{2}$
norm of a vector $\vec v$ is denoted by $\|\vec v\|_{2}$. When $\vec v$
is viewed as a linear functional, its operator norm is denoted by
$\|\vec v\|$. Note that $\|\vec v\|=\|\vec v\|_{2}$. For a matrix
$\vec M$, $\|\vec M\|$ denotes its induced operator norm, while
$\|\vec M\|_{\fronorm}$ denotes its Frobenius norm. For a function
$\mu$ defined on $\tdomain$, its $\ltwo$ norm, denoted by $\|\mu\|_{\ltwo}$,
is defined by $\|\mu\|_{\ltwo}=\{\int_{\tdomain}|\mu(t)|^{2}\diffop t\}^{1/2}$.
For a covariance function $\gamma$, we use $\|\gamma\|_{\ltwo}$ to denote 
its $\ltwo$ norm that is defined by $\|\gamma\|_{\ltwo}=\{\int_{\tdomain^{2}}|\gamma(s,t)|^{2}\diffop s\diffop t\}^{1/2}$.

\begin{proof}[Proof of Theorem \ref{thm: mean1}] In the sequel, we use $\vec a$ to denote the vector of the coefficients of $\mu_0$ with respect to the basis functions $\phi_1,\ldots,\phi_p$.

First, we observe that, $\mathfrak{c}_{1}\mathfrak{c}_{3}\delta/2\leq f_{T}(t)\leq\mathfrak{c}_{2}\mathfrak{c}_{4}\delta$
for all $t\in[0,1]$. Let $C_{1}=\min\{\mathfrak{c}_{1},\mathfrak{c}_{3},\mathfrak{c}_{1}\mathfrak{c}_{3}\delta/2\}$, $C_{2}=\max\{\mathfrak{c}_{2},\mathfrak{c}_{4},\mathfrak{c}_{2}\mathfrak{c}_{4}\delta\}$ and 
 $\varrho_{n}\asymp n^{-1/2}q^{\alpha+1/2}+\tau_q$.
Define 
\[
Q(\vec a)=\frac{1}{nm}\sum_{i=1}^{n}\sum_{j=1}^{m}\{Y_{ij}-\vec a^{\tp}\boldsymbol{\Phi}_q(T_{ij})\}^{2}+\rho H(\vec a^{\tp}\boldsymbol{\Phi}_q).
\]
Then we observe that
\begin{align}
 & Q(\vec a+\varrho_{n}\vec u)-Q(\vec a)\nonumber \\
= & -\frac{2\varrho_{n}}{nm}\sum_{i=1}^{n}\sum_{j=1}^{m}\vec u^{\tp}\boldsymbol{\Phi}_q(T_{ij})\{Y_{ij}-\vec a^{\tp}\boldsymbol{\Phi}_q(T_{ij})\}+\frac{\varrho_{n}^{2}}{nm}\sum_{i=1}^{n}\sum_{j=1}^{m}\{\vec u^{\tp}\boldsymbol{\Phi}_q(T_{ij})\}^{2}\nonumber \\
 & +2\rho\varrho_{n}\vec u^{\tp}\matvec W\vec a+\rho\varrho_{n}^{2}\vec u^{\tp}\matvec W\vec u\nonumber \\
\equiv & -2\varrho_{n}\mathrm{I}+\varrho_{n}^{2}\mathrm{II}+2\rho\varrho_{n}\mathrm{III}+\rho\varrho_{n}^{2}\mathrm{IV}.\label{eq:mean-component}
\end{align}
It is easy to check that $\|\vec W\|\leq\|\vec W\|_{\mathrm{F}}=O(q^{4\beta+1})$.
Thus, $|\mathrm{III}|=O(q^{4\beta+1})\|\vec u\|_{2}$ and $|\mathrm{IV}|=O(q^{4\beta+1})\|\vec u\|_{2}^{2}$.
According to Claim \ref{claim:mean-I} and \ref{claim:mean-II}, we have
that, for any $\epsilon>0$, there exists $N_{\epsilon}>0$, $\theta_{\epsilon}>0$,
and $\Omega_{n,\epsilon}\subset\Omega$, such that for all $n\geq N_{\epsilon}$, 
\begin{itemize}
\item $\prob(\Omega_{n,\epsilon})\geq1-\epsilon$, and 
\item for all $\omega\in\Omega_{n,\epsilon}$, $|\mathrm{I}|=|\mathrm{I}(\omega)|\leq\absconst_{\epsilon}(n^{-1/2}q^{\alpha+1/2+\tau_q})\|\vec u\|_{2}$,
and 
\item for all $\omega\in\Omega_{n,\epsilon}$, $|\mathrm{II}|=|\mathrm{II}(\omega)|\geq C_{1}\|\vec u\|_{2}^{2}/2$,
\end{itemize}
where $\Omega$ denotes the sample space. With the choice of $\varrho_{n}$ and $\rho$, we can see that $Q(\vec a+\varrho_{n}\vec u)-Q(\vec a)>0$
on $\Omega_{n,\epsilon}$ for all $\vec u$ with $\|\vec u\|_{2}=D_\epsilon$
for a sufficiently large and fixed constant $D_\epsilon>0$. Therefore, with
probability tending to one, the minimizer $\hat{\vec{a}}$ of $Q$ falls into the ball $\{\vec u\in\mathbb{R}^{q}:\|\vec u-\vec a\|_{2}\leq D_\epsilon\varrho_{n}\}$. Now, 
\[
\|\hat{\mu}-\mu_{0}\|_{\ltwo}^{2}\leq2\|\hat{\mu}-\vec a^{\tp}\boldsymbol{\Phi}_q\|_{\ltwo}^{2}+2\|\vec a^{\tp}\boldsymbol{\Phi}_q-\mu_{0}\|_{\ltwo}^{2}=\Op(\varrho_{n}^{2}+\tau_{q}^{2}),
\]
and the proof of the theorem is completed.
\end{proof}

\begin{claim}
\label{claim:mean-I}For any $\epsilon>0$, there exists $N_{\epsilon}>0$,
$\theta_{\epsilon}>0$, and $\Omega_{n,\epsilon}\subset\Omega$, such
that for all $n\geq N_{\epsilon}$, 
\end{claim}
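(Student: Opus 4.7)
The plan is to control $\mathrm{I}$ by splitting the residual $Y_{ij}-\vec a^{\tp}\boldsymbol{\Phi}_q(T_{ij})$ into three natural pieces, bound each by Cauchy--Schwarz, and then form the high-probability event $\Omega_{n,\epsilon}$ via Markov's inequality. Concretely, write
\[
Y_{ij}-\vec a^{\tp}\boldsymbol{\Phi}_q(T_{ij}) = \varepsilon_{ij}+\{X_i(T_{ij})-\mu_0(T_{ij})\}+r(T_{ij}),
\]
where $r(t)=\mu_0(t)-\vec a^{\tp}\boldsymbol{\Phi}_q(t)$ is the deterministic $L^2$ approximation residual satisfying $\|r\|_{\ltwo}=\tau_q$. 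Letting $\vec S_\varepsilon,\vec S_X,\vec S_r$ denote the corresponding sample averages of $\boldsymbol{\Phi}_q(T_{ij})$ times each piece, one has $\mathrm{I}=\vec u^{\tp}(\vec S_\varepsilon+\vec S_X+\vec S_r)$ and hence $|\mathrm{I}|\leq \|\vec u\|_2(\|\vec S_\varepsilon\|_2+\|\vec S_X\|_2+\|\vec S_r\|_2)$ by Cauchy--Schwarz in $\mathbb{R}^q$.

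For the noise piece, independence and mean-zero property of the $\varepsilon_{ij}$ give $\expect\|\vec S_\varepsilon\|_2^2=(nm)^{-2}\sum_{i,j}\sigma^2\expect\|\boldsymbol{\Phi}_q(T_{ij})\|_2^2$. The $(\alpha,\beta)$-basis hypothesis yields the uniform bound $\sup_t\|\boldsymbol{\Phi}_q(t)\|_2^2\leq C\sum_{k=1}^q k^{2\alpha}=O(q^{2\alpha+1})$, so $\expect\|\vec S_\varepsilon\|_2^2=O(q^{2\alpha+1}/n)$. The same computation for $\vec S_X$, using A.2 to bound $\expect\{X(T)-\mu_0(T)\}^2$ and independence across $i$, gives $\expect\|\vec S_X\|_2^2=O(q^{2\alpha+1}/n)$.

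For the deterministic-bias piece $\vec S_r$, the natural componentwise bound on $\|\vec S_r\|_2$ would introduce a spurious $\sqrt q$ factor and yield $O_P(\sqrt{q}\,\tau_q)$ instead of the required $O_P(\tau_q)$. To avoid this, I would not take the $\ell^2$ norm at all but rather estimate $|\vec u^{\tp}\vec S_r|$ directly via a functional Cauchy--Schwarz against the empirical measure:
\[
|\vec u^{\tp}\vec S_r|^2 \leq \Big\{\frac{1}{nm}\sum_{i,j}(\vec u^{\tp}\boldsymbol{\Phi}_q(T_{ij}))^2\Big\}\cdot\Big\{\frac{1}{nm}\sum_{i,j}r(T_{ij})^2\Big\}.
\]
Taking expectations, the first factor equals $\vec u^{\tp}\expect[\boldsymbol{\Phi}_q(T)\boldsymbol{\Phi}_q(T)^{\tp}]\vec u$, and A.1 (boundedness of $f_T$ above) makes this at most $C_2\|\vec u\|_2^2$; the second factor is at most $C_2\|r\|_{\ltwo}^2=C_2\tau_q^2$. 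A concentration argument (or another application of Markov) shows each empirical factor is at most twice its expectation with probability at least $1-\epsilon/3$, giving $|\vec u^{\tp}\vec S_r|\leq\theta'_\epsilon\tau_q\|\vec u\|_2$ on a high-probability event.

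Combining the three bounds via union bound over the events provided by Markov's inequality applied to the noise, process, and bias pieces, I obtain an event $\Omega_{n,\epsilon}$ with $\prob(\Omega_{n,\epsilon})\geq 1-\epsilon$ on which $|\mathrm{I}|\leq\theta_\epsilon(n^{-1/2}q^{\alpha+1/2}+\tau_q)\|\vec u\|_2$. The main obstacle, as indicated above, is precisely the bias term: the naive approach loses a factor of $\sqrt q$, and one must either route through the empirical Cauchy--Schwarz inequality above or, equivalently, observe that the operator norm of $\expect[\boldsymbol{\Phi}_q(T)\boldsymbol{\Phi}_q(T)^{\tp}]$ is bounded uniformly in $q$ under A.1 (since its quadratic form is an $L^2(f_T)$ norm of $\vec u^{\tp}\boldsymbol{\Phi}_q$, which is equivalent to the $L^2$ norm).
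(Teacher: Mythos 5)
Your decomposition and the treatment of the stochastic pieces are essentially the paper's: the paper splits $\mathrm{I}=\mathrm{I}_1+\mathrm{I}_2$ with $\mathrm{I}_1$ driven by $Y_{ij}-\mu_0(T_{ij})$ (your $\varepsilon$ and $X-\mu_0$ pieces combined) and bounds it exactly as you do, via $\expect\|\vec s\|_2^2=O(n^{-1}q^{2\alpha+1})$ using $\sup_t\|\boldsymbol{\Phi}_q(t)\|_2^2=O(q^{2\alpha+1})$ and Markov/Chebyshev. Where you diverge is the bias term. The paper centers $\mathrm{I}_2$ at its expectation, disposes of $\mathrm{I}_2-\expect\mathrm{I}_2$ by the same variance computation (it is in fact of order $n^{-1/2}q^{\alpha+1/2}\tau_q$, even smaller), and then applies Cauchy--Schwarz at the \emph{population} level: $|\expect\mathrm{I}_2|\leq[\expect\{\vec u^{\tp}\boldsymbol{\Phi}_q(T)\}^2]^{1/2}[\expect\,r(T)^2]^{1/2}=O(\tau_q)\|\vec u\|_2$, using A.1 to bound $f_T$ above. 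You instead apply Cauchy--Schwarz at the \emph{empirical} level and then argue each empirical factor concentrates. Both routes correctly avoid the spurious $\sqrt q$ you identify, and your closing observation that the quadratic form of $\expect[\boldsymbol{\Phi}_q(T)\boldsymbol{\Phi}_q(T)^{\tp}]$ is an $L^2(f_T)$ norm is precisely the mechanism the paper exploits.

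One point to tighten: for your first empirical factor, $\frac{1}{nm}\sum_{i,j}\{\vec u^{\tp}\boldsymbol{\Phi}_q(T_{ij})\}^2$, a per-$\vec u$ application of Markov is not enough, because the event $\Omega_{n,\epsilon}$ must not depend on $\vec u$ (the bound is used uniformly over the sphere $\|\vec u\|_2=D_\epsilon$ in the proof of the theorem). You need the operator-norm concentration $\|\matvec B-\matvec E\|=o_P(1)$ of the empirical Gram matrix, which is exactly the content of the paper's Claim~\ref{claim:mean-II} and uses $q^{2\alpha+2}/n\to 0$ from A.3. Since that lemma is needed anyway for the quadratic term $\mathrm{II}$, this is not a gap in substance, but your write-up should invoke it explicitly rather than ``another application of Markov.'' The paper's route for $\mathrm{I}_2$ sidesteps this entirely by working with $\expect\mathrm{I}_2$, which is the slightly cleaner path. (Also, plain Markov gives ``at most $(3/\epsilon)$ times the expectation,'' not ``twice''; this is harmless since $\theta_\epsilon$ may depend on $\epsilon$.)
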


\begin{itemize}
\item $\prob(\Omega_{n,\epsilon})\geq1-\epsilon$, and 
\item for all $\omega\in\Omega_{n,\epsilon}$, $|\mathrm{I}|=|\mathrm{I}(\omega)|\leq\absconst_{\epsilon}(n^{-1/2}q^{\alpha+1/2}+\tau_q)\|\vec u\|_{2}$,
where $\mathrm{I}$ is given in (\ref{eq:mean-component}).
\end{itemize}
\begin{proof}
We first observe that
\begin{align*}
\mathrm{I} & =\frac{1}{nm}\sum_{i=1}^{n}\sum_{j=1}^{m}\vec u^{\tp}\boldsymbol{\Phi}_q(T_{ij})\{Y_{ij}-\mu_{0}(T_{ij})+\mu_{0}(T_{ij})-\vec a^{\tp}\boldsymbol{\Phi}_q(T_{ij})\}\\
 & =\frac{1}{nm}\sum_{i=1}^{n}\sum_{j=1}^{m}\vec u^{\tp}\boldsymbol{\Phi}_q(T_{ij})\{Y_{ij}-\mu_{0}(T_{ij})\}+\frac{1}{nm}\sum_{i=1}^{n}\sum_{j=1}^{m}\vec u^{\tp}\boldsymbol{\Phi}_q(T_{ij})\{\mu_{0}(T_{ij})-\vec a^{\tp}\boldsymbol{\Phi}_q(T_{ij})\}\\
 & \equiv\mathrm{I}_{1}+\mathrm{I}_{2}.
\end{align*}

Now we consider the first term $\mathrm{I}_{1}$. Let $\vec s=\frac{1}{nm}\sum_{i=1}^{n}\sum_{j=1}^{m}\{Y_{ij}-\mu_{0}(T_{ij})\}\boldsymbol{\Phi}_q(T_{ij})$.
We treat it as a linear functional on the Euclidean space $(\real^{q},\|\cdot\|_2)$. It is seen that $\expect\vec s=0$.
Note that the operator norm of $\vec s$ is equal to its $\ell^{2}$ norm. Then, 
\begin{align*}
\var(\|\vec s\|) &\leq \expect\|\vec s\|^2 =\expect\|\vec s\|_{2}^{2}\\
 & =\sum_{1\leq l\leq q}\expect\left(\frac{1}{nm}\sum_{i=1}^{n}\sum_{j=1}^{m}\{Y_{ij}-\mu_{0}(T_{ij})\}\bs_{l}(T_{ij})\right)^{2}\\
 & =\frac{1}{n}\sum_{1\leq l\leq q}\expect\left(\frac{1}{m}\sum_{j=1}^{m}\{Y_{1j}-\mu_{0}(T_{1j})\}\bs_{l}(T_{1j})\right)^{2}\\
 & =O(n^{-1}q^{2\alpha+1}),
\end{align*}
where the last equation is obtained with the aid of the assumption that $\|\phi_k\|_\infty=O(k^\alpha)$ uniformly over all $k$, as the basis is assumed to be an $(\alpha,\beta)$-basis. 
The above derivation also shows that $\expect\|\vec s\|_{2}\leq(\expect\|\vec s\|_{2}^{2})^{1/2}=O(n^{-1/2}q^{\alpha+1/2})$,
Thus, 
\begin{equation}
\sup_{\vec u\neq 0}\frac{|\mathrm{I}_{1}|}{\|\vec u\|_{2}}= \sup_{\vec u\neq 0}\frac{\|\vec s^\tp\vec u\|_2}{\|\vec u\|_{2}}\leq\|\vec s\|_{2}=\Op(n^{-1/2}q^{\alpha+1/2}).\label{eq:pf-mu-1}
\end{equation}

Similar argument shows that 
\begin{equation}\label{eq:pf-mu-11} 
\mathrm{I}_2-\expect\mathrm{I}_2=\Op(n^{-1/2}q^{\alpha+1/2}).
\end{equation}
In addition, 
\begin{align*}
|\expect\mathrm{I}_2| & = |\expect \vec u^{\tp}\boldsymbol{\Phi}_q(T_{11})\{\mu_{0}(T_{11})-\vec a^{\tp}\boldsymbol{\Phi}_q(T_{11})\}|\\
& \leq [\expect\{\vec u^{\tp}\boldsymbol{\Phi}_q(T_{11})\}^2]^{1/2} [\expect\{\mu_{0}(T_{11})-\vec a^{\tp}\boldsymbol{\Phi}_q(T_{11})\}^2]^{1/2}\\
& =O(1)\|\vec u\|_2 \cdot O(1)\|\mu_0-\vec a^{\tp}\boldsymbol{\Phi}_q\|_{\ltwo}\\
&= O(\tau_q)\|\vec u\|_2.
\end{align*}
Combined with \eqref{eq:pf-mu-1} and \eqref{eq:pf-mu-11}, this proves the claim.
\end{proof}

\begin{claim}
\label{claim:mean-II}Suppose $q^{2\alpha+2}n^{-1}\rightarrow0$.
Then, for any $\epsilon>0$, there exists $N_{\epsilon}>0$, 
and $\Omega_{n,\epsilon}\subset\Omega$, such that for all $n\geq N_{\epsilon}$, 
\end{claim}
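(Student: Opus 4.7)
The plan is to realize $\mathrm{II}$ as a quadratic form $\vec u^{\tp}\vec M_{n}\vec u$ in the random design matrix
\[
\vec M_{n}=\frac{1}{nm}\sum_{i=1}^{n}\sum_{j=1}^{m}\boldsymbol{\Phi}_{q}(T_{ij})\boldsymbol{\Phi}_{q}(T_{ij})^{\tp},
\]
and then establish a uniform spectral lower bound for $\vec M_{n}$. Concretely, it suffices to construct an event $\Omega_{n,\epsilon}$ of probability at least $1-\epsilon$ on which $\lambda_{\min}(\vec M_{n})\geq C_{1}/2$, so that $\mathrm{II}=\vec u^{\tp}\vec M_{n}\vec u\geq(C_{1}/2)\|\vec u\|_{2}^{2}$ holds uniformly in $\vec u\in\real^{q}$.

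First, I will compute $\expect\vec M_{n}$. Since all $T_{ij}$ have the same marginal density $f_{T}$, and assumption A.1 gives $f_{T}(t)\geq\mathfrak{c}_{1}\mathfrak{c}_{3}\delta/2\geq C_{1}$ on $[0,1]$, the orthonormality of the basis $\{\phi_{k}\}_{k=1}^{q}$ in $\ltwo(0,1)$ yields
\[
\expect\vec M_{n}=\int_{0}^{1}\boldsymbol{\Phi}_{q}(t)\boldsymbol{\Phi}_{q}(t)^{\tp}f_{T}(t)\,\diffop t\succeq C_{1}\vec I_{q}.
\]

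Second, I will bound $\vec M_{n}-\expect\vec M_{n}$ in operator norm through its Frobenius norm. Writing each entry $M_{kl}$ as an average over $n$ i.i.d.\ subjects and applying the $(\alpha,\beta)$-basis property exactly as in the variance computation of Claim \ref{claim:mean-I}, I will bound
\[
\var(M_{kl})\leq\frac{1}{n}\expect[\phi_{k}(T_{11})^{2}\phi_{l}(T_{11})^{2}]\leq\frac{C'}{n}\min\{k^{2\alpha},l^{2\alpha}\},
\]
where the key move is asymmetric: use $\|\phi\|_{\infty}\leq Ck^{\alpha}$ to dominate one factor, while $\ltwo$-orthonormality plus boundedness of $f_{T}$ handles the other. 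Summing over $k,l\in\{1,\ldots,q\}$ telescopes to $\expect\|\vec M_{n}-\expect\vec M_{n}\|_{\fronorm}^{2}=O(q^{2\alpha+2}/n)$, and Markov's inequality then gives
\[
\|\vec M_{n}-\expect\vec M_{n}\|\leq\|\vec M_{n}-\expect\vec M_{n}\|_{\fronorm}=\Op(q^{\alpha+1}n^{-1/2})=o_{P}(1)
\]
under the hypothesis $q^{2\alpha+2}/n\rightarrow0$.

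Combining the two pieces via Weyl's inequality, on the event $\Omega_{n,\epsilon}=\{\|\vec M_{n}-\expect\vec M_{n}\|\leq C_{1}/2\}$, which has probability at least $1-\epsilon$ for sufficiently large $n$, one obtains $\lambda_{\min}(\vec M_{n})\geq C_{1}/2$ and the claimed uniform lower bound on $\mathrm{II}$. The main delicacy is the sharp $q^{2\alpha+2}/n$ Frobenius bound: a symmetric use of $\|\phi_{k}\phi_{l}\|_{\infty}^{2}\leq C^{2}k^{2\alpha}l^{2\alpha}$ would blow the rate up to $q^{4\alpha+2}/n$ and be inconsistent with the hypothesis of the claim, so one must carefully exploit $\ltwo$-orthonormality to save a factor of $q^{2\alpha}$ before performing the double sum.
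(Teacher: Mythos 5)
Your proposal is correct and follows essentially the same route as the paper: lower-bound the smallest eigenvalue of the expected Gram matrix $\expect\vec M_n$ by $C_1$ via the density lower bound and orthonormality, control $\vec M_n-\expect\vec M_n$ through its Frobenius norm at rate $O_P(q^{\alpha+1}n^{-1/2})$, and conclude by Weyl's inequality. Your asymmetric bound $\var(M_{kl})\lesssim n^{-1}\min\{k^{2\alpha},l^{2\alpha}\}$ is exactly the (unstated) step the paper needs to reach $O(q^{2\alpha+2}/n)$ rather than $O(q^{4\alpha+2}/n)$, so you have in fact supplied a detail the paper glosses over.
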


\begin{itemize}
\item $\prob(\Omega_{n,\epsilon})\geq1-\epsilon$, and 
\item for all $\omega\in\Omega_{n,\epsilon}$, $|\mathrm{II}|=|\mathrm{II}(\omega)|\geq C_{1}\|\vec u\|_{2}^{2}/2$,
where $\mathrm{II}$ is given in (\ref{eq:mean-component}).
\end{itemize}
\begin{proof}
Let $\matvec B=n^{-1}m^{-1}\sum_{i}\sum_{j}\boldsymbol{\Phi}_q^{\tp}(T_{ij})\boldsymbol{\Phi}_q(T_{ij})$
and $\matvec E=\expect\matvec B$. Now we derive the bound for the
smallest eigenvalue $\underline{\Lambda}(\matvec E)$ of $\matvec E$. As $\matvec E$
is positive semi-positive definite, one has 
\[
\underline{\Lambda}(\matvec E)=\inf_{\|\vec u\|_{2}=1}\vec u^{\tp}\matvec E\vec u.
\]
Let $E_{jk}$ be the element of $\matvec E$ at the $j$th row and
$k$th column. Note that $$E_{jk}=\expect\{\bs_{j}(T)\bs_{k}(T)\}=\int_{0}^{1}\bs_{j}(t)\bs_{k}(t)f_T(t)\diffop t.$$
Thus,
\begin{align*}
\vec u^{\tp}\matvec E\vec u & =\sum_{j,k}u_{j}E_{jk}u_{k} =\int_{0}^{1}f_T(t)\left(\sum_{j}u_{j}\bs_{j}(t)\right)^{2}\diffop t\\
 & \geq C_{1}\int_{0}^{1}\left(\sum_{j}u_{j}\bs_{j}(t)\right)^{2}\diffop t=C_{1}\|\vec u\|_{2}^{2},
\end{align*}
where the last inequality is due to the orthonormality of $\{\bs_{k}\}$.
Thus, 
\[
\underline{\Lambda}(\matvec E)\geq C_{1}.
\]

Observe that $\matvec B=\matvec E+\Delta$ with $\Delta=\matvec B-\matvec E$.
We then have $\expect\Delta=\vec 0$ and 
\begin{align*}
\var(\|\Delta\|) & \leq\expect\|\Delta\|^{2}\leq\expect\|\Delta\|_{\fronorm}^{2}\\
 & \leq\sum_{1\leq l,h\leq q}\expect\left(n^{-1}m^{-1}\sum_{i}\sum_{j}\bs_{l}(T_{ij})\bs_{h}(T_{ij})-E_{lh}\right)^{2}\\
 & =\sum_{1\leq l,h\leq q}\frac{1}{n}\expect\left(\frac{1}{m}\sum_{j}\bs_{l}(T_{1j})\bs_{h}(T_{1j})-E_{lh}\right)^{2}\\
 & =O(q^{2\alpha+2}n^{-1}).
\end{align*}
Now the claim follows from Weyl's inequality.
\end{proof}

\begin{proof}[Proof of Theorem \ref{thm:2} and \ref{thm:3}] Theorem \ref{thm:2} is a special case of Theorem \ref{thm:3} with $\DD=0$. Below we prove Theorem \ref{thm:3}. In the sequel, we use  $\tilde{\vec C}$ to denote the matrix formed by the  coefficients of $\tilde\gamma$ with respect to the basis functions $\{\phi_k\otimes\phi_l:\,1\leq k,l\leq p\}$.

Let $\varrho_{n}=p^{2\alpha+1}n^{-1/2}+\kappa_{p}+\DD$ and 
\begin{equation}
\mathcal{B}_{p}(\theta)=\{\matvec D:\|\matvec D\|_{\fronorm}=\theta,\tilde{\matvec C}+\matvec D\in\mathfrak{C}(p)\},\label{eq:B-ball}
\end{equation}
where ${\mathfrak{C}}(p)$ denotes the collection of $p\times p$ symmetric matrices $\vec G=[g_{kl}]$ such that $\gamma_{\vec G}=\sum_{1\leq k,l\leq p}g_{kl}\phi_k\otimes\phi_l\in\mathcal{C}$. For any $p\times p$ matrix $\vec C$, define 
\[
Q(\matvec C)=\frac{1}{nm(m-1)}\sum_{i=1}^{n}\sum_{1\leq j\neq k\leq m}\{\Gamma_{ijk}-\gamma_{\matvec C}(T_{ij},T_{ik})\}^{2}+\lambda J(\gamma_{\matvec C}).
\]
Then we observe that 
\begin{align}
 & Q(\tilde{\matvec C}+\varrho_{n}\matvec D)-Q(\tilde{\matvec C})\nonumber \\
= & -2\varrho_{n}\frac{1}{nm(m-1)}\sum_{i=1}^{n}\sum_{1\leq j\neq k\leq m}\{\Gamma_{ijk}-\gamma_{\tilde{\vec C}}(T_{ij},T_{ik})\}\boldsymbol{\Phi}_p^{\tp}(T_{ij})\vec D\boldsymbol{\Phi}_{p}(T_{ik})\nonumber \\
 & +\varrho_{n}^{2}\frac{1}{nm(m-1)}\sum_{i=1}^{n}\sum_{1\leq j\neq k\leq m}\{\boldsymbol{\Phi}_{p}^{\tp}(T_{ij})\vec D\boldsymbol{\Phi}_{p}(T_{ik})\}^{2}\nonumber \\
 & +2\lambda\varrho_{n}\{\tr(\vec{DW}\tilde{\vec C}\vec U)+\tr(\vec D\vec V\tilde{\vec C}\vec V)\}\nonumber \\
 & +\lambda\varrho_{n}^{2}\{\tr(\vec{DU}\vec D\vec W)+\tr(\vec D\vec V\vec D\vec V)\}\nonumber \\
\equiv & -2\varrho_{n}\mathrm{I}+\varrho_{n}^{2}\mathrm{II}+2\lambda\varrho_{n}\mathrm{III}+\lambda\varrho_{n}^{2}\mathrm{IV}.\label{eq:cov-component}
\end{align}
Note that  $\tr(\vec A)\leq \sqrt p \|\vec A\|_{\fronorm}$ for any $p\times p$ matrix $\vec A$. With the the assumption that $\|\phi_k^{(r)}\|_{\ltwo}=O(k^{\beta r})$ for $r=1,2$, one can then check that $|\tr(\vec{DW}\tilde{\vec C}\vec U)|=O(p^{4\beta+5/2})\|\vec D\|_{\fronorm}$,
$|\tr(\vec D\vec V\tilde{\vec C}\vec V)|=O(p^{4\beta+5/2})\|\vec D\|_{\fronorm}$,
$|\tr(\vec{DU}\vec D\vec W)|=O(p^{4\beta+5/2})\|\vec D\|_{\fronorm}^{2}$
and $|\tr(\vec D\vec V\vec D\vec V)|=O(p^{4\beta+5/2})\|\vec D\|_{\fronorm}^{2}$
by using the fact that $\|\vec U\|_{\fronorm}^{2}=O(p^2)$, $\|\vec W\|_{\fronorm}^{2}=O(p^{8\beta+2})$, $\|\vec V\|_{\fronorm}^{2}=O(p^{4\beta+2})$ and $\|\tilde{\vec C}\|_{\fronorm}\leq \|\gamma_{0}\|_{L^2}+\DD=O(1)$. These give the orders of the terms $\mathrm{III}$ and $\mathrm{IV}$, which are $O(p^{4\beta+5/2})\|\vec D\|_{\fronorm}$ and $O(p^{4\beta+5/2})\|\vec D\|_{\fronorm}^2$, respectively. 

According to Claim \ref{claim:cov-I}
and \ref{claim:cov-II}, now we have that, for any $\epsilon>0$,
there exists $N_{\epsilon}>0$, $\theta_\epsilon>0$, and $\Omega_{n,\epsilon}\subset\Omega$,
such that for all $n\geq N_{\epsilon}$, 
\begin{itemize}
\item $\prob(\Omega_{n,\epsilon})\geq1-\epsilon$, and 
\item for all $\omega\in\Omega_{n,\epsilon}$, $|\mathrm{I}|=|\mathrm{I}(\omega)|\leq\absconst_{\epsilon}(p^{2\alpha+1}n^{-1/2}+\kappa_p+\DD)\|\vec D\|_{\fronorm}$,
and 
\item for all $\omega\in\Omega_{n,\epsilon}$, $|\mathrm{II}|=|\mathrm{II}(\omega)|\geq\absconst_{1}\|\vec D\|_{\fronorm}^{2}$, where the constant $\theta_1>0$ depends only on $f_R$, $f_{T\mid R}$, $\gamma_0$ and $\mathcal{C}$.
\end{itemize}
With the choice of $\varrho_{n}$ and $\lambda$, we can see that
$Q(\tilde{\vec C}+\varrho_{n}\vec D)-Q(\tilde{\vec C})>0$ on $\Omega_{n,\epsilon}$
for all $\vec D\in\mathcal{B}_{p}(\theta)$ for all  sufficiently large
and fixed constant $\absconst>0$. Therefore, with probability tending
to one, the minimizer $\hat{\vec C}$ of $Q$ falls into  $\{\vec G\in\mathfrak{C}(p):\|\vec G-\tilde{\vec C}\|_{\fronorm}\leq\varrho_{n}\absconst\}$.
Therefore, 
\[
\|\hat{\gamma}-\gamma_{0}\|_{\hsnorm}\leq\|\hat{\gamma}-\boldsymbol{\Phi}_{p}^{\tp}\tilde{\matvec C}\boldsymbol{\Phi}_{p}\|_{\ltwo}+\|\boldsymbol{\Phi}_{p}^{\tp}\tilde{\matvec C}\boldsymbol{\Phi}_{p}-\tilde\gamma\|_{\hsnorm}+\|\tilde{\gamma}-\gamma_{0}\|_{\hsnorm}=\Op(\varrho_{n}+\kappa_{p}+\DD),
\]
and the proof of the theorem is completed.
\end{proof}

\begin{claim}
\label{claim:cov-I}For any $\epsilon>0$, there exists $N_{\epsilon}>0$,
$\theta_{\epsilon}>0$, and $\Omega_{n,\epsilon}\subset\Omega$, such
that for all $n\geq N_{\epsilon}$, 
\end{claim}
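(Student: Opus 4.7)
The plan is to mimic the proof of Claim \ref{claim:mean-I} at the matrix level. I would decompose $\mathrm{I}=\mathrm{I}_1+\mathrm{I}_2+\mathrm{I}_3$ by writing the bracket $\Gamma_{ijk}-\gamma_{\tilde{\vec C}}(T_{ij},T_{ik})$ as $[\Gamma_{ijk}-\gamma_0(T_{ij},T_{ik})]+[\gamma_0-\tilde\gamma](T_{ij},T_{ik})+[\tilde\gamma-\gamma_{\tilde{\vec C}}](T_{ij},T_{ik})$, so that $\mathrm{I}_1$ carries the mean-zero empirical noise, $\mathrm{I}_2$ the misspecification bias of size $\DD$, and $\mathrm{I}_3$ the projection bias of size $\kappa_p+\DD$. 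Each piece has the form $\tr(\vec D\vec S_s^{\tp})$ for a random $p\times p$ matrix $\vec S_s$, so Cauchy--Schwarz in Frobenius norm reduces the task to bounding $\|\vec S_s\|_{\fronorm}$ for $s=1,2,3$.

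For $\vec S_1$, because $\mu_0$ is known and the noises $\varepsilon_{ij}$ are independent across $j\neq k$, each entry is centred. A direct variance computation using $m\geq 2$, conditions A.1 and B.2, and the $(\alpha,\beta)$-bound $\|\phi_k\|_\infty\leq Ck^\alpha$ yields $\expect\|\vec S_1\|_{\fronorm}^2=O(n^{-1}\sum_{k,l\leq p}k^{2\alpha}l^{2\alpha})=O(n^{-1}p^{4\alpha+2})$, hence $\|\vec S_1\|_{\fronorm}=\Op(n^{-1/2}p^{2\alpha+1})$ by Markov's inequality. For $s=2,3$ I split $\mathrm{I}_s$ into its expectation and a centred deviation. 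Cauchy--Schwarz against the bounded joint density of $(T_{11},T_{12})$ (from A.1), combined with the orthonormality identity $\|\gamma_{\vec D}\|_{\ltwo}=\|\vec D\|_{\fronorm}$, bounds the expectations by $C\DD\|\vec D\|_{\fronorm}$ and $C(\DD+\kappa_p)\|\vec D\|_{\fronorm}$ respectively; the second bound uses that $\gamma_{\tilde{\vec C}}$ is the $\ltwo$-projection of $\tilde\gamma$ onto the span of $\{\phi_k\otimes\phi_l:1\leq k,l\leq p\}$, so that $\|\tilde\gamma-\gamma_{\tilde{\vec C}}\|_{\ltwo}\leq\|\tilde\gamma-\gamma_0\|_{\ltwo}+\kappa_p\leq\DD+\kappa_p$. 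The centred deviations are handled by the same variance calculation as for $\vec S_1$, with the $\ltwo$ bounds on $\gamma_0-\tilde\gamma$ and $\tilde\gamma-\gamma_{\tilde{\vec C}}$ entering the variance; condition B.3 then makes these deviations $o_P(\DD)\|\vec D\|_{\fronorm}$ and $o_P(\DD+\kappa_p)\|\vec D\|_{\fronorm}$ respectively, and they are absorbed. Markov's inequality applied to $\|\vec S_1\|_{\fronorm}$, $\|\vec S_2\|_{\fronorm}$ and $\|\vec S_3\|_{\fronorm}$ then furnishes the event $\Omega_{n,\epsilon}$ with a deterministic $\theta_\epsilon$.

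The main technical obstacle will be the within-subject dependence in the double sum over $j\neq k$: all $T_{ij}$ for a given $i$ share the common reference time $R_i$, so the variance of each entry of $\vec S_s$ does not factor over the $(j,k)$ pairs. Since $m$ is uniformly bounded, however, expanding the squared entry produces only $O(m^4)$ cross terms per subject, each of which can be controlled using A.1 and B.2 (fourth moments of $X$ together with $\|\phi_k\|_\infty=O(k^\alpha)$); this only costs a multiplicative constant and leaves the $n^{-1}$ rate intact.
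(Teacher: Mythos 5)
Your proposal is correct and follows essentially the same route as the paper: the paper decomposes $\mathrm{I}$ into the mean-zero empirical term (bracket $\Gamma_{ijk}-\gamma_0$) plus a single bias term (bracket $\gamma_0-\gamma_{\tilde{\vec C}}$) whose expectation is bounded by $O(\kappa_p+\DD)\|\vec D\|_{\fronorm}$ via Cauchy--Schwarz and the bounded design density, and it likewise controls the stochastic parts through the Frobenius norm of the associated random linear functional, using the boundedness of $m$ to absorb the within-subject dependence. Your only departure is splitting that bias term further into a $\DD$-piece and a $(\kappa_p+\DD)$-piece, which changes nothing in substance.
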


\begin{itemize}
\item $\prob(\Omega_{n,\epsilon})\geq1-\epsilon$, and 
\item for all $\omega\in\Omega_{n,\epsilon}$, $|\mathrm{I}|=|\mathrm{I}(\omega)|\leq\absconst_{\epsilon}(p^{2\alpha+1}n^{-1/2}+\kappa_{p}+\DD)\|\vec D\|_{\fronorm}$,
where $\mathrm{I}$ is given in (\ref{eq:cov-component}).
\end{itemize}
\begin{proof}
We first observe that
\begin{align*}
\mathrm{I}= & \frac{1}{nm(m-1)}\sum_{i=1}^{n}\sum_{1\leq j\neq k\leq m}\{\Gamma_{ijk}-\gamma_{0}(T_{ij},T_{ik})\}\boldsymbol{\Phi}_{p}^{\tp}(T_{ij})\vec D\boldsymbol{\Phi}_{p}(T_{ik})\\
 & +\frac{1}{nm(m-1)}\sum_{i=1}^{n}\sum_{1\leq j\neq k\leq m}\{\gamma_{0}(T_{ij},T_{ik})-\gamma_{\tilde{\vec C}}(T_{ij},T_{ik})\}\boldsymbol{\Phi}_{p}^{\tp}(T_{ij})\vec D\boldsymbol{\Phi}_{p}(T_{ik})\\
 \equiv & \mathrm{I}_{1}+\mathrm{I}_{2}.
\end{align*}
Now we consider the first term $\mathrm{I}_{1}$. Let $\vec S=\frac{1}{nm(m-1)}\sum_{i=1}^{n}\sum_{1\leq j\neq k\leq m}[\{Y_{ij}-\mu_{0}(T_{ij})\}\{Y_{ik}-\mu_{0}(T_{ik})\}-\gamma_{0}(T_{ij},T_{ik})]\boldsymbol{\Phi}_p^\tp (T_{ij})\odot \boldsymbol{\Phi}_p^\tp (T_{ik})$, where $\odot$ denotes the Kronecker product of matrices.
This $\vec S$ is viewed as a random linear functional acting on $\overrightarrow{\vec D}\in\real^{p^2}$, where $\overrightarrow{\vec D}$ denotes 
the vectorization of $\vec D$ obtained by stacking the columns of $\vec D$ into a single column vector.  To quantify the order of its operator norm $\|\vec S\|$, we first observe that $\expect \vec S=0$. Noting the operator norm is bounded by the Frobenius norm, we  deduce that 
\begin{align*}
& \var(\|\vec S\|) \leq\expect\|\vec S\|_{\fronorm}^{2}\\
 & =\frac{1}{n}\sum_{1\leq l,h\leq p}\expect\left(\frac{1}{m(m-1)}\sum_{1\leq j\neq k\leq m}[\{Y_{ij}-\mu_{0}(T_{ij})\}\{Y_{ik}-\mu_{0}(T_{ik})\}-\gamma_{0}(T_{ij},T_{ik})]\bs_{l}(T_{ij})\bs_{h}(T_{ik})\right)^{2}\\
 & \leq\frac{1}{n}\frac{1}{m(m-1)}\sum_{1\leq l,h\leq p}\sum_{1\leq j\neq k\leq m}\expect\left([\{Y_{1j}-\mu_{0}(T_{1j})\}\{Y_{1k}-\mu_{0}(T_{1k})\}-\gamma_{0}(T_{1j},T_{1k})]\bs_{l}(T_{1j})\bs_{h}(T_{1k})\right)^{2}\\
 & =O(n^{-1})\sum_{1\leq l,h\leq p}l^{2\alpha}h^{2\alpha}\sum_{1\leq j\neq k\leq m}\expect\left([\{Y_{1j}-\mu_{0}(T_{1j})\}\{Y_{1k}-\mu_{0}(T_{1k})\}-\gamma_{0}(T_{1j},T_{1k})]\right)^{2}\\
 & =O\left(\frac{p^{4\alpha+2}}{n}\right)\expect[\{Y_{11}-\mu_{0}(T_{11})\}\{Y_{12}-\mu_{0}(T_{12})\}-\gamma_{0}(T_{11},T_{12})]^{2}\\
 & =O\left(\frac{p^{4\alpha+2}}{n}\right)(\expect[\{Y_{11}-\mu_{0}(T_{11})\}\{Y_{12}-\mu_{0}(T_{12})\}]^2+\expect[\gamma_{0}(T_{11},T_{12})]^{2})\\
 & =O(p^{4\alpha+2}n^{-1}),
\end{align*}
where the last equation is due to the fact that $\gamma_0$ is continuous and thus bounded on $[0,1]^2$, and the fact that, with the notation $Z_1=X_1-\mu_0$, 
\begin{align*}
& \expect[\{Y_{11}-\mu_{0}(T_{11})\}\{Y_{12}-\mu_{0}(T_{12})\}]^2\\
& = \expect[\{Z_{1}(T_{11})+\varepsilon_{11})\}\{Z_{1}(T_{12})+\varepsilon_{12}\}]^2\\
& \leq 4\expect\{Z_1^2(T_{11})+\varepsilon_{11}^2\}\{Z_1^2(T_{12})+\varepsilon_{12}^2\}\\
& =4\expect\{Z_1^2(T_{11})Z_1^2(T_{12})\}+O(\expect\|Z_1\|^2_{\ltwo})+O(1)\\
& =4\expect[\expect\{Z_1^2(T_{11})\mid O_1,Z_1\}\expect\{Z_1^2(T_{12})\mid O_1,Z_1\}]+O(\expect\|X\|^2_{\ltwo})+O(1)\\
& =O(\expect\{\|Z_1\|^2_{\ltwo} \|Z_1\|^2_{\ltwo}\})+O(\expect\|X\|^2_{\ltwo})+O(1)\\
& =O(\expect\{\|X\|^4_{\ltwo})+O(\expect\|X\|^2_{\ltwo})+O(1)\\
& =O(1),
\end{align*}
since $\expect\|X\|^{4}_{\ltwo}<\infty$ which also implies $\expect \|X\|^2_{\ltwo}<\infty$. Thus, 
\begin{align*}
\sup_{\vec D\neq 0}\frac{|\mathrm{I}_{1}|}{\|\vec D\|_{\fronorm}} = \sup_{\vec D\neq 0}\frac{|\vec S \overrightarrow{\vec D}|}{\|\vec D\|_{\fronorm}} & \leq\|\vec S\|_{\fronorm}=O(p^{2\alpha+1}n^{-1/2}).
\end{align*}
Therefore,
\begin{equation}
\mathrm{I}_{1}=\Op(p^{2\alpha+1}n^{-1/2})\|\vec D\|_{\fronorm},\label{eq:cov-pf-1}
\end{equation}
where the term $\Op(\cdot)$ is uniform over all $\vec D$.

Similar derivation combined with the strategy for establishing \eqref{eq:pf-mu-11} shows that \begin{equation}\label{eq:cov-pf-11}\mathrm{I}_{2}-\expect \mathrm{I}_2=\Op(p^{2\alpha+1}n^{-1/2})\|\vec D\|_{\fronorm}.\end{equation} In addition,
\begin{align*}
|\expect \mathrm{I}_2| &= |\expect[\{\gamma_{0}(T_{11},T_{12})-\gamma_{\tilde{\vec C}}(T_{11},T_{12})\}\boldsymbol{\Phi}_{p}^{\tp}(T_{11})\vec D\boldsymbol{\Phi}_{p}(T_{12})]|\\
& \leq [\expect\{\gamma_{0}(T_{11},T_{12})-\gamma_{\tilde{\vec C}}(T_{11},T_{12})\}^2]^{1/2}[\expect\{\boldsymbol{\Phi}_{p}^{\tp}(T_{11})\vec D\boldsymbol{\Phi}_{p}(T_{12})\}^2]^{1/2}\\
& \leq O(1)\|\gamma_{0}-\gamma_{\tilde{\vec C}}\|_{\fronorm}\cdot O(1)\|\vec D\|_{\fronorm}\\
& \leq O(\kappa_{p}+\DD)\|\vec D\|_{\fronorm}.
\end{align*}
The claim then follows from the above derivation, \eqref{eq:cov-pf-1}, and \eqref{eq:cov-pf-11}.
\end{proof}

\begin{claim}
\label{claim:cov-II}Suppose $p^{4\alpha+2}n^{-1/2}\rightarrow0$.
Then, for any $\epsilon>0$, there exists $N_{\epsilon}>0$ 
and $\Omega_{n,\epsilon}\subset\Omega$, such that for all $n\geq N_{\epsilon}$, 
\end{claim}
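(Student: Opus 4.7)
The strategy is to mimic the structure of Claim \ref{claim:mean-II}: decompose $\mathrm{II}$ into its expectation plus a deviation, lower-bound the expectation, and control the deviation via a variance argument. The additional subtlety, absent in the mean case, is that $\expect\mathrm{II}$ integrates $\gamma_{\vec D}^2$ only over the diagonal band $\tdomain_\delta$, and it is precisely here that $\tdomain_\delta$-identifiability together with the BSC property of the class $\mathcal C$ must be invoked.

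First I would rewrite $\mathrm{II}$ as the double empirical average $\{nm(m-1)\}^{-1}\sum_i\sum_{j\neq k}\gamma_{\vec D}^2(T_{ij},T_{ik})$ and identify
\[
\expect\mathrm{II}=\int_{\tdomain_\delta}\gamma_{\vec D}^2(s,t)f_{TT}(s,t)\diffop s\diffop t,
\]
where $f_{TT}$ is the joint density of $(T_{11},T_{12})$, bounded below by a positive constant on $\tdomain_\delta$ under Assumption \ref{cond:timept}. Hence $\expect\mathrm{II}\geq c_0\|\gamma_{\vec D}\|_{\ltwo(\tdomain_\delta)}^2$ for a constant $c_0>0$ depending only on $f_R$ and $f_{T\mid R}$. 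For concentration, using $\|\gamma_{\vec D}\|_\infty\leq Cp^{2\alpha+1}\|\vec D\|_{\fronorm}$ (from Cauchy--Schwarz together with the $(\alpha,\beta)$-basis bound $\|\phi_k\|_\infty\leq Ck^\alpha$) and the moment assumption \ref{cond:Xmoment}, a routine computation gives $\var(\mathrm{II})=O(p^{4\alpha+2}/n)\|\vec D\|_{\fronorm}^4$, which is $o(\|\vec D\|_{\fronorm}^4)$ under the hypothesis. Chebyshev's inequality then yields $\mathrm{II}=\expect\mathrm{II}+o_P(\|\vec D\|_{\fronorm}^2)$, uniformly over $\vec D\in\mathcal B_p(\theta)$.

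The main obstacle is the purely deterministic norm-equivalence: for some constant $\theta_1'>0$ depending only on $\mathcal C$, $\tilde\gamma$, $f_R$, and $f_{T\mid R}$, and for every sufficiently large $p$ and every $\vec D$ satisfying $\tilde{\vec C}+\vec D\in\mathfrak C(p)$ with $\|\vec D\|_{\fronorm}=\theta$,
\[
\|\gamma_{\vec D}\|_{\ltwo(\tdomain_\delta)}^2\geq \theta_1'\|\vec D\|_{\fronorm}^2.
\]
I would argue by contradiction: if this fails, there exist $p_n\to\infty$ and $\vec D_n$ in the feasible set with $\|\vec D_n\|_{\fronorm}=\theta$ and $\|\gamma_{\vec D_n}\|_{\ltwo(\tdomain_\delta)}\to 0$. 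The sequence $\gamma_n:=\gamma_{\tilde{\vec C}+\vec D_n}\in\mathcal C$ (with $\tilde{\vec C}$ depending on $p_n$) is $L^2$-bounded, so by BSC it has a subsequence converging in the product topology to some $\gamma^\ast\in\mathcal C$. Since $\gamma_{\tilde{\vec C}}\to\tilde\gamma$ in $L^2$ as $p_n\to\infty$, the hypothesis forces $\gamma^\ast=\tilde\gamma$ almost everywhere on $\tdomain_\delta$, and then $\tdomain_\delta$-identifiability of $\mathcal C$ upgrades this to $\gamma^\ast=\tilde\gamma$ throughout $\tdomain^2$. Consequently $\gamma_{\vec D_n}\to 0$ pointwise on all of $\tdomain^2$.

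The final step of the contradiction is to promote pointwise convergence to $L^2(\tdomain^2)$-convergence, which, together with the orthonormality-induced identity $\|\vec D_n\|_{\fronorm}=\|\gamma_{\vec D_n}\|_{\ltwo(\tdomain^2)}$, would force $\|\vec D_n\|_{\fronorm}\to 0$ and contradict $\|\vec D_n\|_{\fronorm}=\theta>0$. I expect this upgrade to be the most delicate part of the argument, and it is precisely where one exploits the uniform $L^\infty$-boundedness available within BSC families (cf.\ Proposition \ref{prop:BSC}\ref{prop:BSC-iii}) to invoke dominated convergence. Once the deterministic lower bound is secured, combining it with $\expect\mathrm{II}\geq c_0\|\gamma_{\vec D}\|_{\ltwo(\tdomain_\delta)}^2$ and the concentration estimate delivers $\mathrm{II}\geq \theta_1\|\vec D\|_{\fronorm}^2$ with $\theta_1:=c_0\theta_1'/2$ on an event of probability at least $1-\epsilon$, which is the desired conclusion.
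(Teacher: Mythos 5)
Your core argument --- a contradiction using the BSC property to extract a pointwise-convergent subsequence, $\tdomain_\delta$-identifiability to propagate the limit from the diagonal band to all of $\tdomain^2$, and a domination argument to upgrade pointwise convergence to $L^2$ and contradict $\|\vec D_n\|_{\fronorm}=\theta$ --- is exactly the paper's strategy (the paper uses reverse Fatou where you use dominated convergence, which amounts to the same appeal to uniform boundedness). However, two steps as you state them do not hold. First, the joint density $f_{TT}$ of $(T_{11},T_{12})$ is \emph{not} bounded below by a positive constant on $\tdomain_\delta$: for $|s-t|$ close to $\delta$ the set of reference times $u$ compatible with both $s$ and $t$ has length $\delta-|s-t|\to 0$, so $f_{TT}(s,t)\lesssim \delta-|s-t|$ vanishes at the edge of the band. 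Hence the inequality $\expect\mathrm{II}\geq c_0\|\gamma_{\vec D}\|_{\ltwo(\tdomain_\delta)}^2$ with a uniform $c_0>0$ is false. This is repairable --- in the contradiction argument you only need that $\expect\{\gamma_{\vec D_n}(T_{11},T_{12})\}^2\to 0$ forces $\gamma_{\vec D_n}\to 0$ a.e.\ on $\tdomain_\delta$ along a subsequence, which holds because $f_{TT}>0$ Lebesgue-a.e.\ on the open band --- but the norm-equivalence route you chose should be replaced by working with the weighted quantity $\expect\{\gamma_{\vec D}^2(T_{11},T_{12})\}$ directly, as the paper does.

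Second, your concentration step is only pointwise in $\vec D$. Chebyshev applied to $\var(\mathrm{II})$ for each fixed $\vec D$ gives an exceptional event that depends on $\vec D$, whereas the claim (as it is used in the proof of Theorem \ref{thm:3}) requires a single event $\Omega_{n,\epsilon}$ on which $\mathrm{II}\geq\theta_1\|\vec D\|_{\fronorm}^2$ holds \emph{simultaneously} for all $\vec D\in\mathcal B_p(\theta)$; asserting uniformity is not the same as proving it. The paper obtains uniformity for free by writing $\mathrm{II}=\overrightarrow{\vec D}^{\tp}(\expect\vec R)\overrightarrow{\vec D}+\overrightarrow{\vec D}^{\tp}\Delta\overrightarrow{\vec D}$ with $\vec R$ the empirical Gram matrix of the tensor basis evaluated at the design points, and bounding the single random variable $\|\Delta\|\leq\|\Delta\|_{\fronorm}=\Op(p^{4\alpha+2}n^{-1/2})$, which dominates $\sup_{\vec D\neq 0}|\overrightarrow{\vec D}^{\tp}\Delta\overrightarrow{\vec D}|/\|\vec D\|_{\fronorm}^2$. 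You should replace the per-$\vec D$ Chebyshev bound with this operator-norm bound (or an equivalent uniform deviation bound); otherwise the quantifiers in the claim are not established.
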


\begin{itemize}
\item $\prob(\Omega_{n,\epsilon})\geq1-\epsilon$, and 
\item  for some constant $\absconst_{1}$ depending on $f_{R}$,
$f_{T\mid R}$, $\gamma_{0}$ and $\mathcal{C}$,  for all $\omega\in\Omega_{n,\epsilon}$, $|\mathrm{II}|=|\mathrm{II}(\omega)|\geq\absconst_{1}\|\vec D\|_{\fronorm}^{2}$
for some constant $\absconst_{1}>0$, where $\mathrm{II}$ is given
in (\ref{eq:cov-component}).
\end{itemize}
\begin{proof}
We first establish that $\expect\{\boldsymbol{\Phi}_{p}^{\tp}(T_{11})\vec D\boldsymbol{\Phi}_{p}(T_{12})\}^{2}\geq\absconst_{1}\|\vec D\|_{\fronorm}^{2}$
for some constant $\absconst_{1}$ independent of $\vec D$. Suppose
that this is false. Then there is a sequence
$\xi_{r}\rightarrow0$ and a sequence $\vec D_r\in\mathcal{B}_p(\absconst)$, such that  
\[
\expect\{\boldsymbol{\Phi}_{p}^{\tp}(T_{11})\vec D_{r}\boldsymbol{\Phi}_{p}(T_{12})\}^{2}\leq\xi_{r}\|\vec D_{r}\|_{\fronorm}^{2}=\xi_{r}\absconst^{2}\rightarrow0,
\]
where $\mathcal{B}_p(\theta)$ is defined in (\ref{eq:B-ball}). By Fatou's
lemma, we have
\[
\expect\underset{r\rightarrow\infty}{\underline{\lim}}\{\boldsymbol{\Phi}_{p}^{\tp}(T_{11})\vec D_{r}\boldsymbol{\Phi}_{p}(T_{12})\}^{2}\leq\underset{r\rightarrow\infty}{\underline{\lim}}\expect\{\boldsymbol{\Phi}_{p}^{\tp}(T_{11})\vec D_{r}\boldsymbol{\Phi}_{p}(T_{12})\}^{2}\rightarrow0.
\]
This implies that there exists a subsequence $\{r_{\ell}\}_{\ell=1}^{\infty}$
such that 
\[
\lim_{\ell\rightarrow\infty}\{\boldsymbol{\Phi}_{p}^{\tp}(s)\vec D_{r_{\ell}}\boldsymbol{\Phi}_{p}(t)\}=0\qquad a.e.\,\,\text{on}\,\,\tdomain_{\delta},
\]
or equivalently, 
\begin{equation}\label{eq:pf-T-delta-id}
\lim_{\ell\rightarrow\infty}\{\boldsymbol{\Phi}_{p}^{\tp}(s)(\tilde{\vec C} +\vec D_{r_{\ell}})\boldsymbol{\Phi}_{p}(t)\}=\boldsymbol{\Phi}_{p}^{\tp}(s)\tilde{\vec C}\boldsymbol{\Phi}_{p}(t)=\gamma_{\tilde{\vec C}}(s,t)\qquad a.e.\,\,\text{on}\,\,\tdomain_{\delta},
\end{equation}
where $\tdomain_{\delta}=\{(s,t):s,t\in[0,1],|s-t|<\delta\}$. Furthermore,
the uniform boundedness of the sequence $\boldsymbol{\Phi}_{p}^{\tp}(s)(\tilde{\vec C} +\vec D_{r_{\ell}})\boldsymbol{\Phi}_{p}(t)$
implies that there exists a further subsequence $r_{\ell_{h}}$ such
that $\boldsymbol{\Phi}_{p}^{\tp}(s)(\tilde{\vec C} +\vec D_{r_{\ell_{h}}})\boldsymbol{\Phi}_{p}(t)$ converges
pointwisely to some $\psi\in\mathcal{C}$, since $\boldsymbol{\Phi}_{p}^{\tp}(\tilde{\vec C}+\vec D_{r_{\ell_{h}}})\boldsymbol{\Phi}_{p}\in\mathcal{C}$ due to $\vec D_{r_{\ell_h}}\in\mathcal{B}_p(\theta)$ 
and we recall that $\mathcal{C}$ is a BSC family. This $\psi$ ought to be $\psi(s,t)=\gamma_{\tilde{\vec C}}(s,t)$ for all $s,t\in[0,1]^{2}$, since $\mathcal{C}$ is a $\tdomain_\delta$-identifiable family and $\psi$  agrees with $\gamma_{\tilde{\vec C}}$ on $\tdomain_\delta$ according to \eqref{eq:pf-T-delta-id}. Now 
Fatou's lemma suggests that 
\begin{align*}
\absconst^{2} & =\|\vec D_{r_{\ell_{h}}}\|_{\fronorm}^{2}=\underset{h\rightarrow\infty}{\overline{\lim}}\int_{0}^{1}\int_{0}^{1}\{\boldsymbol{\Phi}_{p}^{\tp}(s)\vec D_{r_{\ell_{h}}}\boldsymbol{\Phi}_{p}(t)\}^{2}\diffop s\diffop t\\
 & \leq\int_{0}^{1}\int_{0}^{1}\underset{h\rightarrow\infty}{\overline{\lim}}\{\boldsymbol{\Phi}_{p}^{\tp}(s)\vec D_{r_{\ell_{h}}}\boldsymbol{\Phi}_{p}(t)\}^{2}\diffop s\diffop t\\
 & = \int_{0}^{1}\int_{0}^{1}\underset{h\rightarrow\infty}{\overline{\lim}}\{\boldsymbol{\Phi}_{p}^{\tp}(s)(\tilde{\vec C}+\vec D_{r_{\ell_{h}}})\boldsymbol{\Phi}_{p}(t)-\boldsymbol{\Phi}_{p}^{\tp}(s)\tilde{\vec C}\boldsymbol{\Phi}_{p}(t)\}^{2}\diffop s\diffop t\\
 & = \int_0^1\int_0^1\{\psi(s,t)-\gamma_{\tilde{\vec C}}(s,t)\}^2\diffop s\diffop t\\
 & =0,
\end{align*}
which contradicts with $\absconst>0$.

Now we write $\vec r$ for the column vector of $\{\bs_{k}\otimes\phi_{l}:1\leq k,l\leq p\}$
and define $\mathbf{R}=n^{-1}m^{-1}(m-1)^{-1}\sum_{i=1}^{n}\sum_{1\leq j\neq k\leq m}\vec r(T_{ij},T_{ik})\vec r^{\tp}(T_{ij},T_{ik})$. This $\vec R$ is viewed as a linear operator on $(\real^{p^2},\|\cdot\|_2)$. The above result shows that  $\expect(\overrightarrow{\vec D}^{\tp}\vec R\overrightarrow{\vec D})\geq \theta_1\|\vec D\|_{\fronorm}^2$, 
where $\overrightarrow{\vec D}$ denotes the vectorization of $\vec D$. More precisely,  
\begin{equation}\label{eq:pf-claim-4-1}
\inf_{\vec D\in\mathcal{B}_p(\theta)}\frac{\overrightarrow{\vec D}^{\tp}(\expect\vec R)\overrightarrow{\vec D}}{\|\vec D\|_{\fronorm}^{2}}\geq\theta_{1}.
\end{equation}

Let $\Delta=\vec R-\expect\vec R$. We observe that 
\begin{equation}\label{eq:pf-claim-4-2}
\sup_{\vec D\in\mathcal{B}_p(\theta)}\frac{\overrightarrow{\vec D}^{\tp}\Delta\overrightarrow{\vec D}}{\|\vec D\|_{\fronorm}^{2}}\leq\sup_{\vec D}\frac{\overrightarrow{\vec D}^{\tp}\Delta\overrightarrow{\vec D}}{\|\vec D\|_{\fronorm}^{2}}\leq\|\Delta\|\leq\|\Delta\|_{\fronorm}=\Op(p^{4\alpha+2}n^{-1/2}),
\end{equation}
since
\begin{align*}
\expect\|\Delta\|_{\fronorm}^{2} & \leq\sum_{1\leq l,h,r,s\leq p}\expect\left(\frac{1}{nm(m-1)}\sum_{i=1}^{n}\sum_{1\leq j\neq k\leq m}\bs_{l}(T_{ij})\bs_{h}(T_{ik})\bs_{r}(T_{ij})\bs_{s}(T_{ik})-R_{lhrs}\right)^{2}\\
 & \leq\frac{1}{n}\sum_{1\leq l,h,r,s\leq p}\expect\left(\frac{1}{m(m-1)}\sum_{1\leq j\neq k\leq m}\bs_{l}(T_{1j})\bs_{h}(T_{1k})\bs_{r}(T_{1j})\bs_{s}(T_{1k})-R_{lhrs}\right)^{2}\\
 & =O(p^{8\alpha+4}/n),
\end{align*}
where $R_{lhrs}=\expect\{\bs_{l}(T_{ij})\bs_{h}(T_{ik})\bs_{r}(T_{ij})\bs_{s}(T_{ik})\}=O(p^{4\alpha})$
uniform over all $l$, $h$, $r$ and $s$. Now the conclusion of
the claim follows from \eqref{eq:pf-claim-4-1}, \eqref{eq:pf-claim-4-2}, and the observation $\mathrm{I}_2=\overrightarrow{\vec D}^{\tp}\vec R\overrightarrow{\vec D}=\overrightarrow{\vec D}^{\tp}(\expect\vec R)\overrightarrow{\vec D}+\overrightarrow{\vec D}^{\tp}\Delta\overrightarrow{\vec D}$.
\end{proof}

\bibliographystyle{lin}
\bibliography{snippet}

\begin{thebibliography}{62}
\providecommand{\natexlab}[1]{#1}

\bibitem[{Absil \emph{et~al.}(2008)Absil, Mahony and Sepulchre}]{AbsMahSep2008}
\textsc{Absil, P.-A.}, \textsc{Mahony, R.} and \textsc{Sepulchre, R.} (2008).
\newblock \emph{Optimization Algorithms on Matrix Manifolds}.
\newblock Princeton University Press, Princeton, NJ.

\bibitem[{Adcock \emph{et~al.}(2014)Adcock, Huybrechs and
  Mart\'in-Vaquero}]{Adcock2014}
\textsc{Adcock, B.}, \textsc{Huybrechs, D.} and \textsc{Mart\'in-Vaquero, J.}
  (2014).
\newblock On the numerical stability of {F}ourier extensions.
\newblock \emph{Foundations of Computational Mathematics} \textbf{14} 635--687.

\bibitem[{Aneiros \emph{et~al.}(2019)Aneiros, Cao, Fraiman, Genest and
  Vieu}]{Aneiros:2019aa}
\textsc{Aneiros, G.}, \textsc{Cao, R.}, \textsc{Fraiman, R.}, \textsc{Genest,
  C.} and \textsc{Vieu, P.} (2019).
\newblock Recent advances in functional data analysis and high-dimensional
  statistics.
\newblock \emph{Journal of Multivariate Analysis} \textbf{170} 3--9.

\bibitem[{Bachrach \emph{et~al.}(1999)Bachrach, Hastie, Wang, Narasimhan and
  Marcus}]{BACHRACH1999}
\textsc{Bachrach, L.~K.}, \textsc{Hastie, T.}, \textsc{Wang, M.-C.},
  \textsc{Narasimhan, B.} and \textsc{Marcus, R.} (1999).
\newblock Bone mineral acquisition in healthy asian, hispanic, black, and
  caucasian youth: A longitudinal study.
\newblock \emph{The Journal of Clinical Endocrinology \& Metabolism}
  \textbf{84} 4702--4712.

\bibitem[{Boyd(2002)}]{Boyd2002}
\textsc{Boyd, J.~P.} (2002).
\newblock A comparison of numerical algorithms for {F}ourier extension of the
  first, second, and third kinds.
\newblock \emph{Journal of Computational Physics} \textbf{178} 118--160.

\bibitem[{Cai and Yuan(2010)}]{Cai2010nonparametric}
\textsc{Cai, T.} and \textsc{Yuan, M.} (2010).
\newblock Nonparametric covariance function estimation for functional and
  longitudinal data.
\newblock \emph{Technical report}, University of Pennsylvania.

\bibitem[{Cai and Yuan(2011)}]{Cai2011}
\textsc{Cai, T.} and \textsc{Yuan, M.} (2011).
\newblock Optimal estimation of the mean function based on discretely sampled
  functional data: Phase transition.
\newblock \emph{The Annals of Statistics} \textbf{39} 2330--2355.

\bibitem[{Canuto \emph{et~al.}(2006)Canuto, Hussaini, Quarteroni and
  Zang}]{Canuto2006}
\textsc{Canuto, C.}, \textsc{Hussaini, M.~Y.}, \textsc{Quarteroni, A.} and
  \textsc{Zang, T.~A.} (2006).
\newblock \emph{Spectral Methods: Fundamentals in Single Domains}.
\newblock Springer-Verlag, Berlin.

\bibitem[{Cardot(2000)}]{Cardot2000nonparametric}
\textsc{Cardot, H.} (2000).
\newblock Nonparametric estimation of smoothed principal components analysis of
  sampled noisy functions.
\newblock \emph{Journal of Nonparametric Statistics} \textbf{12} 503--538.

\bibitem[{Cardot \emph{et~al.}(1999)Cardot, Ferraty and Sarda}]{Cardot1999}
\textsc{Cardot, H.}, \textsc{Ferraty, F.} and \textsc{Sarda, P.} (1999).
\newblock Functional linear model.
\newblock \emph{Statistics \& Probability Letters} \textbf{45} 11--22.

\bibitem[{Chen \emph{et~al.}(2020)Chen, Carroll, Dai, Fan, Hadjipantelis, Han,
  Ji, M\"uller and Wang}]{fdapace}
\textsc{Chen, Y.}, \textsc{Carroll, C.}, \textsc{Dai, X.}, \textsc{Fan, J.},
  \textsc{Hadjipantelis, P.~Z.}, \textsc{Han, K.}, \textsc{Ji, H.},
  \textsc{M\"uller, H.-G.} and \textsc{Wang, J.-L.} (2020).
\newblock \emph{fdapace: Functional Data Analysis and Empirical Dynamics}.
\newblock R package version 0.5.2, available at
  \url{https://CRAN.R-project.org/package=fdapace}.

\bibitem[{Crambes \emph{et~al.}(2009)Crambes, Kneip and Sarda}]{Crambes2009}
\textsc{Crambes, C.}, \textsc{Kneip, A.} and \textsc{Sarda, P.} (2009).
\newblock Smoothing splines estimators for functional linear regression.
\newblock \emph{The Annals of Statistics} \textbf{37} 35--72.

\bibitem[{Dauxois \emph{et~al.}(1982)Dauxois, Pousse and Romain}]{Dauxois1982}
\textsc{Dauxois, J.}, \textsc{Pousse, A.} and \textsc{Romain, Y.} (1982).
\newblock Asymptotic theory for the principal component analysis of a vector
  random function: some applications to statistical inference.
\newblock \emph{Journal of Multivariate Analysis} \textbf{12} 136--154.

\bibitem[{Dawson and M{\"u}ller(2018)}]{Dawson2018}
\textsc{Dawson, M.} and \textsc{M{\"u}ller, H.-G.} (2018).
\newblock Dynamic modeling of conditional quantile trajectories, with
  application to longitudinal snippet data.
\newblock \emph{Journal of the American Statistical Association} \textbf{113}
  1612--1624.

\bibitem[{Delaigle and Hall(2016)}]{Delaigle2016}
\textsc{Delaigle, A.} and \textsc{Hall, P.} (2016).
\newblock Approximating fragmented functional data by segments of {M}arkov
  chains.
\newblock \emph{Biometrika} \textbf{103} 779--799.

\bibitem[{Delaigle \emph{et~al.}(2020)Delaigle, Hall, Huang and
  Kneip}]{Delaigle2019}
\textsc{Delaigle, A.}, \textsc{Hall, P.}, \textsc{Huang, W.} and \textsc{Kneip,
  A.} (2020).
\newblock Estimating the covariance of fragmented and other related types of
  functional data.
\newblock \emph{Journal of the American Statistical Association}  to appear.

\bibitem[{Descary and Panaretos(2019)}]{Descary2019recovering}
\textsc{Descary, M.-H.} and \textsc{Panaretos, V.~M.} (2019).
\newblock Recovering covariance from functional fragments.
\newblock \emph{Biometrika} \textbf{106} 145--160.

\bibitem[{Ferraty and Vieu(2006)}]{Ferraty2006}
\textsc{Ferraty, F.} and \textsc{Vieu, P.} (2006).
\newblock \emph{Nonparametric Functional Data Analysis: Theory and Practice}.
\newblock Springer-Verlag, New York.

\bibitem[{Gellar \emph{et~al.}(2014)Gellar, Colantuoni, Needham and
  Crainiceanu}]{Gellar:2014aa}
\textsc{Gellar, J.~E.}, \textsc{Colantuoni, E.}, \textsc{Needham, D.~M.} and
  \textsc{Crainiceanu, C.~M.} (2014).
\newblock Variable-domain functional regression for modeling icu data.
\newblock \emph{Journal of the American Statistical Association} \textbf{109}
  1425--1439.

\bibitem[{Goldberg \emph{et~al.}(2014)Goldberg, Ritov and
  Mandelbaum}]{Goldberg:2014aa}
\textsc{Goldberg, Y.}, \textsc{Ritov, Y.} and \textsc{Mandelbaum, A.} (2014).
\newblock Predicting the continuation of a function with applications to call
  center data.
\newblock \emph{Journal of Statistical Planning and Inference} \textbf{147}
  53--65.

\bibitem[{Gromenko \emph{et~al.}(2017)Gromenko, Kokoszka and
  Sojka}]{Gromenko:2017aa}
\textsc{Gromenko, O.}, \textsc{Kokoszka, P.} and \textsc{Sojka, J.} (2017).
\newblock Evaluation of the cooling trend in the ionosphere using functional
  regression with incomplete curves.
\newblock \emph{The Annals of Applied Statistics} \textbf{11} 898--918.

\bibitem[{Hall and Horowitz(2007)}]{Hall2007c}
\textsc{Hall, P.} and \textsc{Horowitz, J.~L.} (2007).
\newblock Methodology and convergence rates for functional linear regression.
\newblock \emph{The Annals of Statistics} \textbf{35} 70--91.

\bibitem[{Hall and Hosseini-Nasab(2009)}]{Hall2009}
\textsc{Hall, P.} and \textsc{Hosseini-Nasab, M.} (2009).
\newblock Theory for high-order bounds in functional principal components
  analysis.
\newblock \emph{Mathematical Proceedings of the Cambridge Philosophical
  Society} \textbf{146} 225--256.

\bibitem[{Horv{\'a}th and Kokoszka(2012)}]{Horvath2012}
\textsc{Horv{\'a}th, L.} and \textsc{Kokoszka, P.} (2012).
\newblock \emph{Inference for functional data with applications}.
\newblock Springer Series in Statistics. Springer.

\bibitem[{Hsing and Eubank(2015)}]{Hsing2015}
\textsc{Hsing, T.} and \textsc{Eubank, R.} (2015).
\newblock \emph{Theoretical Foundations of Functional Data Analysis, with an
  Introduction to Linear Operators}.
\newblock Wiley.

\bibitem[{Huybrechs(2010)}]{Huybrechs2010}
\textsc{Huybrechs, D.} (2010).
\newblock On the {F}ourier extension of non-periodic functions.
\newblock \emph{SIAM Journal on Numerical Analysis} \textbf{47} 4326--4355.

\bibitem[{James \emph{et~al.}(2000)James, Hastie and Sugar}]{James2000}
\textsc{James, G.~M.}, \textsc{Hastie, T.~J.} and \textsc{Sugar, C.~A.} (2000).
\newblock Principal component models for sparse functional data.
\newblock \emph{Biometrika} \textbf{87} 587--602.

\bibitem[{Kneip and Liebl(2020)}]{Kneip:2019}
\textsc{Kneip, A.} and \textsc{Liebl, D.} (2020).
\newblock On the optimal reconstruction of partially observed functional data.
\newblock \emph{The Annals of Statistics} \textbf{48} 1692--1717.

\bibitem[{Kokoszka and Reimherr(2017)}]{Kokoszka2017}
\textsc{Kokoszka, P.} and \textsc{Reimherr, M.} (2017).
\newblock \emph{Introduction to Functional Data Analysis}.
\newblock Chapman and Hall/CRC.

\bibitem[{Kong \emph{et~al.}(2016)Kong, Xue, Yao and Zhang}]{Kong2016}
\textsc{Kong, D.}, \textsc{Xue, K.}, \textsc{Yao, F.} and \textsc{Zhang, H.~H.}
  (2016).
\newblock Partially functional linear regression in high dimensions.
\newblock \emph{Biometrika} \textbf{103} 147--159.

\bibitem[{Krantz and Parks(2002)}]{krantz2002primer}
\textsc{Krantz, S.~G.} and \textsc{Parks, H.~R.} (2002).
\newblock \emph{A primer of real analytic functions}.
\newblock Springer.

\bibitem[{Kraus(2015)}]{Kraus2015}
\textsc{Kraus, D.} (2015).
\newblock Components and completion of partially observed functional data.
\newblock \emph{Journal of Royal Statistical Society: Series B (Statistical
  Methodology)} \textbf{77} 777--801.

\bibitem[{Kraus and Stefanucci(2019)}]{Kraus2019}
\textsc{Kraus, D.} and \textsc{Stefanucci, M.} (2019).
\newblock Classification of functional fragments by regularized linear
  classifiers with domain selection.
\newblock \emph{Biometrika} \textbf{106} 161--180.

\bibitem[{Lang(1995)}]{Lang1995}
\textsc{Lang, S.} (1995).
\newblock \emph{Differential and Riemannian Manifolds}.
\newblock Springer, New York.

\bibitem[{Li and Hsing(2010)}]{Li2010}
\textsc{Li, Y.} and \textsc{Hsing, T.} (2010).
\newblock Uniform convergence rates for nonparametric regression and principal
  component analysis in functional/longitudinal data.
\newblock \emph{The Annals of Statistics} \textbf{38} 3321--3351.

\bibitem[{Liebl(2013)}]{liebl2013modeling}
\textsc{Liebl, D.} (2013).
\newblock Modeling and forecasting electricity spot prices: A functional data
  perspective.
\newblock \emph{The Annals of Applied Statistics} \textbf{7} 1562--1592.

\bibitem[{Liebl and Rameseder(2019)}]{Liebl:2019aa}
\textsc{Liebl, D.} and \textsc{Rameseder, S.} (2019).
\newblock Partially observed functional data: The case of systematically
  missing parts.
\newblock \emph{Computational Statistics \& Data Analysis} \textbf{131}
  104--115.

\bibitem[{Lin(2019)}]{Lin2019+a}
\textsc{Lin, Z.} (2019).
\newblock Riemannian geometry of symmetric positive definite matrices via
  cholesky decomposition.
\newblock \emph{SIAM Journal on Matrix Analysis and Applications} \textbf{40}
  1353--1370.

\bibitem[{Lin and Wang(2020)}]{Lin2019}
\textsc{Lin, Z.} and \textsc{Wang, J.-L.} (2020).
\newblock Mean and covariance estimation for functional snippets.
\newblock \emph{Journal of the American Statistical Association}  to appear.

\bibitem[{Mas and Ruymgaart(2015)}]{Mas2015}
\textsc{Mas, A.} and \textsc{Ruymgaart, F.} (2015).
\newblock High-dimensional principal projections.
\newblock \emph{Complex Analysis and Operator Theory} \textbf{9} 35--63.

\bibitem[{Mojirsheibani and Shaw(2018)}]{Mojirsheibani2018}
\textsc{Mojirsheibani, M.} and \textsc{Shaw, C.} (2018).
\newblock Classification with incomplete functional covariates.
\newblock \emph{Statistics \& Probability Letters} \textbf{139} 40--46.

\bibitem[{M\"{u}ller and Stadtm\"{u}ller(2005)}]{Mueller2005}
\textsc{M\"{u}ller, H.-G.} and \textsc{Stadtm\"{u}ller, U.} (2005).
\newblock Generalized functional linear models.
\newblock \emph{The Annals of Statistics} \textbf{33} 774--805.

\bibitem[{M\"uller \emph{et~al.}(1997)M\"uller, Wang and Capra}]{Mueller1997}
\textsc{M\"uller, H.-G.}, \textsc{Wang, J.-L.} and \textsc{Capra, W.~B.}
  (1997).
\newblock From lifetables to hazard rates: The transformation approach.
\newblock \emph{Biometrika} \textbf{84} 881--892.

\bibitem[{M\"{u}ller and Yao(2008)}]{Mueller2008}
\textsc{M\"{u}ller, H.-G.} and \textsc{Yao, F.} (2008).
\newblock Functional additive models.
\newblock \emph{Journal of the American Statistical Association} \textbf{103}
  1534--1544.

\bibitem[{Pearson \emph{et~al.}(1997)Pearson, Morrell, Brant, Landis and
  Fleg}]{Pearson1997}
\textsc{Pearson, J.}, \textsc{Morrell, C.}, \textsc{Brant, L.}, \textsc{Landis,
  P.} and \textsc{Fleg, J.} (1997).
\newblock Age-associated changes in blood pressure in a longitudinal study of
  healthy men and women.
\newblock \emph{Journal of Gerontology: Medical Sciences} \textbf{52} 177--183.

\bibitem[{Ramsay and Silverman(2005)}]{Ramsay2005}
\textsc{Ramsay, J.~O.} and \textsc{Silverman, B.~W.} (2005).
\newblock \emph{Functional Data Analysis}.
\newblock Springer Series in Statistics, 2nd edition. Springer, New York.

\bibitem[{Rao(1958)}]{Rao1958}
\textsc{Rao, C.~R.} (1958).
\newblock Some statistical methods for comparison of growth curves.
\newblock \emph{Biometrics} \textbf{14} 1--17.

\bibitem[{Remmert(1997)}]{Remmert1997}
\textsc{Remmert, R.} (1997).
\newblock \emph{Classical Topics in Complex Function Theory}.
\newblock Springer.

\bibitem[{Rice and Silverman(1991)}]{Rice1991}
\textsc{Rice, J.~A.} and \textsc{Silverman, B.~W.} (1991).
\newblock Estimating the mean and covariance structure nonparametrically when
  the data are curves.
\newblock \emph{Journal of the Royal Statistical Society. Series B} \textbf{53}
  233--243.

\bibitem[{Rice and Wu(2001)}]{Rice2001}
\textsc{Rice, J.~A.} and \textsc{Wu, C.~O.} (2001).
\newblock Nonparametric mixed effects models for unequally sampled noisy
  curves.
\newblock \emph{Biometrics} \textbf{57} 253--259.

\bibitem[{Stefanucci \emph{et~al.}(2018)Stefanucci, Sangalli and
  Brutti}]{Stefanucci2018}
\textsc{Stefanucci, M.}, \textsc{Sangalli, L.~M.} and \textsc{Brutti, P.}
  (2018).
\newblock {PCA}-based discrimination of partially observed functional data,
  with an application to aneurisk65 data set.
\newblock \emph{Statistica Neerlandica} \textbf{72} 246--264.

\bibitem[{Wahba(1990)}]{Wahb1990}
\textsc{Wahba, G.} (1990).
\newblock \emph{Spline Models for Observational Data}.
\newblock Society for Industrial and Applied Mathematics, Philadelphia.

\bibitem[{Wang and Xiang(2011)}]{Wang2011a}
\textsc{Wang, H.} and \textsc{Xiang, S.} (2011).
\newblock On the convergence rates of {L}egendre approximation.
\newblock \emph{Mathematics of Computation} \textbf{81} 861--877.

\bibitem[{Wang \emph{et~al.}(2016)Wang, Chiou and M\"{u}ller}]{WangCM16}
\textsc{Wang, J.-L.}, \textsc{Chiou, J.-M.} and \textsc{M\"{u}ller, H.-G.}
  (2016).
\newblock Review of functional data analysis.
\newblock \emph{Annual Review of Statistics and Its Application} \textbf{3}
  257--295.

\bibitem[{Wang \emph{et~al.}(1998)Wang, M\"uller and Capra}]{Wang1998}
\textsc{Wang, J.-L.}, \textsc{M\"uller, H.-G.} and \textsc{Capra, W.~B.}
  (1998).
\newblock Analysis of oldest-old mortality: Lifetables revisited.
\newblock \emph{The Annals of Statistics} \textbf{26} 126--163.

\bibitem[{Wood(2003)}]{Wood2003}
\textsc{Wood, S.~N.} (2003).
\newblock Thin plate regression splines.
\newblock \emph{Journal of the Royal Statistical Society: Series B (Statistical
  Methodology)} \textbf{65} 95--114.

\bibitem[{Yao \emph{et~al.}(2005{\natexlab{a}})Yao, M\"uller and
  Wang}]{Yao2005a}
\textsc{Yao, F.}, \textsc{M\"uller, H.-G.} and \textsc{Wang, J.-L.}
  (2005{\natexlab{a}}).
\newblock Functional data analysis for sparse longitudinal data.
\newblock \emph{Journal of the American Statistical Association} \textbf{100}
  577--590.

\bibitem[{Yao \emph{et~al.}(2005{\natexlab{b}})Yao, M\"{u}ller and
  Wang}]{Yao2005b}
\textsc{Yao, F.}, \textsc{M\"{u}ller, H.-G.} and \textsc{Wang, J.-L.}
  (2005{\natexlab{b}}).
\newblock Functional linear regression analysis for longitudinal data.
\newblock \emph{The Annals of Statistics} \textbf{33} 2873--2903.

\bibitem[{Zhang and Chen(2018)}]{Zhang2018}
\textsc{Zhang, A.} and \textsc{Chen, K.} (2018).
\newblock Nonparametric covariance estimation for mixed longitudinal studies,
  with applications in midlife women's health.
\newblock \emph{arXiv} .

\bibitem[{Zhang and Wang(2016)}]{Zhang2016}
\textsc{Zhang, X.} and \textsc{Wang, J.-L.} (2016).
\newblock From sparse to dense functional data and beyond.
\newblock \emph{The Annals of Statistics} \textbf{44} 2281--2321.

\bibitem[{Zhang and Wang(2018)}]{Zhang2018a}
\textsc{Zhang, X.} and \textsc{Wang, J.-L.} (2018).
\newblock Optimal weighting schemes for longitudinal and functional data.
\newblock \emph{Statistics \& Probability Letters} \textbf{138} 165--170.

\bibitem[{Zygmund(2003)}]{Zygmund2003}
\textsc{Zygmund, A.} (2003).
\newblock \emph{Trigonometric Series}.
\newblock Cambridge University Press.

\end{thebibliography}

\end{document}